\documentclass[11pt]{amsart}
\usepackage{geometry}
\usepackage[colorlinks,pagebackref,hypertexnames=false]{hyperref}
\usepackage{microtype}
\usepackage{xcolor}
\usepackage{amsmath}
\usepackage{amstext}
\usepackage{amsfonts}
\usepackage{amssymb}
\usepackage{amsbsy}
\usepackage{latexsym}
\usepackage[T1]{fontenc}
\usepackage[all]{xy}
\usepackage{tikz-cd}
\usepackage{hhline}
\usepackage{enumerate}
\usepackage{enumitem}
\usepackage[nameinlink,noabbrev]{cleveref}
\usepackage{autonum}
\usepackage{dsfont}
\usepackage{booktabs} 
\usepackage{caption} 
\usepackage{thmtools}
\allowdisplaybreaks

\hypersetup{
	colorlinks=true,
	linkcolor=red,
	citecolor=blue,
	filecolor=blue,
	urlcolor=blue
}

{}              

\newcounter{num}[section] %

\newtheorem{Mtheorem}{Theorem}

\newenvironment{theo}
{\refstepcounter{num}%
	\bigskip\noindent{\bf Theorem~\arabic{section}.\arabic{num}.}\quad\itshape}
{\par\bigskip}

\newenvironment{prop}
{\refstepcounter{num}%
	\bigskip\noindent{\bf Proposition~\arabic{section}.\arabic{num}.}\quad\itshape}
{\par\bigskip}

\newenvironment{cor}
{\refstepcounter{num}%
	\bigskip\noindent{\bf Corollary~\arabic{section}.\arabic{num}.}\quad\itshape}
{\par\bigskip}

\newenvironment{lemma}
{\refstepcounter{num}%
	\bigskip\noindent{\bf Lemma~\arabic{section}.\arabic{num}.}\quad\itshape}
{\par\bigskip}

\newenvironment{example}
{\refstepcounter{num}%
	\bigskip\noindent{\bf Example~\arabic{section}.\arabic{num}.}\quad}
{\par\bigskip} 

\newenvironment{remark}
{\refstepcounter{num}%
	\bigskip\noindent{\bf Remark~\arabic{section}.\arabic{num}.}\quad}
{\par\bigskip}

\newenvironment{defin}
{\refstepcounter{num}%
	\bigskip\noindent{\bf Definition~\arabic{section}.\arabic{num}.}\quad}
{\par\bigskip}

\newenvironment{theo_with_name}[1]
{\refstepcounter{num}%
	\bigskip\noindent{\bf Theorem~\arabic{section}.\arabic{num}} (#1).\quad\itshape}
{\par\bigskip}

\newenvironment{lemma_with_name}[1]
{\refstepcounter{num}%
	\bigskip\noindent{\bf Lemma~\arabic{section}.\arabic{num}} (#1).\quad\itshape}
{\par\bigskip}

\renewcommand{\d}{\mathrm{d}}
\newcommand{\dL}{\d^{\Lambda}}

\DeclareMathOperator{\im}{Im}

\newcommand{\CC}{{\mathbb{C}}}   

\newcommand{\ZZ}{{\mathbb{Z}}}   
\newcommand{\QQ}{{\mathbb{Q}}}
\newcommand{\RR}{{\mathbb{R}}}

\newcommand\restr[2]{{
		\left.\kern-\nulldelimiterspace 
		#1 
		\littletaller 
		\right|_{#2} 
}}

\newcommand{\littletaller}{\mathchoice{\vphantom{\big|}}{}{}{}}

\newcommand{\mylabel}[1]{}

\begin{document}

	\title{The Cohomology of Solvmanifold SYZ Mirrors}

    \author[Cavenaghi]{Leonardo F. Cavenaghi}
\address{Institute of Mathematics and Informatics, Bulgarian Academy of Sciences}
\email{leonardofcavenaghi@gmail.com}

\author[Grama]{Lino Grama}
\address{Instituto de Matemática, Estatística e Computação Científica (IMECC) da Universidade Estadual de Campinas (Unicamp), Cidade Universitária, Campinas - SP, 13083-856.} 
\email{lino@ime.unicamp.br}

\author[Katzarkov]{Ludmil Katzarkov}
\address{College of Arts and Sciences, Department of Mathematics, University of Miami, \& Institute of the Mathematical Sciences of the Americas (IMSA). Ungar Bldg, 1365 Memorial Dr 515, Coral Gables, FL 33146. \&  International Center for Mathematical Sciences (ICMS), Sofia-Bulgaria. \& Institute of Mathematics and Informatics, Bulgarian Academy of Sciences} 
\email{lkatzarkov@gmail.com}

  \author[Martins]{Pedro Antonio Muniz Martins}
\address{Instituto de Matemática, Estatística e Computação Científica (IMECC) da Universidade Estadual de Campinas (Unicamp), Cidade Universitária, Campinas - SP, 13083-856.} 
\email{pedroa.muniz9@gmail.com}

\begin{abstract}
This paper investigates the geometric and cohomological properties of non-K\"ahler Strominger-Yau-Zaslow (herein non-K\"ahler SYZ) mirror symmetry for dual torus fibrations over solvmanifolds in the sense of Lau, Tseng and Yau. We are mainly concerned with three questions:

\textbf{(a)} How the Lau-Tseng-Yau notion of non-K\"ahler SYZ is related to the mapping of supersymmetric branes between symplectic and complex sides; \textbf{(b)} Finding explicit non-K\"ahler SYZ mirror pairs determined purely by Lie-theoretic data; \textbf{(c)} better understand the cohomological correspondence in the Lau-Tseng-Yau framework (given by a Fourier-Mukai transform), especially concerning the role of Tseng-Yau cohomology. 

We prove that the Fourier-Mukai transform introduced by Lau-Tseng-Yau exchanges type-A supersymmetric cycles, which are given by special Lagrangian sections equipped with flat $\mathrm{U}(1)$ connections, with type-B cycles, corresponding to line bundles whose connections satisfy the deformed Hermitian-Yang-Mills (dHYM) equation. 

We provide pure Lie-theoretic criteria for the existence of non-Kähler SYZ mirror pairs whose base manifolds are solvmanifolds. Applying these criteria, we construct new explicit families of mirror pairs from almost abelian and generalized Heisenberg Lie groups, and provide a complete classification of such pairs arising from nilpotent Lie groups. 

To contextualize the role of the Tseng-Yau cohomology, we link it to noncommutative geometry. We introduce the Tseng-Yau and Bott-Chern mirror bicomplexes. We show that (some of) their enclosed cohomologies reduce to the primitive Tseng-Yau and Bott-Chern cohomologies and that for basic forms they are isomorphic under the Fourier-Mukai transform.

As a last contribution, we discuss how to explicitly compute the Tseng-Yau and the Bott-Chern cohomology for the non-K\"ahler SYZ mirror pairs constructed here. Again, this is fully characterized by Lie-theoretic data.
\end{abstract}

\maketitle

\section{Introduction}

The Strominger-Yau-Zaslow (SYZ) conjecture \cite{Strominger_1996} proposes that mirror symmetry between two Calabi-Yau manifolds, $X$ and $\breve{X}$, can be understood geometrically through dual special Lagrangian torus fibrations, $\pi \colon X \to B$ and $\breve{\pi} \colon \breve{X} \to B$, over a common base $B$ \cite{Gross_2012, Gross_2008}. On the other hand, it is reasonable to expect \cite{ward2021homologicalmirrorsymmetryelliptic, Abouzaid2016, Abouzaid2013, Auroux_2007, Kapustin2009, Gross2017, Clarke2017, Cleyton2010} that the phenomenon of mirror symmetry is not restricted to Calabi-Yau or Fano manifolds. Indeed, in the semi-flat case (when singular fibers are absent), this duality is well-understood for certain classes of non-Kähler manifolds, including solvmanifolds and nilmanifolds \cite{Bedulli_2024, Grama_2023, Popovici2, Roychowdhury2024, Katzarkov2018, Minasian2016, hirolee, GARCIAFERNANDEZ20191059, Collins2023}.

\vspace{1em}

For dual torus fibrations of non-Kähler Calabi-Yau manifolds, Lau, Tseng, and Yau \cite{Lau_2015} construct a Fourier-Mukai-type transform acting on the sheaf of differential forms. They prove (Theorem \ref{mirror_Yau}) that this transform exchanges certain $\mathrm{SU}(n)$-structures on $X$ and $\breve{X}$, corresponding to what are termed \emph{type IIB} and \emph{type IIA} supersymmetric systems (see Section \ref{sec:SUN} for precise definitions). Furthermore, letting $H_{\d+\dL}(\breve X,\mathbb C)$ denote the Tseng-Yau cohomology of $\breve X$ and $H_{\partial+\overline\partial}(X,\mathbb C)$ denote the Bott-Chern cohomology of $X$, the Fourier-Mukai transform, when restricted to invariant (or basic) forms, induces an isomorphism (Theorem \ref{mirror_Yau_cohomologies}):
\begin{equation}
    H_{\text{bas},\d+\dL}^{n-p,q}(\breve X, \mathbb{C}) \cong H_{\text{bas},\partial+\overline\partial}^{p,q}(X, \mathbb{C}).
\end{equation}

In this context, Bedulli and Vannini \cite{Bedulli_2024} provide a construction of these non-K\"ahler supersymmetric mirror pairs by using left-invariant complete affine structures on solvable Lie groups. They prove that the T-dual pair arising from Arnold-Liouville Theorem \cite{Arnold_1989} (a construction which we recall very soon in this introduction) is a homogeneous space when the base is a solvmanifold $B = G/\Gamma$. Their proof also gives a complete characterization of them using the affine structure on $G$ (Theorem \ref{thm:dedulvan}).

\vspace{1em}

The present paper arises from the synthesis of three questions:
\begin{itemize}
    \item[\textbf{(a)}] To what extent can one establish Lie-theoretic criteria for the existence of supersymmetric mirror pairs for solvmanifolds?
    \item[\textbf{(b)}] What specific properties of Tseng-Yau cohomology enable it to play its central role in the framework of non-Kähler SYZ mirror symmetry?
    \item[\textbf{(c)}] How does the Lau-Tseng-Yau proposal fit the classical SYZ mapping of the A- and B-cycles between mirror partners?
\end{itemize}

To address these questions, we first establish the framework within which our results are developed (for more details, see Section \ref{sec:setup}).

\vspace{1em}

Let $B$ be a smooth manifold of dimension $n$ with an integral affine structure; that is, it carries an affine atlas whose transition functions take values in $\mathrm{GL}(n,\ZZ) \ltimes \RR^n$. The Arnold-Liouville Theorem \cite{Arnold_1989} says that this is equivalent to the existence of a Lagrangian torus fibration $\breve\pi:\breve{X} \to B$. Following the T-duality ansatz \cite{Strominger_1996,Leung2000,Bouwknegt_2004, Cavalcanti_2011}, one can build a dual torus bundle whose total space is:
\[
X := \left\{ (r,\nabla) \mid r \in B, \, \nabla \text{ is a flat } \mathrm{U}(1) \text{ connection over }\breve{\pi}^{-1}(r) \right\}.
\]
The bundle map $\pi \colon X \to B$ is obtained by forgetting the connection. The symplectic manifold $\breve X$ naturally admits a symplectic form $\breve\omega$, while the complex manifold $X$ is naturally equipped with a holomorphic volume form $\mathsf\Omega$. The components $(\breve{X}, \breve{\omega})$ and $(X, \mathsf{\Omega})$ form a \emph{T-dual pair}. When equipped with the complementary forms $\breve{\mathsf{\Omega}}$ and $\omega$ related to their endowed $\mathrm{SU}(n)$-structures, we say that the triples $(\breve{X}, \breve{\omega}, \breve{\mathsf{\Omega}})$ and $(X, \omega, \mathsf{\Omega})$ form a \emph{non-K\"ahler SYZ mirror pair} if the Fourier-Mukai transform introduced in \cite{Lau_2015} exchanges the type IIA supersymmetric $\mathrm{SU}(n)$-structure on $\breve X$ with the type IIB supersymmetric $\mathrm{SU}(n)$-structure on $X$.

\vspace{1em}

We answer Question \textbf{(a)} translating the characterization of type IIA and type IIB supersymmetric structures into our setup, where the base manifold $B=G/\Gamma$ is a solvmanifold. Here, $G$ is a simply connected Lie group with a complete left-invariant affine structure, and $\Gamma$ is a lattice. 

\vspace{1em}

Question \textbf{(c)} is answered more broadly by a straightforward, though highly non-obvious, adaptation of the arguments in \cite{Leung2000} to our setup. Although this adaptation was claimed to be possible in \cite{Lau_2015}, our contribution consists of precisely describing it.

\vspace{1em}

Question \textbf{(b)} is answered through a shift in focus. We begin with a question inspired by noncommutative geometry: is the Tseng-Yau cohomology isomorphic to the periodic cyclic homology of a dg-algebra? Although we do not fully resolve this, our partial answers shed light on the broader question.

\vspace{1em}

Next, we provide more details of these answers.

\vspace{1em}

\subsection{The answer to question (c)}

In \cite{Lau_2015}, it is claimed that the semi-flat A-branes in $\breve X$ are mirror to the semi-flat B-branes in $X$ under the Fourier-Mukai transform and that this should follow from the arguments in \cite{Leung2000}. However, this is not immediate, as in our broader setup, we lack a K\"ahler potential and a real Monge-Amp\`ere equation.

Despite these absences, we prove that if
\[
\xymatrix{
 (X, \omega, \mathsf{\Omega}) \ar[dr]_{\pi} & & (\breve{X}, \breve{\omega}, \breve{\mathsf{\Omega}}) \ar[dl]^{\breve{\pi}} \\
 & B &
}
\] 
forms a non-K\"ahler SYZ mirror pair, the Fourier-Mukai transform maps \emph{A-cycles} in $\breve{X}$ to \emph{B-cycles} in $X$. Roughly, A-cycles correspond to Lagrangian sections in the symplectic manifold $\breve{X}$, while B-cycles correspond to holomorphic line bundles on the complex mirror manifold $X$. Let us provide a better explanation.

As we detail in Section \ref{affine}, the $2$-form $\breve{\omega}$ is a natural symplectic form in $T^{\ast}B$. In local \emph{action-angle coordinates}, it can be written as $\breve{\omega} = \sum_{i=1}^{n} \d r_{i} \wedge \d \breve{\theta}_{i}$. First, we verify that a smooth generic section  $s \colon B \to \breve{X}$, locally given by $\breve{\theta}_i = s_i(r_1, \ldots, r_n)$, is Lagrangian if and only if its Jacobian matrix $H = \left(H_{kj} := \frac{\partial s_j}{\partial r_k}\right)$ is symmetric. To define the mirror analog of a Lagrangian, we note that any point in the fiber $\breve{\pi}^{-1}(b) \subset \breve{X}$, represented by the coordinates $\breve{\theta}$, defines a flat $\mathrm{U}(1)$-connection on its dual fiber $\pi^{-1}(b) \subset X$. The specification of section $s$ implies that the coordinates $\breve{\theta}_j$ acquire a base dependence $\breve{\theta}_j = s_j(r)$. Consequently, the pointwise family of flat connections along the fibers integrates into a global $\mathrm{U}(1)$ connection $\nabla_A$ on a line bundle over the total space of $X$:
\begin{equation}
 \nabla_A = \d + \mathbf{i} A = \d + \mathbf{i} \sum_{j=1}^n s_j(r) \d\theta_j.
\end{equation}
Proposition \ref{prop:integrability} shows that Lagrangians in $\breve X$ are in correspondence with holomorphic connections on the mirror side. Furthermore, Proposition \ref{prop:dHYM} ensures that we can translate the requirement for a Lagrangian $L$ to be of \emph{special phase}, i.e., 
\begin{equation}
\im\left(e^{-\mathbf{i}\vartheta} \restr{\breve{\mathsf{\Omega}}}{L}\right) = 0,
\end{equation} 
into the condition on the mirror that the connection satisfies the \emph{deformed Hermitian-Yang-Mills} (dHYM) equation. We prove:

\begin{Mtheorem}[{=Theorem \ref{thm:AndB-cyles}}]
 Let $(\breve{X}, \breve{\omega},\breve{\mathsf{\Omega}})$ and $(X, \omega, \mathsf{\Omega})$ be a non-K\"ahler SYZ mirror pair. Then the Fourier-Mukai transform maps a type-A supersymmetric cycle in $\breve{X}$ (given by a special Lagrangian section with a flat $\mathrm{U}(1)$ connection) to a type-B supersymmetric $2n$-cycle in $X$ (given by a line bundle whose connection satisfies $\mathrm{(dHYM)}$).
\end{Mtheorem}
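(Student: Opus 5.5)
The plan is to reduce the statement to the two local results already announced, Proposition \ref{prop:integrability} and Proposition \ref{prop:dHYM}, and then check that they glue globally.

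First we fix the setting. The base $B=G/\Gamma$ carries its integral affine structure, with local affine coordinates $r_i$ and action-angle coordinates $(r,\breve\theta)$ on $\breve X$. On $X$ we use the dual angle coordinates $\theta_j$. We write down the structures explicitly:
\[
\breve\omega=\sum_i \d r_i\wedge \d\breve\theta_i,
\]
$\breve{\mathsf\Omega}$ given by the IIA $\mathrm{SU}(n)$-structure, $\mathsf\Omega$ given by the wedge of the forms $\d\theta_j+\mathbf i\,(\cdot)$, and $\omega$ coming from the IIB data. For a non-K\"ahler SYZ mirror pair these are related through the Fourier-Mukai transform of \cite{Lau_2015}, by Theorem \ref{mirror_Yau}. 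We do not have a potential $\phi$ with $g_{ij}=\partial_i\partial_j\phi$. Instead we will use only the fibrewise-constant (semi-flat) form of the metric $g_{ij}(r)$ on the base. This metric enters both $\breve{\mathsf\Omega}$, through $\d\breve\theta_j$ paired with $g$, and $\omega=\sum g_{ij}\,\d r_i\wedge \d\theta_j$ type expressions.

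Second, we define the transform of a type-A cycle $(L=s(B),\nabla_L)$ with $\nabla_L$ flat. Following \cite{Leung2000}, we define its transform as the kernel-integral of the Poincar\'e bundle restricted to $L\times_B X$, which produces the connection
\[
\nabla_A=\d+\mathbf i\sum_j s_j(r)\,\d\theta_j ,
\]
twisted by the pullback of the flat connection on $L$. The latter contributes nothing to the curvature.

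Third, we apply the two propositions in turn. By Proposition \ref{prop:integrability}, the Lagrangian condition (that the Jacobian $H_{kj}=\partial s_j/\partial r_k$ is symmetric) is equivalent to $F_A^{0,2}=0$, so the line bundle is holomorphic. By Proposition \ref{prop:dHYM}, the special-phase condition $\im(e^{-\mathbf i\vartheta}\breve{\mathsf\Omega}|_L)=0$ is equivalent to
\[
\im\bigl(e^{-\mathbf i\vartheta}(\omega+F_A)^n\bigr)=0 .
\]
The key computation behind this is that both sides reduce to $\det(g^{-1}H+\mathbf i\,I)$ up to a positive factor and a conjugation. On the A side, $s^*\d\breve\theta_j=\sum_k H_{kj}\,\d r_k$. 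On the B side, $F_A=\sum_{k,j} H_{kj}\,\d r_k\wedge \d\theta_j$.

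Finally, we check globality. Under the monodromy transition functions in $\mathrm{GL}(n,\ZZ)\ltimes\RR^n$, the $s_j$ transform dually to the $\theta_j$. Hence $A$ is well defined up to gauge, and it lives on an honest line bundle over $X$ (the translations and integrality account for the gauge transformations). The resulting object is a $2n$-cycle, namely all of $X$, carrying a dHYM connection.

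The main obstacle is the phase computation without a K\"ahler potential. One must check that the $\mathrm{SU}(n)$-structure data of the IIA/IIB pair, via Theorem \ref{mirror_Yau}, make $\breve{\mathsf\Omega}|_L$ and $(\omega+F_A)^n$ proportional with the same phase. We would first try to do this pointwise in a frame where $g=I$, showing that the conformal factors (the dilaton-type functions in the supersymmetric systems) are real and positive and therefore do not affect the phase.
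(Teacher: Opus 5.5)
Your proposal is correct and is essentially the paper's proof: you apply Propositions \ref{prop:integrability} and \ref{prop:dHYM} to $\nabla_A=\d+\mathbf{i}\sum_j s_j(r)\,\d\theta_j$, and you observe that the flat connection on $L$ (locally $\d+\mathbf{i}\,\d e$) only adds a closed base term, so $F_{\tilde A}=F_A$ and both integrability and (dHYM) are preserved. Your globality check is a harmless addition the paper omits, and your ``main obstacle'' is already settled by the direct computation in Proposition \ref{prop:dHYM}, which uses $\breve{\mathsf{\Omega}}=\bigwedge_j\bigl(\d\breve\theta_j+\mathbf{i}\sum_k\mathsf{g}_{jk}\,\d r_k\bigr)$ and needs no conformal-factor or frame argument (two small fixes: with your real $F_A$ the dHYM form should be $\omega+\mathbf{i}F_A$, and constant factors such as the $\mathbf{i}^n$ in $\det(H+\mathbf{i}\mathsf{g})=\mathbf{i}^n\det(\mathsf{g}-\mathbf{i}H)$ merely relabel the phase $\vartheta$ rather than leaving it literally the same).
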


\vspace{1em}

\subsection{The answer to question (a)}

To address Question (a), we seek pure Lie-theoretic conditions under which solvmanifolds admit supersymmetric mirror pairs. Let $G$ be a simply connected Lie group with a complete left-invariant integral affine structure and a lattice $\Gamma$. We set $B = G/\Gamma$. Per Theorem \ref{thm:dedulvan}, the dual total spaces, $X$ and $\breve{X}$, can be realized as homogeneous spaces of semi-direct products involving $G$ and the linear part $\rho$ of its affine representation. Although they naturally inherit invariant complex and symplectic structures, to check that they form a non-Kähler SYZ mirror pair, we must construct the canonical volume forms $\mathsf{\Omega}$ and $\breve{\mathsf{\Omega}}$ and ensure that they satisfy the right requirements of supersymmetric structures of types IIB and IIA, respectively, and verify that Fourier-Mukai $FT$ exchange them properly.

By explicitly constructing bases of left-invariant forms, we prove (Theorem \ref{Fouriermukaiomega}) that $FT(e^{2\omega}) = \breve{\mathsf{\Omega}}$. Moreover, we prove:

\begin{Mtheorem}[{=Theorem \ref{mirrorsolvmanifolds}}]
 Let $G$ be a simply connected Lie group endowed with a complete left-invariant affine structure, and let $\Gamma$ be a lattice such that the induced affine structure on the solvmanifold $G/\Gamma$ is integral. Denote by $\rho_* \colon \mathfrak{g} \to \mathfrak{gl}(n,\RR)$ the derivative of the linear part of the affine representation. Then, the dual torus bundles  $(X, \omega, \mathsf{\Omega})$ and $(\breve{X}, \breve{\omega}, \breve{\mathsf{\Omega}})$  form a non-Kähler SYZ mirror pair if and only if
 \[
 (\rho_{*}(X_i))_{i,i} = \frac{1}{2} \sum_{j=1}^n (\rho_{*}(X_j))_{i,j} 
 \]
 for all $1 \le i \le n$, where $\{X_1, \ldots, X_n\}$ is a basis of $\mathfrak{g}$.
\end{Mtheorem}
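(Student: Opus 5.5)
Our approach is to reduce the supersymmetry conditions to linear-algebraic identities among the left-invariant forms furnished by Theorem \ref{thm:dedulvan}. The Lau--Tseng--Yau conditions (Theorem \ref{mirror_Yau}) ask for two things. On $X$ we need the type IIB system: $\d\mathsf{\Omega}=0$, together with the balanced-type condition $\d(\omega^{n-1})=0$ up to the conformal factor fixed by $|\mathsf\Omega|$. On $\breve X$ we need the type IIA system: $\d\breve\omega=0$, $\d\,\mathrm{Re}\,\breve{\mathsf\Omega}=0$, and $\mathrm{Im}\,\breve{\mathsf\Omega}$ primitive and closed under $\d\dL$. Theorem \ref{Fouriermukaiomega} already gives $FT(e^{2\omega})=\breve{\mathsf\Omega}$, and the transform is compatible with the relevant differentials on invariant forms. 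So, modulo that theorem, the whole statement reduces to a single check: when are these closedness conditions satisfied?

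\textbf{Step 1: structure equations.} Write $X=(G\ltimes_\rho\RR^n)/(\Gamma\ltimes\ZZ^n)$ and $\breve X=(G\ltimes_{\rho^*}\RR^n)/(\Gamma\ltimes\ZZ^n)$. Take the left-invariant coframe $\{e^i\}$ dual to $\{X_i\}$ on the base, and fibre forms $\d\theta_i$ twisted by $\rho$. Then Maurer--Cartan gives
\[
\d e^k=-\sum_{i<j}c_{ij}^k\, e^i\wedge e^j,\qquad \d\vartheta^i=-\sum_{j,k}(\rho_*(X_j))_{i,k}\, e^j\wedge\vartheta^k ,
\]
with the dual sign and transpose on $\breve X$. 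The affine condition makes the torsion vanish, which gives $c_{ij}^k=(\rho_*(X_i))_{k,j}-(\rho_*(X_j))_{k,i}$. We then set
\[
\mathsf\Omega=\bigwedge_i(e^i+\mathbf{i}\vartheta^i),\qquad \omega=\sum_i e^i\wedge\vartheta^i ,
\]
and define $\breve\omega,\breve{\mathsf\Omega}$ analogously on $\breve X$.

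\textbf{Step 2: compute the differentials.} We compute $\d\mathsf\Omega$, $\d\omega^{n-1}$ and $\d\,\breve{\mathsf\Omega}$ (equivalently, by $FT$, $\d e^{2\omega}$ twisted appropriately). Each of these is a sum over $i$ of a coefficient times a fixed $(2n-1)$-form, respectively $(n+1)$-form. The coefficient in front of $e^i$ should collect two contributions: $\mathrm{tr}$-type terms $\sum_j(\rho_*(X_j))_{i,j}$ coming from the torsion-free relation, and diagonal terms $(\rho_*(X_i))_{i,i}$ coming from $\d\vartheta^i$. Integrality of the lattice forces unimodularity, so $\mathrm{tr}\,\rho_*=0$ relations are available to cancel the remaining terms. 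We expect every condition to collapse to the vanishing of $2(\rho_*(X_i))_{i,i}-\sum_j(\rho_*(X_j))_{i,j}$ for each $i$. The converse direction follows immediately, because the forms multiplying these coefficients are linearly independent.

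\textbf{Main obstacle.} The hard part is the bookkeeping in Step 2. We must show that the IIA and IIB conditions, including the primitivity and $\d\dL$ condition on $\mathrm{Im}\,\breve{\mathsf\Omega}$ and the conformally balanced condition with the correct factor $\|\mathsf\Omega\|$, produce the same set of equations and no extra ones. The factor $\tfrac12$ presumably arises from the factor $2$ in $e^{2\omega}$. We would first verify the reduction in dimension $n=2$ with a non-unimodular-looking $\rho$, then organize the general computation by the torsion-free identity. Finally, we would use $FT$-equivariance of the differentials from \cite{Lau_2015} to transfer closedness from one side to the other, so that only the IIB computation on $X$ needs to be done by hand.
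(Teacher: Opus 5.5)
Your reduction matches the paper's. With $\breve{\mathsf{\Omega}}=FT(e^{2\omega})$ (Theorem \ref{Fouriermukaiomega}), part (2) of Theorem \ref{mirror_Yau} makes ``mirror pair'' equivalent to $(X,\mathsf{\Omega},\omega)$ being of type IIB. Then $\d\mathsf{\Omega}=0$ follows from $\det\rho\equiv 1$, so everything hinges on $\d(\omega^{n-1})$. No separate check of the IIA side, of primitivity, or of a conformal factor is needed, since all forms are left-invariant and $\mathsf F$ is constant.

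The gap is that Step 2, which carries the whole content of the theorem, is only ``expected'', and the expectation is wrong. Your heuristic for the $\tfrac12$ cannot work: rescaling $\omega$ by the constant $2$ does not change whether $\d(\omega^{n-1})=0$. Also, unimodularity gives only $\mathrm{tr}\,\mathrm{ad}_X=0$. To get $\mathrm{tr}\,\rho_*(X)=0$ you need completeness (Segal's $\mathrm{tr}\,R_X=0$, as in the proof of Theorem \ref{Fouriermukaiomega}). Alternatively, integrality makes $\det\rho$ a character of $G$ trivial on $\Gamma$, hence trivial because $G/\Gamma$ is compact.

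Carrying out Step 2 gives Theorem \ref{balancedomega}. In its notation, the coefficient of $e_i\wedge\bigwedge_{t\neq i}(e_t\wedge e_{t+n})$ in $\d(\omega^{n-1})$ is $(n-1)!\sum_{j\neq i}\bigl((\rho_*(X_j))_{i,j}-(\rho_*(X_i))_{j,j}-C^j_{i,j}\bigr)$. The $j=i$ summand vanishes identically, so this equals $(n-1)!\bigl(\sum_j(\rho_*(X_j))_{i,j}-\mathrm{tr}\,\rho_*(X_i)-\mathrm{tr}\,\mathrm{ad}_{X_i}\bigr)$. The balanced condition is therefore $\sum_j(\rho_*(X_j))_{i,j}=0$ for every $i$: the diagonal entry cancels rather than doubling.

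The paper reaches the $\tfrac12$ form only through a slip in the corollary after Theorem \ref{balancedomega}. From $\sum_{j\neq i}\Gamma^i_{j,j}+\Gamma^i_{i,i}=0$, whose left-hand side is already $\sum_j\Gamma^i_{j,j}$, it concludes $\sum_j\Gamma^i_{j,j}=2\Gamma^i_{i,i}$.

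The two conditions really differ. Take $G=\RR^2$ with the complete LSA $X_1\cdot X_1=X_2$ (all other products zero), so $\rho(x)=\begin{pmatrix}1&0\\ x_1&1\end{pmatrix}$, and $\Gamma=\ZZ^2$, which is integral. Use the integral basis $X_1,\ X_1+X_2$. Both basis vectors act by $\begin{pmatrix}-1&-1\\ 1&1\end{pmatrix}$, so the stated condition holds for $i=1,2$. Yet $\d\omega=-2\,e_1\wedge e_2\wedge(e_3+e_4)\neq 0$, so for $n=2$ the structure is not of type IIB and there is no mirror pair.

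So the missing step is not bookkeeping. Executed correctly, your plan proves the criterion $\sum_j(\rho_*(X_j))_{i,j}=0$, not the displayed one. The two agree exactly when the diagonal entries $(\rho_*(X_i))_{i,i}$ vanish, as they do in Examples \ref{almostabelianmirror}, \ref{ex:heisnberg} and \ref{ex:plethora}.
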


In Section \ref{SYZ solvmanifolds}, we apply it to construct explicit families of non-Kähler mirror pairs from almost abelian (Example \ref{almostabelianmirror}) Lie groups and the generalized Heisenberg group (Example \ref{ex:heisnberg}) and connect the existence of mirror pairs with the existence of geodesically complete flat Lorentzian metrics (Theorem \ref{thm:lorentzian}). 

Finally, by using what is known as Mal'cev's criterion \cite[Theorem 2.12]{Raghunathan1972}, we provide a complete classification of all non-Kähler SYZ mirror pairs arising from nilpotent Lie groups. To know
\begin{Mtheorem}[=Theorem \ref{mirror_nilmanifolds}]\label{mthm:nil}
	Let $G$ be a simply connected nilpotent Lie group with a complete left-invariant affine structure. Denote by $\rho$ the linear part of the affine representation and let $\{X_1,\ldots , X_n\}$ be a basis for $\mathfrak{g}$. Then, $G$ produces a non-Kähler SYZ mirror pair if and only if:
	\begin{itemize}
		\item[(1)] $C^{i}_{j,k} \in \QQ$ for all $1 \le i,j,k \le n$;
		\item[(2)] $(\rho_{*}(X_i))_{i,i} = \frac{1}{2}\sum_{j=1}^{n} (\rho_{*}(X_j))_{i,j}$ for all $1\le i \le n$;
		\item[(3)] There exists a lattice $\gamma \subset \mathfrak{g}$ such that $\exp(\rho_{*}(X))$ is conjugate to an integer matrix for all $X \in \gamma$.
	\end{itemize}
	Furthermore, the lattice in $G$ is explicitly given by $\Gamma := \exp(\gamma)$, making the induced affine structure on $G/ \Gamma$ integral and yielding the mirror pair:
	\[
	\xymatrix{
		(X, \mathsf{\Omega}, \omega) \ar[dr]_{\pi} & & (\breve{X}, \breve{\omega}, \breve{\mathsf{\Omega}}) \ar[dl]^{\breve{\pi}} \\
		& G/\Gamma &
	}
	\]
\end{Mtheorem}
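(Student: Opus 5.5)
The plan is to reduce the statement to Theorem \ref{mirrorsolvmanifolds} together with Mal'cev's criterion \cite[Theorem 2.12]{Raghunathan1972}. The only gap between the two is the existence of a lattice $\Gamma\subset G$ for which the induced affine structure on $G/\Gamma$ is integral. Indeed, Theorem \ref{mirrorsolvmanifolds} says that once such a $\Gamma$ is fixed, the pair is a non-K\"ahler SYZ mirror pair exactly when condition (2) holds. So "G produces a mirror pair" amounts to: (2) holds and an integral lattice exists. I would show that, for nilpotent $G$, the existence of such a lattice is equivalent to (1) together with (3), and that $\Gamma=\exp(\gamma)$ then works.

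\emph{Necessity.} Suppose $\Gamma$ is a lattice making $G/\Gamma$ integral affine.

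Mal'cev's criterion says that a simply connected nilpotent $G$ admits a lattice if and only if $\mathfrak g$ has a basis with rational structure constants. I would interpret (1) as the statement that the basis $\{X_i\}$ may be chosen this way; this gives (1).

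Since $G$ is nilpotent, $\exp$ is a diffeomorphism. The rational structure $\mathfrak g_\QQ=\QQ\text{-span}(\log\Gamma)$ contains a lattice $\gamma$ with $\exp(\gamma)\subset\Gamma$; for instance, take the $\ZZ$-span of $\log$ of a Mal'cev basis, rescaled if necessary. Integrality of the affine structure means that the holonomy, i.e.\ the linear part $\rho(g)$ for $g\in\Gamma$, lies in a conjugate of $\mathrm{GL}(n,\ZZ)$. Since $\rho(\exp X)=\exp(\rho_*(X))$, this gives (3).

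Condition (2) then follows from Theorem \ref{mirrorsolvmanifolds}.

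\emph{Sufficiency.} Given (1)–(3), I would first make sure $\Gamma:=\exp(\gamma)$ is a genuine subgroup; by the Baker–Campbell–Hausdorff formula with rational constants, this may require replacing $\gamma$ by a suitable sublattice. Cocompactness follows from Mal'cev's theorem, since $\gamma$ spans $\mathfrak g$.

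Next I would check integrality: the linear holonomy $\rho(\Gamma)=\exp(\rho_*(\gamma))$ is conjugate to integer matrices by (3), uniformly by a single conjugating matrix. The translational part must also be integral, and I would handle this by rescaling the lattice in the affine chart.

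Finally, apply Theorem \ref{mirrorsolvmanifolds} with (2) to obtain the mirror pair over $G/\Gamma$.

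\emph{Main obstacle.} I expect the hardest step to be the sufficiency direction: passing from "$\exp(\rho_*(X))$ conjugate to an integer matrix for each $X\in\gamma$" to a single conjugation, compatible with the translation parts, that makes all transition maps lie in $\mathrm{GL}(n,\ZZ)\ltimes\ZZ^n$, while keeping $\exp(\gamma)$ a group. My first attempt would use that $\rho(\Gamma)$ is finitely generated and unipotent (since $G$ is nilpotent and the structure is complete). Conjugating the generators simultaneously should then preserve an integral lattice, and passing to a finite-index sublattice of $\gamma$ should absorb the denominators.
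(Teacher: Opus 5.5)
\textbf{The crux step fails, and the paper's proof has the same gap.} Your reduction is the paper's own: Mal'cev's criterion supplies a lattice, condition (3) is supposed to give integrality of the linear holonomy, and Theorem \ref{mirrorsolvmanifolds} with (2) gives the mirror pair. The paper passes from (3) to ``$\rho|_{\Gamma}$ is conjugate to a subgroup of $\mathrm{GL}(n,\ZZ)$'' with the words ``in other words''. You correctly single out this step as the crux. However, your plan for it does not work, and with (3) read elementwise no argument can.

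Integrality needs a single $P$ with $P\rho(\Gamma)P^{-1}\subset\mathrm{GL}(n,\ZZ)$. In particular $\rho(\Gamma)$ must be discrete, and elementwise conjugacy gives nothing of the sort. In the unipotent setting you invoke (which you would also have to justify), (3) is even vacuous, since every real unipotent matrix is conjugate to its Jordan form.

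Here is a concrete counterexample. On the abelian $\mathfrak g=\RR^4$ take the commutative, associative, square-zero product $x\cdot y=\beta(x,y)e_4$, with $\beta(x,y)=(x_1+\sqrt2\,x_2)(y_1+\sqrt2\,y_2)-3x_3y_3$.
\begin{itemize}
\item This is a complete LSA, since right multiplications are nilpotent.
\item Condition (1) holds trivially.
\item Condition (2) holds. With $\rho_*(X)Y=X\cdot Y$ as in Example \ref{ex:plethora}, the two sides of (2) are the $i$-th coordinates of $e_i\cdot e_i$ and of $\tfrac12\sum_j e_j\cdot e_j=\tfrac12(1+2-3)e_4=0$, and both vanish.
\item Condition (3) holds for $\gamma=\ZZ^4=\exp(\gamma)=\Gamma$, since $\rho(x)=I+L_x$ is unipotent.
\end{itemize}
Yet $\rho(\Gamma)$ contains $I+(a+b\sqrt2)L_{e_1}$ for all $a,b\in\ZZ$, and these accumulate at $I$. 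So $G/\exp(\gamma)$ is not integral, and neither is $G/\exp(\gamma')$ for any finite-index sublattice $\gamma'$. Hence ``conjugating generators simultaneously'' plus ``passing to a finite-index sublattice'' cannot close the gap. The ``Furthermore'' clause $\Gamma=\exp(\gamma)$ is false under the elementwise reading of (3).

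\textbf{The sublattice step for making $\exp(\gamma)$ a group has a related problem.} The BCH argument and Mal'cev's cocompactness need $\gamma$ to be rational for the bracket, e.g.\ $\gamma\subset\mathrm{span}_{\QQ}\{X_1,\dots,X_n\}$ using (1). But (3) allows an arbitrary lattice. For instance, take $\gamma=\ZZ X\oplus\ZZ Y\oplus\ZZ\sqrt2\,Z$ in the Heisenberg algebra with $[X,Y]=Z$. Every finite-index sublattice then generates a subgroup meeting the center densely, since it contains $\exp(N^2Z)$ and $\exp(N\sqrt2 Z)$.

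\textbf{What would make sufficiency work.} Strengthen (3) as follows: $\gamma$ is a bracket-rational lattice with $\exp(\gamma)$ a subgroup, and there is one $P\in\mathrm{GL}(n,\RR)$ with $P\exp(\rho_*(X))P^{-1}\in\mathrm{GL}(n,\ZZ)$ for all $X\in\gamma$. Equivalently, $\rho(\exp\gamma)$ preserves a full-rank lattice in $\RR^n$. With this, the proof is immediate. Your necessity argument, which the paper does not give, in fact yields this stronger condition: take a finite-index subgroup of $\Gamma$ whose logarithm is a lattice.

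\textbf{Two smaller points.} First, the worry about translation parts is unnecessary: integrality in this paper only constrains linear parts, via $\mathrm{GL}(n,\ZZ)\ltimes\RR^n$-atlases. Second, as you noticed, (1) can only be necessary for a suitably chosen basis.
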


We illustrate Theorem \ref{mthm:nil} in Example \ref{ex:plethora}.

\vspace{1em}

\subsection{The answer to question (b)}

On compact symplectic manifolds satisfying the $\d\dL$-lemma, Tseng-Yau cohomology degenerates to standard de Rham cohomology. Seeking further understanding of this behavior, we shift our focus toward noncommutative geometry. Inspired by Brylinski's definition of Poisson cohomology \cite{Brylinski1988} and the framework of periodic cyclic homology \cite{Kontsevich2008, NestTsyganCyclic}, we introduce a formal variable $u$ of degree $+2$ to construct the \emph{Tseng-Yau bicomplex} \[C_{\breve{X}}^{\bullet} = (\Omega^{\bullet}(\breve{X}, \CC)((u)), \d + u\dL, \d\dL)\] on the symplectic side, alongside a ``mirror'' complex \[\hat{C}_{X}^{\bullet} = (G^\bullet((u)), \partial + u\overline{\partial}, \partial\overline{\partial})\] on the complex side. Here, 
\[
G^{\bullet} \colon \quad \dots \xrightarrow{\partial} \Omega^{0, n}(X) \xrightarrow{\partial} \Omega^{0,n-1}(X) \oplus\Omega^{1,n}(X) \xrightarrow{\partial} \Omega^{0,n-2}(X) \oplus \Omega^{1,n-1}(X)\oplus \Omega^{2,n}(X) \xrightarrow{\partial} \dots
\]

We derive four distinct cohomologies from each bicomplex, which capture ``mixed-degree forms''. In particular:
\[
    H_{\d + \dL}^{k}(C^{\bullet}) := \frac{\ker(\d+u\dL \colon C^{k}\longrightarrow C^{k+1})}{\im(\d\dL \colon C^{k}\longrightarrow C^{k})},
\]
and
\[
    H_{\partial + \overline{\partial}}^{k}(\hat{C}^{\bullet}) := \frac{\ker(\partial +u\overline{\partial} \colon \hat{C}^{k}\longrightarrow \hat{C}^{k+1})}{\im(\partial\overline{\partial} \colon \hat{C}^{k}\longrightarrow \hat{C}^{k})},
\]
which we term (both) \emph{Bott-Chern cohomologies}, given their similarities with the usual Bott-Chern cohomology in complex geometry. 

We prove (Theorems \ref{BC_cohomology} and \ref{complex_BC}) that these new cohomologies contain (as graded vector spaces) the classical Tseng-Yau and Bott-Chern cohomologies as direct summands, but are generically infinite-dimensional. However, this infinite-dimensional ``noise'' vanishes when the bicomplexes are restricted to \emph{primitive forms} (i.e., those annihilated by the dual Lefschetz operator $\Lambda$). Theorem \ref{mthm:prim_BC} can be thought of as a `noncommutative geometry' definition for the primitive Tseng-Yau cohomology:

\begin{Mtheorem}[{=Theorem \ref{prim_BC}}]\label{mthm:prim_BC}
 Let $(\breve{X}, \breve{\omega})$ be a symplectic manifold of dimension $2n$. Restricting the Bott-Chern-type cohomology of the Tseng-Yau bicomplex to primitive forms recovers the classical primitive Tseng-Yau cohomology:
 \[
 PH^{k}_{\d+\dL}(C_{\breve{X}}^{\bullet}) \cong \begin{cases}
  \bigoplus_{m ~\mathrm{even}} PH_{\d+\dL}^{m}(\breve{X}, \CC), & \text{if } k \text{ is even,} \\[2ex]
  \bigoplus_{m ~\mathrm{odd}} PH_{\d+\dL}^{m}(\breve{X}, \CC), & \text{if } k \text{ is odd.}
 \end{cases}
 \]
 An analogous isomorphism holds for the primitive Bott-Chern cohomology on the complex mirror $X$.
\end{Mtheorem}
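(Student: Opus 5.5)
We work on primitive forms, where the operators simplify. Write an element of $C^k$ as $\alpha=\sum_j \alpha_{k-2j}u^j$ with $\alpha_m\in P^m(\breve X)$, the primitive $m$-forms; only finitely many negative powers of $u$ occur, and $m=k-2j$ ranges over $0\le m\le n$. The parity of $m$ therefore always agrees with the parity of $k$. Primitive forms satisfy the standard Tseng--Yau decomposition $\d=\partial_+ +L\partial_-$, where $\partial_\pm$ preserve primitivity and shift degree by $\pm1$, and $\dL$ acts on primitive forms as a multiple of $\partial_-$. Consequently, on primitive forms the condition $\dL\alpha_m=0$ amounts to $\partial_-\alpha_m=0$, and $\d\dL$ acts as a multiple of $\partial_+\partial_-$ and lands in $P^m$.

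\textbf{Step 1: the kernel.} Expand $(\d+u\dL)\alpha=0$ by powers of $u$. The coefficient of $u^j$ is
\[
\d\alpha_{k-2j}+\dL\alpha_{k-2j+2}=0 .
\]
Decompose this into its primitive (Lefschetz) components. The term $\d\alpha_{k-2j}$ contributes $\partial_+\alpha_{k-2j}\in P^{k-2j+1}$ and $L\partial_-\alpha_{k-2j}\in LP^{k-2j-1}$. The term $\dL\alpha_{k-2j+2}$ contributes a multiple of $\partial_-\alpha_{k-2j+2}\in P^{k-2j+1}$. Comparing the $L$-components, with the $L$-summand matched across adjacent powers of $u$, gives $\partial_-\alpha_m=0$ for every $m$, provided $L$ is injective on $P^{m-1}$ for $m-1<n$; we will verify this range carefully. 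The remaining primitive equation then reads $\partial_+\alpha_m=-c\,\partial_-\alpha_{m+2}=0$. Hence every component satisfies $\partial_+\alpha_m=\partial_-\alpha_m=0$, which is exactly $\d\alpha_m=0$. In particular $\dL\alpha_m=0$ as well, so the kernel decouples as a direct product over $j$ of $\ker\d\cap\ker\dL$ on $P^{k-2j}$. Because $m$ is bounded between $0$ and $n$, this product is finite, and the $u$-adic completion causes no infinite-dimensional noise.

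\textbf{Step 2: the image and conclusion.} The operator $\d\dL$ is $u$-linear and of degree zero, so its image in $C^k$ is $\bigoplus_j \d\dL(P^{k-2j})u^j$, at least after we check that $\d\dL$ of a primitive form is primitive, which follows from $\d\dL\propto\partial_+\partial_-$. Taking the quotient componentwise gives
\[
\bigoplus_{m\equiv k \ (\mathrm{mod}\ 2)} \frac{\ker\d\cap\ker\dL\cap P^m}{\d\dL P^m}.
\]
Each summand is the primitive Tseng--Yau cohomology $PH^m_{\d+\dL}$ in the sense of Tseng--Yau, which proves the stated isomorphism.

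\textbf{Complex side and main obstacle.} For the complex mirror $X$ the argument is the same. We use the grading of $G^\bullet$, and primitivity is taken with respect to $\omega$, using the decompositions of $\partial$ and $\overline\partial$ on primitive $(p,q)$-forms, and $\partial+u\overline\partial$ separates by bidegree. The delicate point is Step 1: we must rigorously split the mixed equations across consecutive powers of $u$, and this needs injectivity of $L$ in the relevant degrees and care at the boundary degree $m=n$. In the non-primitive setting this is precisely where the extra infinite-dimensional classes appear. I would handle it by applying $\Lambda$ and using the $\mathfrak{sl}_2$ relations directly.
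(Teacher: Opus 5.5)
Your argument for the symplectic isomorphism is correct but organized differently from the paper's. The complex-side part, however, has a genuine problem.

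The paper never decomposes $\d$. It only uses $\Lambda\alpha_{m+2}=0$ to write $\dL\alpha_{m+2}=-\Lambda\d\alpha_{m+2}$, so each $u$-coefficient equation becomes $\d\alpha_m=\Lambda\d\alpha_{m+2}$. A descending induction from the top degree, where $\alpha_{m+2}=0$, then gives $\d\alpha_m=0$ for all $m$; this is essentially the ``apply $\Lambda$'' fallback you mention. You instead split each equation along $\Omega^{m+1}=\mathcal{P}^{m+1}\oplus L\mathcal{P}^{m-1}\oplus\cdots$. Its $L\mathcal{P}^{m-1}$-component is $L\partial_-\alpha_m$ alone, so $\partial_-\alpha_m=0$ because $L$ is injective on $\mathcal{P}^{m-1}$ for $m\le n$. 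That bound is automatic for primitive $\alpha_m$, so $m=n$ needs no extra care. The $\mathcal{P}^{m+1}$-component then gives $\partial_+\alpha_m=0$. Your route uses more machinery ($\partial_\pm$, $\dL\propto\partial_-$ on primitives, Lefschetz injectivity) but no induction, and it shows directly why primitivity kills the extra classes $Q$ of Theorem~\ref{BC_cohomology}. The paper's route is more elementary.

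One small point on Step 2: Definition~\ref{def:primitive-complex} quotients by $\bigl(\bigoplus_m\im\d\dL\bigr)\cap\mathcal{P}$, not by $\d\dL(\mathcal{P})$. Knowing that $\d\dL\propto\partial_+\partial_-$ preserves primitivity gives only one inclusion. The other is Tseng--Yau's Lemma~3.9 (Proposition~\ref{cor:better-recognized}), which holds because $\d\dL$ commutes with $L$ and $\Lambda$. Alternatively, use the intersection convention on both sides; those denominators split degreewise.

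The complex side, which the paper asserts without proof, does not go through as you sketch it.
\begin{itemize}
\item $\partial+u\overline\partial$ does not separate by bidegree. A $(p,q)$-form lies in $G^{n+p-q}$, and closedness of $\sum g^{p,q}u^{i(p,q)}\in\hat C^k$ is the zig-zag system $\partial g^{a-1,b}+\overline\partial g^{a,b-1}=0$, i.e.\ $\d$-closedness of $\sum g^{p,q}$.
\item $\omega$-primitivity is the wrong notion. $L_\omega$ has bidegree $(1,1)$, hence degree $0$ in the grading of $G^\bullet$, whereas the mirror of $L_{\breve\omega}$ must raise that degree by $2$. Also, $[\partial,\Lambda_\omega]$ has bidegree $(0,-1)$, so nothing plays the role of $\dL=[\d,\Lambda]$. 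Finally, on the type IIB side $\omega$ is only balanced, so there is no splitting $\partial=\partial_++L\partial_-$.
\end{itemize}
Concretely, take a nonconstant $f\in C^\infty(X)$. The element $u^{i}\partial f+u^{i+1}\overline\partial f\in\hat C^{n+1+2i}$ is $(\partial+u\overline\partial)$-closed and consists of $1$-forms, so it is $\omega$-primitive. It is not in $\im\partial\overline\partial$, which raises bidegree by $(1,1)$. Hence the $\omega$-primitive cohomology is infinite-dimensional on compact $X$. By contrast, the primitive Bott--Chern groups inject into $\bigoplus H^{p,q}_{\partial+\overline\partial}(X)$ and are finite-dimensional.

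The workable route is to transport $\Lambda$ through Theorem~\ref{Fourier-Mukai_iso}. On basic forms put $\hat\Lambda:=FT^{-1}\circ\Lambda\circ FT$. It lowers the $G^\bullet$-degree by $2$ and, by that theorem, satisfies $[\partial,\hat\Lambda]=\overline\partial$. Declare $\ker\hat\Lambda$ to be the primitive forms; your argument (or the paper's) then transfers verbatim, at least for basic forms.
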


Lastly, we connect these introduced cohomologies with the mirror symmetry framework proposed by Lau, Tseng, and Yau:

\begin{Mtheorem}[{=Theorem \ref{Fourier_Mukai_sym}}]
 The Fourier-Mukai transform induces the following isomorphism for all $k$:
 \[
 H_{\text{bas},\partial + \overline{\partial}}^{k}(\hat{C}_{X}^{\bullet}) \cong H^{k}_{\text{bas},\d + \dL}(C_{\breve{X}}^{\bullet}).
 \]
\end{Mtheorem}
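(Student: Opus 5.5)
The plan is to reduce the statement to the Lau--Tseng--Yau isomorphism of Theorem \ref{mirror_Yau_cohomologies}, applied degree by degree in the formal variable $u$. The first step is to check that the Fourier--Mukai transform $FT$, extended $u$-linearly to $\Omega^\bullet_{\text{bas}}(\breve X,\CC)((u)) \to G^\bullet_{\text{bas}}((u))$, is a map of bicomplexes. For this I would use the intertwining relations from \cite{Lau_2015} on basic forms:
\[
FT\circ \d = \partial\circ FT,\qquad FT\circ \dL = \overline\partial\circ FT \ \ (\text{up to a fixed nonzero constant}),
\]
together with the fact that $FT$ sends forms of degree $m$ on $\breve X$ to the summands $\Omega^{p,q}(X)$ with $p = n - (\text{fiber degree})$, matching the grading of $G^\bullet$.

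The constants are not harmless: a factor $c$ in $FT\circ\dL = c\,\overline\partial\circ FT$ would send $\d+u\dL$ to $\partial + cu\overline\partial$. I would absorb it by the rescaling $u\mapsto c^{-1}u$, or equivalently by multiplying the $u^j$-component by $c^{j}$, since the grading is unaffected. This rescaling also sends $\d\dL$ to a multiple of $\partial\overline\partial$, which does not change images.

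Granting this, the map
\[
\Phi\Big(\sum_j \alpha_j u^j\Big) = \sum_j c^{j}\, FT(\alpha_j)\, u^j
\]
satisfies
\[
\Phi\circ(\d+u\dL) = (\partial+u\overline\partial)\circ\Phi,\qquad \Phi\circ \d\dL = \lambda\, \partial\overline\partial\circ\Phi
\]
for some $\lambda\neq 0$. Hence $\Phi$ maps kernels to kernels and images to images. Since $FT$ is an isomorphism on basic forms, with inverse given by the inverse transform, $\Phi$ is a bijection, and the same relations hold for $\Phi^{-1}$. Therefore $\Phi$ induces an isomorphism between the Bott--Chern-type quotients, degree by degree. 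One must also check that the total degree $k$ of $C^\bullet_{\breve X}$, namely form degree plus $2\cdot$($u$-power), corresponds to the degree of $\hat C^\bullet_X$. That reduces to the bookkeeping in which $\Omega^{p,q}$ sits in $G^{\,q-p+\text{shift}}$, and I would verify it directly from the definition of $G^\bullet$.

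The main obstacle is this first step: establishing the exact intertwining, with the correct signs and degree shifts, on basic forms. The paper states Theorem \ref{mirror_Yau_cohomologies} only at the level of cohomology. I would therefore extract the chain-level identities from the local computation in action-angle coordinates. Write a basic form as $\sum f_{IJ}(r)\, \d r_I\wedge \d\breve\theta_J$ and compute $FT$ by fiber integration against $e^{\mathbf{i}\sum \d\theta_j\wedge \d\breve\theta_j}$. Then $\d$ acts through $\partial/\partial r$ on both sides, while $\dL = [\d,\Lambda]$ becomes the operator contracting $\partial_{r_i}$ against $\d\breve\theta_i$, which after transform is the $\overline\partial$ part in the complex coordinates $r_j + \mathbf{i}\theta_j$. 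With these identities in hand, the remaining arguments are formal.
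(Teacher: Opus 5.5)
Your argument is essentially the paper's. The paper obtains the statement as an immediate consequence of the chain-level bicomplex isomorphism of Theorem \ref{Fourier-Mukai_iso}, which already supplies the intertwining you planned to extract by hand; the cohomological Theorem \ref{mirror_Yau_cohomologies} would not suffice here. Because that isomorphism puts the same constant $\frac{(-1)^{n}\mathbf{i}}{2}$ on both $\d$ and $\dL$, the $u$-linear extension of $FT$ carries $\ker(\d+u\dL)$ and $\im(\d\dL)$ onto $\ker(\partial+u\overline{\partial})$ and $\im(\partial\overline{\partial})$, so no rescaling of $u$ is needed.

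One bookkeeping correction. The grading compatible with $\d\leftrightarrow\partial$ and $\Omega^{k}_{B}(\breve{X},\CC)\cong G^{k}_{B}$ has $p$ equal to the base degree and $q=n-(\text{fiber degree})$, which matches your coordinates $r_j+\mathbf{i}\theta_j$. Your stated $p=n-(\text{fiber degree})$ instead pairs $\d$ with $\overline{\partial}$ and $\Omega^{m}$ with $G^{2n-m}$. In that convention you would need the substitution $u\mapsto u^{-1}$, together with the $2$-periodicity of both complexes given by multiplication by the invertible $u$.
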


 \vspace{1em}

    In the last section of this paper, we explain how one can compute the Tseng-Yau cohomology for the non-Kähler mirror pairs constructed here. We do that with the help of a result due to Angella and Kasuya \cite{AngellaKasuya2019}, which reduces the computation on solvmanifolds to the Lie algebra cohomology of the associated semi-direct product. For almost abelian mirror pairs, we provide a complete, explicit description of both the Tseng-Yau cohomology and its mirror Bott-Chern counterpart. For the generalized Heisenberg group, we do not provide a closed formula due to combinatorial difficulties. However, we present all the ingredients to conclude the result.

    \vspace{1em }

    \subsection*{Acknowledgments}
    L.~F.~Cavenaghi and L.~Katzarkov are supported by the Simons Foundation, grant SFI-MPS-T-Institutes-00007697, and the Ministry of Education and Science of the Republic of Bulgaria, grant DO1-239/10.12.2024. L. Grama is partially supported by FAPESP grants no. 2021/04065-6, 2024/00923-6, and CNPq grant no.306021/2024-2. P.~A.~M.~Martins is supported by FAPESP grants no. 2024/07684-7, 2025/22312-1. 

    The authors thank Giovane Galindo, Daniel Pomerleano, and Alexander Vitanov for useful comments regarding the content and exposition.

    \vspace{1em}
    
	
	\section{Non-Kähler SYZ mirror construction}

    In this paper, we will explore two types of non-K\"ahler Calabi-Yau geometries. Following \cite{Lau_2015}, we will refer to them as type $\textrm{IIA}$ and $\textrm{IIB}$ supersymmetric systems (Definitions \ref{def:supersymmetrictypeiiB} and \ref{def:supersymmetrictypeiiA} below). These are high-dimensional analogs of the six-dimensional cases originating from string theory; cf. \cite{ Podest2018, Gen_cohom, Fino2015, Gray2012, Klaput2011} and the references therein.

\subsection{\texorpdfstring{$\mathrm{SU}(n)$}{SU(n)}-structure} 
\label{sec:SUN}

First, we need to recast some basic information on $\mathrm{SU}(n)$-structures.

\begin{defin}
 Let $X$ be a real manifold of dimension $2n$. An \emph{$\mathrm{SU}(n)$-structure} on $X$ is a pair $(\omega,\mathsf{\Omega})$ of differential forms satisfying the following:
 \begin{itemize}
  \item[(a)] $\mathsf{\Omega}$ is a nowhere-vanishing complex-valued decomposable $n$-form on $X$. It induces an almost complex structure $\mathsf{J}$, which determines the complexified tangent bundle decomposition 
        \[
        TX \otimes \CC = T^{(1,0)}X \oplus T^{(0,1)}X.
        \]
        Here, the subbundles are defined as
        \[
        T^{(0,1)}X := \{ v \in TX \otimes \CC \mid \iota_{v}\mathsf{\Omega} = 0\} \quad \text{and} \quad T^{(1,0)}X := \overline{T^{(0,1)}X}.
        \]
        By construction, $\mathsf{\Omega}$ is an $(n,0)$-form with respect to this almost complex structure $\mathsf{J}$.
  
  \item[(b)] Relative to this decomposition of the complexified tangent bundle, $\omega$ is a non-degenerate real $(1,1)$-form with respect to the almost complex structure $\mathsf{J}$. Furthermore, the bilinear form $\omega(\, \cdot \, , \mathsf{J} \, \cdot \, )$ defines an almost Hermitian metric on $X$.
 \end{itemize}
\end{defin}

Since both $\mathbf{i}^{-n} \mathsf{\Omega} \wedge \overline{\mathsf{\Omega}}$ and $\omega^{n}$ are real nowhere-vanishing top-forms, we have:
\[
\mathsf{\Omega} \wedge \overline{\mathsf{\Omega}} = \mathbf{i}^{n} \mathsf{F} \frac{\omega^{n}}{n!}
\]
for some nowhere-vanishing function $\mathsf{F} \colon X \to \RR$, which is called the \emph{conformal factor} of the $\mathrm{SU}(n)$-structure. 

\begin{defin}
 An $\mathrm{SU}(n)$-structure $(X,\omega, \mathsf{\Omega})$ is said to be \emph{semi-flat} if there exists a smooth manifold $B$ over which $X$ presents itself as the total space of a Lagrangian torus bundle $\mu \colon X \to B$, and $\omega, \mathsf{\Omega} \in \Omega^{*}_{B}(X, \CC)$. That is, both $\omega$ and $\mathsf{\Omega}$ are complex-valued \emph{basic}\footnote{Meaning they are constant along the fibers.} forms.
\end{defin}

\vspace{1em}

\subsection{Supersymmetric systems} 

The supersymmetric systems here considered are \emph{in essentia} $\mathrm{SU}(n)$-structures with some constraints.  

\begin{defin}\label{def:supersymmetrictypeiiB}
 An $\mathrm{SU}(n)$-structure $(X,\omega,\mathsf{\Omega})$ is \emph{supersymmetric of type $\textrm{IIB}$} if $\d\mathsf{\Omega} = 0$ and $\d(\omega^{n-1}) = 0$. A form $\omega$ satisfying this latter condition is said to be \emph{balanced}.
\end{defin}

To better grasp Definition \ref{def:supersymmetrictypeiiB}, let us spell out the conditions for $X^{2\times 3}$ (i.e., $n=3$). It is given by the following system of differential equations:
\begin{align}
 \d \mathsf{\Omega} &= 0 , \qquad \d (\omega^{2}) = 0, \\
 \mathsf{\Omega} \wedge \overline{\mathsf{\Omega}} &= -\mathbf{i} \mathsf{F} \cdot \frac{\omega^{3}}{3!}, \\
 \rho_{B} &= 2\mathbf{i}\, \partial \overline{\partial} (\mathsf{F}^{-1} \cdot \omega).
\end{align}
Here, $\rho_B$ is what is known in the literature as the \emph{Ramond-Ramond flux} (hereafter RR flux). It can be roughly thought of as an obstruction to $X$ being a Calabi-Yau manifold in the following sense: if $\d \omega = 0$ and $\mathsf{F}$ is constant, then $\rho_{B} = 0$ and $(X,\omega , \mathsf{\Omega})$ is Calabi-Yau.

\vspace{1em}

To define the high-dimensional type $\textrm{IIA}$ supersymmetric system, some preliminary definitions are necessary.

\begin{defin}
 Let $(X,\omega,\mathsf{\Omega})$ be an $\mathrm{SU}(n)$-structure. Recall that a real polarization $(\Lambda, U)$ with respect to $\omega$ is an integrable Lagrangian distribution $\Lambda \subset TU$ defined over a dense open subset $U \subset X$. We say that $(\Lambda, U)$ is \emph{special of phase $\vartheta \in \RR/2\pi\ZZ$} with respect to $\mathsf{\Omega}$ if $\mathsf{\Omega}|_{\Lambda_{x}} \in \RR_{>0} e^{\mathbf{i}\vartheta}$ for all $x \in U$.
 
 This polarization induces an orthogonal splitting of the tangent bundle $TU = \Lambda \oplus \Lambda^{\perp}$ (with respect to the induced Riemannian metric $\mathsf g=\omega(\cdot,J\cdot)$), which yields a corresponding decomposition of the space of $k$-forms:
 \[
    \Omega^{k}(U) = \bigoplus_{p+q = k} \Omega^{(p,q)^{\Lambda}}(U).
    \]
 Here, $\Omega^{(p,q)^{\Lambda}}(U)$ denotes the space of differential forms in $U$ that have $p$ components along $\Lambda$ and $q$ components along $\Lambda^{\perp}$. Associated with this decomposition is the natural projection $\pi_{\Lambda}^{p,q} \colon \Omega^{k}(U) \to \Omega^{(p,q)^{\Lambda}}(U)$.
\end{defin}

\begin{defin}\label{def:supersymmetrictypeiiA}
 Let $(X,\omega,\mathsf{\Omega})$ be an $\mathrm{SU}(n)$-structure with a special real polarization $(\Lambda, U)$. We say that the triple $(X,\omega,\mathsf{\Omega})$ is \emph{supersymmetric of type $\textrm{IIA}$} if:
 \begin{itemize}
  \item[(a)] $\d\omega = 0$;
  \item[(b)] $\d\left( \pi_{\Lambda}^{n,0} \circ \restr{\mathsf{\Omega}}{U} \right) = 0$;
  \item[(c)] $\d\left( \pi_{\Lambda}^{1,n-1} \circ \restr{\mathsf{\Omega}}{U} \right) = 0$.
 \end{itemize}
\end{defin}

Just as we did for Definition \ref{def:supersymmetrictypeiiB}, let us spell out Definition \ref{def:supersymmetrictypeiiA} for the case where $n=3$. We have:
\begin{align}
 \d \omega &= 0 , \qquad \d (\mathrm{Re}\,\mathsf{\Omega}) = 0, \\
 \mathsf{\Omega} \wedge \overline{\mathsf{\Omega}} &= -\mathbf{i} \mathsf{F} \cdot \frac{\omega^{3}}{3!}, \\
 \rho_{A} &= - \mathbf{i}\d \dL \left( \mathsf{F} \cdot \im\, \mathsf{\Omega} \right).
\end{align}
Once again, if $\d \mathsf{\Omega} = 0$ and $\mathsf{F}$ is constant, then $\rho_{A} = 0$ and $(X,\omega, \mathsf{\Omega})$ is Calabi-Yau.

\vspace{1em}

\subsection{Affine structures}\label{affine} 

Let $B$ be a smooth manifold of dimension $n$. Recall that an affine structure on $B$ is an atlas whose transition maps are affine. A given affine structure is termed \emph{integral} if the linear parts of the transition functions are in $\mathrm{GL}(n,\ZZ)$. By the Arnold-Liouville Theorem in classical mechanics~\cite{Arnold_1989}, an integral affine structure on a manifold $B$ is equivalent to a Lagrangian torus bundle $\breve{X} \to B$. This can be made explicit. We recall it here for the reader's convenience.

Let $q \in B$ and let $r_{1}, \ldots, r_{n}$ be local affine coordinates in a neighborhood $U$ of $q$. Notice that $\d r_{1} , \ldots, \d r_{n}$ globally generates $T^{*}U$. Consider the lattice $\Lambda^{*}:= \mathrm{span}_{\ZZ}\{\d r_{1}, \ldots, \d r_{n} \}$. Because the affine structure is integral, the quotient $\breve{X} := T^{*}B/\Lambda^{*}$ inherits the structure of a smooth manifold and defines a fibration over $B$. Furthermore, the fibers of $\breve{X} \to B$ define a Lagrangian distribution with respect to the canonical symplectic form induced on the quotient. In terms of coordinates, this symplectic form is written as $\breve{\omega} = \sum_{i=1}^n \d r_{i} \wedge \d \breve{\theta}_{i}$, where $\breve{\theta}_{1}, \ldots, \breve{\theta}_{n}$ are the fiber coordinates in $\breve{X}$.

In the spirit of T-duality \cite{Leung2000,Bouwknegt_2004, Auroux_2007, Cavalcanti_2011}, given a torus bundle $\breve{\pi} \colon \breve{X} \to B$, it is possible to build a \emph{dual} torus bundle whose total space is:
\[
X := \left\{ (r,\nabla) \mid r \in B, \, \nabla \text{ is a flat } \mathrm{U}(1) \text{ connection over }\breve{\pi}^{-1}(r) \right\}.
\]
The bundle map $\pi \colon X \to B$ is obtained by forgetting the connection. Let $p \in \breve{\pi}^{-1}(b) \subset \breve{X}$. The flat connection $\mathrm{U}(1)$ on the dual fiber $\pi^{-1}(b)$ can be explicitly described. Denoting the coordinates in $\breve{\pi}^{-1}(b) \subset \breve{X}$ by $\breve{\theta}_1,\ldots, \breve{\theta}_{n}$, and the coordinates in the dual torus $\pi^{-1}(b) \subset X$ by $\theta_1,\ldots, \theta_{n}$, we have:
\begin{equation}\label{equation:connection_formula}
 \nabla_{p} := \d + \mathbf{i}\sum_{j=1}^{n} \breve{\theta}_{j}(p) \d \theta_{j}
\end{equation}
where $\breve{\theta}_{j}(p)$ are constants evaluated at the point $p$, ensuring that $\nabla_{p}$ defines a flat $\mathrm{U}(1)$ connection in $\pi^{-1}(b) \subset X$.

Let $q \in B$ and let $r_{1} , \ldots, r_{n}$ be affine coordinates in a neighborhood $U$ of $q$. Similarly to the above, $\pi^{-1}(U) \simeq TU / \Lambda$, where $\Lambda = \mathrm{span}_{\ZZ}\left\{ \frac{\partial}{\partial r_{1}} , \ldots , \frac{\partial}{\partial r_{n}} \right\}$. Let $\theta_{1}, \ldots , \theta_{n} $ denote the fiber coordinates in $X$. The space $X$ admits a canonical complex structure defined by $z_{j} := \theta_{j} + \mathbf{i} r_{j}$. Consequently, $X$ inherits an holomorphic volume form given by $\mathsf{\Omega} := \d z_{1} \wedge \ldots \wedge \d z_{n}$.

\begin{defin}\label{def:T-dual}
 The pairs $(\breve{X}, \breve{\omega})$ and $(X, \mathsf{\Omega})$ are termed a \emph{T-dual pair}.
\end{defin}

In what follows, we shall denote by $\Omega^k_B(\breve{X},\CC)$ the sheaf of $k$-th \emph{basic} differential forms (i.e., complex-valued forms that are constant along the fibers) on $\breve{X}$. Similarly, $\Omega_B^k(X,\CC)$ is defined analogously for $X$. Notice that in the presence of an $\mathrm{SU}(n)$-structure that is supersymmetric of type IIA, we can decompose the basic forms according to their $(p,q)$-components. Hence, hereafter, we shall also use the notation $\Omega_B^{p,q}(\breve X)$.

\vspace{1em}

\subsection{Non-Kähler SYZ}
\label{sec:setup}
In this section, the analog of the SYZ mirror phenomenon for non-K\"ahler manifolds is presented following~\cite{Lau_2015, Bedulli_2024}. Let $(\breve{X}, \breve{\omega})$ and $(X,\mathsf{\Omega})$ be a T-dual pair.\footnote{The reader should be aware that we are not using the identical notation conventions as employed in \cite{Lau_2015}.} 

To formalize the mirror symmetry between these dual torus bundles, we define an operator that exchanges the base and fiber differentials. Let $\breve{\alpha} \in \Omega^{k}_{B}( \breve{X} , \CC )$. In local action-angle coordinates derived from the Arnold-Liouville theorem, this is written as:
\[
\breve{\alpha} = \sum_{|I|+|J| = k} f_{I,J}(r) \d\breve{\theta}_{J} \wedge \d r_{I}
\]
where $f_{I,J} \colon B \to \CC$ are smooth functions. Analogously, let $\alpha \in \Omega^{p,q}_{B}(X)$. In local coordinates,
\[
\alpha = \sum_{|I|=p \, , \, |J| = q } f_{I,J}(r) \d z_{I} \wedge \d\overline{z}_{J}
\]
where $f_{I,J} \colon B \to \CC$ are smooth functions, and $z_{j} = \theta_{j} + \mathbf{i} r_{j}$.

\begin{defin}
 The \emph{polarization switch} operator $\mathsf{P}$ is defined by its action on $\Omega^{p,q}_{B}(X)$, which is locally written as:
 \[
 \sum_{|I|=p \, , \, |J| = q } f_{I,J}(r) \d z_{I} \wedge \d\overline{z}_{J} \stackrel{\mathsf{P}}{\longmapsto} \sum_{|I|=p \, , \, |J|=q} f_{I,J}(r) \d \theta_{I} \wedge \d r_{J}.
 \]
\end{defin}

Consider the fiber product:
\[
\xymatrix{
 & X \times_{B} \breve{X} \ar[dl]_{p} \ar[dr]^{\breve{p}} & \\
 X \ar[dr]_{\pi} & & \breve{X} \ar[dl]^{\breve{\pi}} \\
 & B &
}
\]
where $p$ is the projection onto $X$ and $\breve{p}$ is the projection onto $\breve{X}$. The space $X \times_{B} \breve{X}$ admits a universal connection, locally written as \begin{equation}\label{eq:canonical-connection}\d + \sum_{j=1}^n \left(\mathbf{i} \theta_{j} \d \breve{\theta}_{j} - \mathbf{i} \breve{\theta}_{j} \d\theta_{j}\right).\end{equation} Its curvature is expressed as:
\[
\Theta = 2\mathbf{i} \sum_{j=1}^{n} \d\theta_{j} \wedge \d\breve{\theta}_{j}.
\]

\begin{defin}(Fourier-Mukai transform) 
    Let $\alpha \in \Omega^{\bullet,\bullet}_{B}(X)$. Its Fourier-Mukai transform is defined as:
 \[
 FT(\alpha) := \breve{p}_{*}\left( p^{*}(\mathsf{P}(\alpha)) \wedge \exp\left(\frac{\Theta}{2\mathbf i}\right) \right)
 \]
 where $\breve{p}_{*}$ denotes integration along the torus fibers.
 
 The Fourier-Mukai transform of $\breve{\alpha} \in \Omega^{*}_{B}( \breve{X},\CC )$ is conversely defined as:
 \[
 FT\left(\breve{\alpha} \right) := \mathsf{P}^{-1} \left( p_{*} \left(  \breve{p}^{*}(\breve{\alpha}) \wedge \exp\left(-\frac{\Theta}{2\mathbf i}\right) \right)  \right).
 \]
\end{defin}

In the sense of Lau-Tseng-Yau \cite{Lau_2015}, the SYZ mirror phenomenon for non-Kähler manifolds is stated as a symmetry between type IIA and type IIB supersymmetric systems via the Fourier-Mukai transform:

\begin{theo_with_name}{Lau, Tseng, Yau~\cite[Theorem 5.1]{Lau_2015}}\label{mirror_Yau} 
 Let $\omega$ be a torus-invariant real $(1,1)$-form on $X$ and let $\breve{\mathsf{\Omega}}$ be the Fourier-Mukai transform of $e^{2\omega}$.
 \begin{enumerate}
  \item $(X,\mathsf{\Omega}, \omega)$ defines an $\mathrm{SU}(n)$-structure on $X$ if and only if $(\breve{X},\breve{\omega},\breve{\mathsf{\Omega}})$ defines an $\mathrm{SU}(n)$-structure on $\breve X$. For such a pair, the conformal factor $\breve{\mathsf{F}}$ of $(\breve{X},\breve{\omega},\breve{\mathsf{\Omega}})$ is related to the conformal factor $\mathsf{F}$ of $(X,\mathsf{\Omega}, \omega)$ by $\mathsf{F} = 2^{2n}\breve{\mathsf{F}}^{-1}$.
  
  \item $(\breve{X},\breve{\omega},\breve{\mathsf{\Omega}})$ is supersymmetric of type IIA if and only if $(X,\mathsf{\Omega}, \omega)$ is supersymmetric of type IIB.
  
  \item Let $\breve{\rho}_{A}$ be the type IIA RR flux source current of $(\breve{X},\breve{\omega},\breve{\mathsf{\Omega}})$. Then its Fourier-Mukai transform is equal to the type IIB RR flux source current $\rho_{B}$ of $(X,\mathsf{\Omega}, \omega)$ up to a constant multiple.
 \end{enumerate}
\end{theo_with_name}

Theorem \ref{mirror_Yau} leads naturally to the following definition:

\begin{defin}
 Let $X \to B$ and $\breve{X}\to B$ be a T-dual pair. We call them a \emph{non-Kähler SYZ mirror pair} if $(X,\mathsf{\Omega}, \omega)$ is supersymmetric of type IIB and $(\breve{X},\breve{\omega},\breve{\mathsf{\Omega}})$ is supersymmetric of type IIA.
\end{defin}

\vspace{1em}

Let us next explain why the correspondence given by the Fourier-Mukai transform between $\mathrm{SU}(n)$ supersymmetric structures of type IIA and type IIB embodies mirror symmetry. In what follows, we shall prove that if
\[
\xymatrix{
 (X, \mathsf{\Omega}, \omega) \ar[dr]_{\pi} & & (\breve{X}, \breve{\omega}, \breve{\mathsf{\Omega}}) \ar[dl]^{\breve{\pi}} \\
 & B &
}
\] 
forms a non-K\"ahler SYZ pair, the Fourier-Mukai transform maps \emph{A-cycles} in $\breve{X}$ to \emph{B-cycles} in $X$. Differently from the Calabi-Yau context, the standard argumentation does not hold straightforwardly for general non-K\"ahler SYZ pairs due to the non-existence of a Kähler potential and a real Monge-Ampère equation. 

Let us start by clarifying our terminology. In this context, A-cycles correspond to Lagrangian sections in the symplectic manifold $\breve{X}$. B-cycles correspond to holomorphic line bundles (or their associated flat connections) supported on the complex mirror manifold $X$. The main result to be proven in this section is Theorem \ref{thm:AndB-cyles}. 

\vspace{1em}

From Section \ref{affine}, we have a natural symplectic form $\breve{\omega}$ on $T^{\ast}B$. In local action-angle coordinates, we can write $\breve{\omega} = \sum_{i=1}^{n} \d r_{i} \wedge \d \breve{\theta}_{i}$.

\begin{lemma}\label{lemma:lagrangian_symmetry}
 Let $s \colon B \to \breve{X}$ be a smooth generic section, locally given by $\breve{\theta}_i = s_i(r_1, \ldots, r_n)$. The section $s(B)$ is Lagrangian if and only if its Jacobian matrix $H = \left(H_{kj} := \frac{\partial s_j}{\partial r_k}\right)$ is symmetric.
\end{lemma}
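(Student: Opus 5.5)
The plan is to compute the pullback of $\breve{\omega}$ along the section $s$ in local action-angle coordinates and read off the Lagrangian condition directly. Since $s(B)$ has dimension $n$, which is half the dimension of $\breve X$, it is Lagrangian exactly when $s^{*}\breve{\omega}=0$. This is because $s\colon B\to s(B)$ is a diffeomorphism onto its image, so the restriction $\restr{\breve\omega}{s(B)}$ vanishes iff its pullback to $B$ does. Lagrangianity is local, so it suffices to check it on each affine chart $U$ with coordinates $r_1,\dots,r_n$, where $\breve{\omega}=\sum_i \d r_i\wedge \d\breve{\theta}_i$.

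\textbf{Step 1: compute the pullback.} In the chart, $s^{*}r_i=r_i$ and $s^{*}\breve{\theta}_j=s_j(r)$. Here $s_j$ denotes a local lift to $\RR$ of the angle coordinate; different lifts differ by integer constants, which do not affect differentials. Hence $s^{*}\d\breve{\theta}_j=\sum_k \frac{\partial s_j}{\partial r_k}\,\d r_k=\sum_k H_{kj}\,\d r_k$, and therefore
\[
s^{*}\breve{\omega}=\sum_{j,k} H_{kj}\,\d r_j\wedge \d r_k=\sum_{j<k}\left(H_{kj}-H_{jk}\right)\d r_j\wedge \d r_k .
\]

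\textbf{Step 2: conclude.} The forms $\d r_j\wedge \d r_k$ with $j<k$ are a pointwise basis of $\Lambda^2T^*U$. So $s^*\breve\omega=0$ on $U$ iff $H_{kj}=H_{jk}$ for all $j,k$, that is, iff $H$ is symmetric.

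\textbf{Step 3: globalize.} We must check that the condition does not depend on the chart. Under an integral affine change of coordinates $r'=Ar+b$ with $A\in\mathrm{GL}(n,\ZZ)$, the dual fiber coordinates transform by $A^{-T}$, up to integral translations. The Jacobian then transforms by $H'=A^{-T}HA^{-1}$, a congruence, which preserves symmetry. This check is optional, since $s^{*}\breve\omega$ is a global form and its vanishing is intrinsic.

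The only point needing care is the first step. One must make sure that the fiber coordinates on the quotient $T^*B/\Lambda^*$ are well defined up to constants, so that $\d\breve\theta_j$ and the partial derivatives $\partial s_j/\partial r_k$ make sense. Everything else is a direct computation. The word ``generic'' in the statement plays no role in this argument.
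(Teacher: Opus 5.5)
Your proposal is correct and takes the same approach as the paper. The paper likewise computes $s^{*}\breve{\omega}=\sum_i \d r_i\wedge \d(s_i(r))$ in action-angle coordinates and reads off that the coefficient of each $\d r_i\wedge \d r_j$ ($i<j$) is, up to sign, $\partial s_j/\partial r_i-\partial s_i/\partial r_j$. Your extra remarks are sound and only make explicit what the paper leaves implicit: half-dimensionality (so Lagrangian means $s^*\breve\omega=0$), well-definedness of $\d\breve\theta_j$ on $T^*B/\Lambda^*$, and chart-independence via $H'=A^{-T}HA^{-1}$.
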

\begin{proof}
This follows from the direct computation:
 \begin{align}
  s^*\breve{\omega} &= s^* \left( \sum_{i=1}^n \d r_i \wedge \d \breve{\theta}_i \right) = \sum_{i=1}^n \d r_i \wedge \d(s_i(r)) \\
  &= \sum_{i=1}^n \d r_i \wedge \left( \sum_{j=1}^n \frac{\partial s_i}{\partial r_j} \d r_j \right) = \sum_{1 \le i < j \le n} \left( \frac{\partial s_j}{\partial r_i} - \frac{\partial s_i}{\partial r_j} \right) \d r_i \wedge \d r_j.
 \end{align}
\end{proof}

To define what would be analogous to a Lagrangian in the \emph{mirror}, we must understand how this section transforms. As previously discussed via Equation~\eqref{equation:connection_formula}, any point in the fiber $\breve{\pi}^{-1}(b) \subset \breve{X}$, represented by the coordinates $\breve{\theta}$, defines a flat $\mathrm{U}(1)$-connection on its dual fiber $\pi^{-1}(b) \subset X$. By choosing the section $s \colon B \to \breve{X}$, the coordinates $\breve{\theta}_j$ acquire a base dependence $\breve{\theta}_j = s_j(r)$. Consequently, the pointwise family of flat connections along the fibers integrates into a global $\mathrm{U}(1)$ connection $\nabla_A$ on a line bundle over the total space of $X$:
\begin{equation}
 \nabla_A = \d + \mathbf{i} A = \d + \mathbf{i} \sum_{j=1}^n s_j(r) \d\theta_j.
\end{equation}
Proposition \ref{prop:integrability} below shows that Lagrangians are in direct correspondence with holomorphic connections.

\begin{prop}\label{prop:integrability}
 The connection $\nabla_A$ defines a holomorphic structure on the line bundle over $X$ (i.e., $F_A^{0,2} = 0$) if and only if the cycle $s(B) \subset \breve{X}$ is Lagrangian.
\end{prop}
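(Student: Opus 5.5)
The plan is a direct computation of the curvature $F_A = \d A$ of $\nabla_A$, followed by its decomposition into types with respect to the complex structure given by $z_j = \theta_j + \mathbf{i} r_j$. Since $A = \sum_j s_j(r)\,\d\theta_j$ and the $s_j$ depend only on the base coordinates, we get
\[
F_A = \d A = \sum_{j,k} \frac{\partial s_j}{\partial r_k}\, \d r_k \wedge \d\theta_j = \sum_{j,k} H_{kj}\, \d r_k \wedge \d\theta_j .
\]
We then rewrite this in holomorphic coordinates using $\d\theta_j = \tfrac12(\d z_j + \d\overline{z}_j)$ and $\d r_k = \tfrac{1}{2\mathbf{i}}(\d z_k - \d\overline{z}_k)$.

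Expanding $\d r_k\wedge \d\theta_j$ gives
\[
\d r_k\wedge\d\theta_j = \tfrac{1}{4\mathbf{i}}\bigl(\d z_k\wedge \d z_j + \d z_k\wedge\d\overline{z}_j - \d\overline{z}_k\wedge \d z_j - \d\overline{z}_k\wedge\d\overline{z}_j\bigr).
\]
Hence the $(0,2)$-part of $F_A$ is
\[
F_A^{0,2} = -\tfrac{1}{4\mathbf{i}}\sum_{j,k} H_{kj}\, \d\overline{z}_k\wedge\d\overline{z}_j = -\tfrac{1}{4\mathbf{i}}\sum_{k<j}\bigl(H_{kj}-H_{jk}\bigr)\,\d\overline{z}_k\wedge\d\overline{z}_j .
\]
Since the forms $\d\overline{z}_k\wedge\d\overline{z}_j$ with $k<j$ are pointwise linearly independent, $F_A^{0,2}=0$ holds if and only if $H_{kj}=H_{jk}$ for all $j,k$, that is, if and only if $H$ is symmetric. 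By Lemma \ref{lemma:lagrangian_symmetry}, this is exactly the condition that $s(B)$ is Lagrangian.

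One should also remark that $F_A$ is real, so $F_A^{2,0}=\overline{F_A^{0,2}}$ vanishes simultaneously. Moreover, for a unitary connection the condition $F_A^{0,2}=0$ is precisely the integrability condition making $\overline{\partial}_A=\nabla_A^{0,1}$ a holomorphic structure, which justifies the parenthetical ``i.e.'' in the statement.

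The main point needing care is not the algebra but well-definedness. The local expression must patch to a genuine connection on a line bundle over $X$. Under the integral affine transition maps, the $r$-coordinates transform by $\mathrm{GL}(n,\ZZ)\ltimes\RR^n$ and the $\theta$-coordinates by the inverse transpose, so $\sum_j \breve{\theta}_j\,\d\theta_j$ is invariant. Integral shifts of $\breve{\theta}$ change $A$ by closed integral forms, which are gauge transformations. I would note this briefly, since the curvature condition itself is local and coordinate-independent once $F_A$ is a global $2$-form.

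The symmetry condition is also invariant under linear changes of affine coordinates, because $H$ transforms as a bilinear form. This is consistent with the global nature of both sides of the equivalence.
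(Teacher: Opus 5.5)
Your proposal is correct and follows the paper's proof essentially verbatim. You compute $F_A$ from $A=\sum_j s_j(r)\,\d\theta_j$, expand in $\d z_j,\d\overline{z}_j$, and observe that $F_A^{0,2}$ is a nonzero multiple of the antisymmetric part of $H$; you then conclude by Lemma \ref{lemma:lagrangian_symmetry}. Your normalization $F_A=\d A$ versus the paper's $\mathbf{i}\,\d A$ only changes an irrelevant constant.

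One slip in your optional well-definedness aside: the fiber coordinates $\theta$ of $X=TB/\Lambda$ transform by the same linear part as $r$. It is $\breve{\theta}$ (equivalently the section $s$) that transforms by the inverse transpose, and that is what makes $\sum_j \breve{\theta}_j\,\d\theta_j$ invariant.
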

\begin{proof}
 The curvature $2$-form of the connection is $F_A = \mathbf{i} \d A = \mathbf{i} \sum_{k,j} H_{kj} \d r_k \wedge \d\theta_j$. Using the canonical complex coordinates $z_j = \theta_j + \mathbf{i} r_j$ in $X$, we have $\d\theta_j = \frac{1}{2}(\d z_j + \d\overline{z}_j)$ and $\d r_k = \frac{1}{2\mathbf{i}}(\d z_k - \d\overline{z}_k)$. The $(0,2)$-component of the curvature form is given by:
 \[
 F_A^{0,2} = -\frac{1}{4} \sum_{j,k} H_{kj} \d\overline{z}_k \wedge \d\overline{z}_j = \frac{1}{8} \sum_{j,k} \left( H_{jk} - H_{kj} \right) \d\overline{z}_j \wedge \d\overline{z}_k.
 \]
 Thus, $F_A^{0,2} = 0 \iff H_{jk} = H_{kj}$. The proof is concluded by Lemma \ref{lemma:lagrangian_symmetry}.
\end{proof}

\vspace{1em}

In the context of supersymmetric of type IIA $\mathrm{SU}(n)$-structures, we can define the notion of \emph{special Lagrangian submanifolds} relying on the existing real polarization. Indeed, in our context the real polarization is given by a torus fibration. We shall say that a section $s:B\rightarrow \breve X$ is a \emph{special Lagrangian of phase $\vartheta$} if it is tangent to the distribution $\Lambda$. More precisely, $T_{s(x)}s(B)=\Lambda_{s(x)}$. In other words,

\begin{defin}\label{def:as-in-CY}
 Let $L\subset (\breve{X},\breve\omega,\breve{\mathsf\Omega})$ be a Lagrangian submanifold. We say that $L$ is a \emph{special Lagrangian of phase $\vartheta$} if
 \[
 \im\left(e^{-\mathbf{i}\vartheta} \restr{\breve{\mathsf{\Omega}}}{L}\right) = 0.
 \]
\end{defin}

Notice that Definition \ref{def:as-in-CY} is the usual one in the complex of mirror symmetry for Calabi-Yau manifolds. 

\vspace{1em}

Proposition \ref{prop:dHYM} below translates the requirement for a special Lagrangian to be of phase $\vartheta$ into a condition on the mirror. In the statement, $\mathsf g := \omega(\cdot, \mathsf{J}\cdot)$ is the induced Hermitian metric in $X$.

\begin{prop}\label{prop:dHYM}
 The cycle $s(B) \subset \breve{X}$ is a special Lagrangian of phase $\vartheta$ if and only if $\omega + \mathbf{i} F_A$ on $X$ satisfies the deformed Hermitian-Yang-Mills equation:
 \begin{equation}\label{eq:dHYM}
  \tag{dHYM}
  \im \det(\mathsf{g} + \mathbf{i} H) = (\tan \vartheta) \mathrm{Re} \det(\mathsf{g} + \mathbf{i} H).
 \end{equation}
\end{prop}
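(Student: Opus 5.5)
Both sides reduce to determinants of $n\times n$ matrices over $B$, and the plan is to compute the two explicitly and compare them.

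\textbf{Symplectic side.} Since $s$ is a Lagrangian section, Lemma \ref{lemma:lagrangian_symmetry} gives that $H$ is symmetric.

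1. Write the torus-invariant real $(1,1)$-form as
\[
\omega = \tfrac{\mathbf{i}}{2}\sum_{j,k} \mathsf{g}_{jk}(r)\, \d z_j\wedge \d\overline{z}_k,
\]
with $(\mathsf{g}_{jk})$ the matrix of the Hermitian metric. In the semi-flat setting this matrix is real symmetric, which is how the term $\mathsf{g}+\mathbf{i}H$ in the statement is to be read.

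2. Compute $\breve{\mathsf{\Omega}} = FT(e^{2\omega})$ directly from the definition. Apply $\mathsf{P}$, which replaces $\d z_j\wedge \d\overline{z}_k$ by $\d\theta_j\wedge \d r_k$. Then wedge with $\exp(\sum_j \d\theta_j\wedge \d\breve{\theta}_j)$ and integrate over the $\theta$-torus. Only the terms that saturate all $\d\theta_j$ survive, so this is a Gaussian/Berezin-type integral. The expected outcome is, up to a nonzero constant $c$,
\[
\breve{\mathsf{\Omega}} = c\bigwedge_{j=1}^n\Bigl(\d\breve{\theta}_j + \mathbf{i}\sum_k \mathsf{g}_{jk}\,\d r_k\Bigr),
\]
possibly with $\mathsf{g}$ replaced by $2\mathsf{g}$ or by $\mathsf{g}^T$; these normalizations must be tracked.

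3. Pull back along $s$, using $\d\breve{\theta}_j = \sum_k H_{kj}\,\d r_k$. Because $H$ and $\mathsf{g}$ are symmetric, this gives
\[
s^*\breve{\mathsf{\Omega}} = c\,\det(H + \mathbf{i}\mathsf{g})\, \d r_1\wedge\cdots\wedge \d r_n .
\]

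4. Since $\det(H+\mathbf{i}\mathsf{g}) = \mathbf{i}^n\det(\mathsf{g}-\mathbf{i}H)$, the special Lagrangian condition $\im(e^{-\mathbf{i}\vartheta}s^*\breve{\mathsf{\Omega}})=0$ becomes
\[
\im\bigl(e^{-\mathbf{i}\vartheta'}\det(\mathsf{g}-\mathbf{i}H)\bigr)=0,
\]
where $\vartheta'$ absorbs the phase of $c\,\mathbf{i}^n$. Conjugating, or equivalently using $\overline{\det(\mathsf{g}-\mathbf{i}H)}=\det(\mathsf{g}+\mathbf{i}H)$ for real matrices and adjusting the phase convention, this is exactly \eqref{eq:dHYM} with $\tan\vartheta$.

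\textbf{Complex side.} I will check that \eqref{eq:dHYM} is the usual dHYM equation $\im\bigl(e^{-\mathbf{i}\vartheta}(\omega+\mathbf{i}F_A)^n\bigr)=0$.

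1. From the proof of Proposition \ref{prop:integrability}, with $H$ symmetric, $F_A$ is a multiple of $\sum_{j,k} H_{kj}\,\d z_k\wedge \d\overline{z}_j$ (a pure $(1,1)$-form). So $\omega+\mathbf{i}F_A$ has coefficient matrix proportional to $\mathsf{g}+\mathbf{i}H$, up to a fixed constant rescaling of $H$ that I would normalize away.

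2. Consequently $(\omega+\mathbf{i}F_A)^n = n!\,\det(\mathsf{g}+\mathbf{i}H)\cdot(\text{real volume form})$, and the two conditions coincide.

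\textbf{Main obstacle.} The delicate part is the bookkeeping in step 2 of the symplectic side: signs, factors of $2$ and $\mathbf{i}$ in the fiber integration, orientation conventions, and whether $\mathsf{g}$ or $\mathsf{g}^T$ appears. These determine whether the phase in the final equation is exactly $\vartheta$ or a shifted or negated one. I would first sanity-check the computation for $n=1$, then use the multiplicativity of the exponential, which factorizes when $\mathsf{g}$ is diagonal, before doing the general determinant expansion. A secondary point is to note explicitly that $\det(\mathsf{g}+\mathbf{i}H)$ is the relevant quantity only because both matrices are real symmetric, which uses the Lagrangian hypothesis.
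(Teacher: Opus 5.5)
Your plan follows the paper's proof step for step. Both expand $(\omega+\mathbf{i}F_A)^n=n!\det(\mathsf g+\mathbf{i}H)\,\mathrm{vol}$, write $\breve{\mathsf\Omega}=\bigwedge_j(\d\breve\theta_j+\mathbf{i}\sum_k\mathsf g_{jk}\,\d r_k)$, and restrict to $s(B)$ to get $\det(H+\mathbf{i}\mathsf g)\,\d r_1\wedge\cdots\wedge\d r_n$. Both then compare the two sides using $H+\mathbf{i}\mathsf g=\mathbf{i}(\mathsf g-\mathbf{i}H)$ and complex conjugation. The paper only asserts the formula for $\breve{\mathsf\Omega}$, and your fiber integral confirms it with no stray factor. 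Indeed $\mathsf P(2\omega)=\mathbf{i}\sum_{j,k}\mathsf g_{jk}\,\d\theta_j\wedge\d r_k$, so the integrand is $\exp\bigl(\sum_j\d\theta_j\wedge(\d\breve\theta_j+\mathbf{i}\sum_k\mathsf g_{jk}\,\d r_k)\bigr)$ and $FT(e^{2\omega})=\pm\bigwedge_j(\d\breve\theta_j+\mathbf{i}\sum_k\mathsf g_{jk}\,\d r_k)$.

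The bookkeeping you postpone in step 4 comes out against the literal statement. The paper's proof stops at the same two observations and leaves this implicit. Set $D:=\det(\mathsf g+\mathbf{i}H)$. Then $\det(H+\mathbf{i}\mathsf g)=\mathbf{i}^n\,\overline{D}$, so $\im\bigl(e^{-\mathbf{i}\vartheta}\breve{\mathsf\Omega}|_{s(B)}\bigr)=0$ is equivalent to $\im\bigl(e^{-\mathbf{i}(n\pi/2-\vartheta)}D\bigr)=0$. That is (dHYM) with $\tan(\tfrac{n\pi}{2}-\vartheta)$ in place of $\tan\vartheta$, and the two conditions agree only when $2\vartheta\equiv\tfrac{n\pi}{2}\pmod\pi$. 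A counterexample already appears for $n=1$ with $\mathsf g=g>0$ and $H=h$: phase $\vartheta$ means $g=h\tan\vartheta$, whereas (dHYM) as displayed reads $h=g\tan\vartheta$. What your argument and the paper's actually prove is that special Lagrangian of phase $\vartheta$ is equivalent to (dHYM) of phase $\tfrac{n\pi}{2}-\vartheta$. State the result with that relabeling rather than claiming it is ``exactly'' (dHYM) with $\tan\vartheta$.

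The complex side has a related convention problem, which ``normalize away a constant'' does not fix. If you keep $F_A=\mathbf{i}\,\d A$ from Proposition~\ref{prop:integrability}, then $F_A=\tfrac12\sum_{j,k}H_{jk}\,\d z_j\wedge\d\overline z_k$. Hence $\omega+\mathbf{i}F_A=\tfrac{\mathbf{i}}{2}\sum_{j,k}(\mathsf g_{jk}+H_{jk})\,\d z_j\wedge\d\overline z_k$ is a real form with real $n$-th power, and no real rescaling turns its coefficient matrix into $\mathsf g+\mathbf{i}H$. You must read $F_A$ as the real form $\d A$, which is what the paper's formula $\omega+\mathbf{i}F_A=\sum(\mathsf g_{jk}+\mathbf{i}H_{jk})\,\d r_j\wedge\d\theta_k$ tacitly does. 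The coefficient matrix is then $\mathsf g\pm\mathbf{i}H$, with the sign depending on the orientation convention for $\omega$. That sign feeds into the same phase bookkeeping.
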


\begin{proof}
On the one hand,
 \[
    \omega + \mathbf{i} F_A = \sum_{j,k} (\mathsf{g}_{jk} + \mathbf{i} H_{jk}) \d r_j \wedge \d \theta_k,\quad  (\omega + \mathbf{i} F_A)^n = n! \det(\mathsf{g} + \mathbf{i} H) \d r_1 \wedge \d \theta_1 \wedge \ldots \wedge \d r_n \wedge \d \theta_n. 
    \]
Moreover,  
 \[
  \im\left( e^{-\mathbf{i}\vartheta} (\omega + \mathbf{i} F_A)^n \right) = 0\iff  \im\left( (\omega + \mathbf{i} F_A)^n \right) = (\tan \vartheta) \mathrm{Re}\left( (\omega + \mathbf{i} F_A)^n \right).
    \]
    We can rewrite the left hand side above in terms of $\mathsf g,~H$:
    \begin{equation}\label{eq:matrix_B}
 \im \det(\mathsf{g} + \mathbf{i} H) = (\tan \vartheta) \mathrm{Re} \det(\mathsf{g} + \mathbf{i} H)\iff  \im\left( e^{-\mathbf{i}\vartheta} (\omega + \mathbf{i} F_A)^n \right)=0.
    \end{equation}

    On the other hand, from the fact that $X$ and $\breve{X}$ are a non-Kähler SYZ mirror pair, one has that $FT(e^{\omega})=\breve{\mathsf\Omega}$. This implies that
    \[
    \breve{\mathsf \Omega} = \left( \d \breve{\theta}_{1} + \mathbf{i}\sum_{j} \mathsf{g}_{1j} \d r_j \right)\wedge \ldots \wedge \left(  \d \breve{\theta}_{n} + \mathbf{i}\sum_{j} \mathsf{g}_{nj} \d r_j \right).
    \]
     Because $\d \breve{\theta}_j|_{s(B)} = \sum_k H_{jk} \d r_k$, we have
 \[
    \restr{\breve{\mathsf{\Omega}}}{s(B)} = \det(H + \mathbf{i} \mathsf{g}) \d r_1 \wedge \ldots \wedge \d r_n. 
    \]
Spelling out, the cycle $s(B)$ is a special Lagrangian of phase $\vartheta$ if
 \[
    \im \det(H + \mathbf{i} \mathsf{g}) = (\tan \vartheta) \mathrm{Re} \det(H + \mathbf{i} \mathsf{g}). 
    \]
 Notice that $(H + \mathbf{i} \mathsf{g}) = \mathbf{i}(\mathsf{g} - \mathbf{i} H)$. Since $\mathsf{g}$ and $H$ are symmetric real matrices, $\det(\mathsf{g} - \mathbf{i} H)$ is the complex conjugate of $\det(\mathsf{g} + \mathbf{i} H)$. 
\end{proof}

 We are now in a position to state and prove the main result of this section.

\begin{theo}\label{thm:AndB-cyles}
 Let $(\breve{X}, \breve{\omega},\breve{\mathsf{\Omega}})$ and $(X, \mathsf{\Omega},\omega)$ be a non-K\"ahler SYZ mirror pair. Then the Fourier-Mukai transform maps a type-A supersymmetric cycle in $\breve{X}$ (given by a special Lagrangian section with a flat $\mathrm{U}(1)$ connection) to a type-B supersymmetric $2n$-cycle in $X$ (given by a line bundle whose connection satisfies $\mathrm{(dHYM)}$).
\end{theo}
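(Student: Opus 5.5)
The proof assembles Lemma \ref{lemma:lagrangian_symmetry}, Proposition \ref{prop:integrability} and Proposition \ref{prop:dHYM}. The one genuinely new step is checking that the Fourier-Mukai transform really sends the A-brane $(s(B),\nabla_{\mathrm{flat}})$ to the line bundle with connection $\nabla_A=\d+\mathbf{i}\sum_j s_j(r)\,\d\theta_j$.

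\textbf{Step 1: what the transform does to the brane.} We apply the transform fibrewise. Over $b\in B$, the A-cycle meets $\breve\pi^{-1}(b)$ in the single point $p=s(b)$, with coordinates $\breve\theta_j=s_j(b)$. The brane's data there is a $\delta$-current on the fibre, carrying the flat connection. We pull it back to $X\times_B\breve X$ and twist by the universal connection \eqref{eq:canonical-connection}. Pushing forward along $\breve p$ then evaluates that connection at $\breve\theta=s(b)$. What remains on $\pi^{-1}(b)$ is the flat connection $\nabla_p$ of \eqref{equation:connection_formula}.

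\textbf{Step 2: globalising.} Since $s$ depends smoothly on $b$, these fibrewise connections glue to $\nabla_A$ on a line bundle $\mathcal L\to X$. The $\d\breve\theta_j$ part of the universal connection drops out under integration along the fibre. The flat $\mathrm U(1)$ connection on $s(B)$ can be absorbed by tensoring $\mathcal L$ with its pullback via $\pi$. This is a flat bundle pulled back from $B$, so it changes neither $F_A^{0,2}$ nor the determinant in \eqref{eq:dHYM} at the level of cohomology classes. I would simply allow the flat part to be trivial for the curvature computation. This identification, transform of the A-brane equals $(\mathcal L,\nabla_A)$, is the step I expect to be the main obstacle. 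It requires making sense of $FT$ on currents rather than on the basic forms of the definition. I would argue distributionally: pair against test basic forms and use Fubini on the torus fibres. This reduces the claim to the fact that integrating $\exp(\mathbf{i}\langle\theta,\breve\theta\rangle)$ against a $\delta$ at $s(b)$ yields the character $e^{\mathbf{i}\langle s(b),\theta\rangle}$. That character is exactly the holonomy of $\nabla_p$.

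\textbf{Step 3: holomorphicity.} The section $s(B)$ is Lagrangian. By Lemma \ref{lemma:lagrangian_symmetry}, $H$ is symmetric, and Proposition \ref{prop:integrability} then gives $F_A^{0,2}=0$. Since $F_A$ is real, $F_A^{2,0}=\overline{F_A^{0,2}}=0$ as well. Hence $\nabla_A$ defines a holomorphic structure on $\mathcal L$.

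\textbf{Step 4: the dHYM condition.} Being special of phase $\vartheta$ means $\im(e^{-\mathbf{i}\vartheta}\breve{\mathsf\Omega}|_{s(B)})=0$. The mirror-pair hypothesis gives $\breve{\mathsf\Omega}=FT(e^{2\omega})$ via Theorem \ref{mirror_Yau}. Proposition \ref{prop:dHYM} then turns this condition into \eqref{eq:dHYM} for $\omega+\mathbf{i}F_A$. That is the type-B supersymmetry condition on a line bundle supported on all of $X$, i.e.\ a $2n$-cycle.

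Each implication is an equivalence, so the argument also runs in reverse. I would remark on this, though the theorem only asks for the forward direction.
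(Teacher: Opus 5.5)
Your proposal is correct and follows the paper's argument. Propositions \ref{prop:integrability} and \ref{prop:dHYM} handle the section, and the flat $\mathrm{U}(1)$ connection on $s(B)$ is harmless because it contributes zero curvature: the paper writes it locally as $\d+\mathbf{i}\,\d e$ and notes $F_{\tilde A}=F_A$, which holds pointwise rather than merely ``at the level of cohomology classes,'' and pointwise is what the (dHYM) equation needs. Your Step 1 goes further than the paper, which does not derive $\nabla_A$ from a current-level Fourier--Mukai transform but simply takes it to be the transform of $s(B)$ by the construction preceding Proposition \ref{prop:integrability}; if you pursue that derivation, note that the push-forward landing on $X$ is $p_*$, not $\breve p_*$.
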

\begin{proof}
 For a general type-A supersymmetric cycle in $\breve{X}$, we have a special Lagrangian section $s(B)$ together with a flat $\mathrm{U}(1)$ connection on it. A flat $\mathrm{U}(1)$ connection on the base can be written locally in the form $\d + \mathbf{i}\d e = \d + \mathbf{i} \sum_k \frac{\partial e}{\partial r_k} \d r_k$ for some smooth function $e = e(r)$. 
 
 By Propositions \ref{prop:integrability} and \ref{prop:dHYM}, the transform of the section $s(B)$ yields a connection $\nabla_A = \d + \mathbf{i} \sum s_j(r) \d\theta_j$ over $X$ that is integrable and satisfies the (dHYM) equation. In the fibered product, we have the connection (compare with Equation \eqref{eq:canonical-connection})
 \[
 \tilde{\nabla}_A = \d + \mathbf{i} \sum_{j=1}^n s_j(r) \d\theta_j + \mathbf{i} \sum_{k=1}^n \frac{\partial e}{\partial r_k} \d r_k.
 \]
 The newly added term $\mathbf{i} \d e$ is exact on the base. Thus, $F_{\tilde{A}} = F_A$. Consequently, $\tilde{\nabla}_A$ preserves both the integrability condition $F_{\tilde{A}}^{0,2} = 0$ and the (dHYM) equation.
\end{proof}
	
	\ 
	
	
	\section{Solvmanifolds}
	
\subsection{Affine structure on Lie groups} 
Let $B$ be a smooth manifold of dimension $n$ equipped with an affine structure. We can globalize the coordinate charts of $B$ by passing to its universal covering $\widetilde{B}$ and considering a \emph{developing map}, which is a local diffeomorphism $\operatorname{dev} \colon \widetilde{B} \to \RR^{n}$ (cf. \cite[Section 5.2]{Goldman_2022}). In the language of geometric structures, an affine structure on $B$ is an $(\operatorname{Aff}(\RR^{n}), \RR^{n})$-atlas. The development of this structure is entirely determined by a pair $(\operatorname{dev}, \alpha)$, where $\alpha \colon \pi_{1}(B) \to \operatorname{Aff}(\RR^{n})$ is the holonomy representation, such that for every deck transformation $\gamma \in \pi_{1}(B)$, the following diagram commutes:
\[
\xymatrix{
	\widetilde{B} \ar[r]^{\operatorname{dev}} \ar[d]_{\gamma} & \RR^{n} \ar[d]^{\alpha(\gamma)} \\
	\widetilde{B} \ar[r]_{\operatorname{dev}} & \RR^{n}
}
\]
Notice that the affine structure on $B$ is completely determined by its development. In this vocabulary, the affine structure is called \emph{integral} if $B$ admits a $(\mathrm{GL}(n,\ZZ) \ltimes \RR^{n},\RR^{n})$-atlas. Equivalently, this means that the image of the linear part of $\alpha$ is conjugate to a subgroup of $\mathrm{GL}(n,\ZZ)$.

\vspace{1em}

Let $G$ be a simply connected Lie group of dimension $n$. A left-invariant affine structure on $G$ is given by an affine atlas with the property that left-multiplication $L_{g} \colon G \to G$ acts as an affine automorphism for every $g \in G$. Because $G$ is simply connected, it coincides with its own universal cover. Consequently, the action of $L_{g}$ on $G$ induces a unique affine transformation $\alpha(g) \in \operatorname{Aff}(\RR^{n})$ in the developing space. This yields a commutative diagram:
\begin{equation}\label{eq:diagram-dev}
\xymatrix{
	G \ar[r]^{\operatorname{dev}} \ar[d]_{L_{g}} & \RR^{n} \ar[d]^{\alpha(g)} \\
	G \ar[r]_{\operatorname{dev}} & \RR^{n}
}
\end{equation}
Note that here $\alpha \colon G \to \operatorname{Aff}(\RR^{n})$ is a smooth affine representation of the Lie group $G$ itself, rather than a holonomy representation of a fundamental group.

The existence of such a structure restricts the algebraic nature of the Lie group:

\begin{lemma_with_name}{Auslander~\cite{Auslander}}\label{lem:auslander}
	Let $G$ be a simply connected Lie group of dimension $n$, equipped with a complete\footnote{i.e., geodesically complete. In terms of the developing map, it means it is a global diffeomorphism.} left-invariant affine structure and an affine representation $\alpha(g) = \operatorname{dev} \circ L_{g} \circ \operatorname{dev}^{-1}$. Then, $G$ is solvable.
\end{lemma_with_name}

\vspace{1em}

\subsection{SYZ for solvmanifolds}\label{SYZ solvmanifolds}
Let $G$ be as in the previous section, and consider a lattice $\Gamma \subset G$ such that the quotient manifold $B = G/\Gamma$ inherits an integral affine structure. Since $\pi_{1}(B) \cong \Gamma$, the condition that $B$ is integral is equivalent to requiring that $\im(\restr{\rho}{\Gamma})$ be conjugate to a subgroup of $\mathrm{GL}(n, \ZZ)$ within $\mathrm{GL}(n, \RR)$. Here, $\rho$ is the linear part of the affine presetation $\alpha$.

For $B$ taken as this, Bedulli and Vannini in \cite{Bedulli_2024} have proven that the T-dual pair $(X,\mathsf\Omega),~(\breve X,\breve{\omega})$ built under the machinery described in Section \ref{affine} are solvmanifolds, moreover, the following holds:

\begin{theo_with_name}{Bedulli and Vannini \cite{Bedulli_2024}}\label{thm:dedulvan}
	Let $\breve{X} \to B$ be the torus bundle induced by the integral affine structure on $B$, and let $X \to B$ be its dual. These spaces admit the following geometric characterizations:
	\begin{enumerate}
		\item The total space $\breve{X}$ is a homogeneous space under the transitive action of the semi-direct product $G \ltimes_{\rho^{\ast}} (\RR^{n})^{\ast}$, with stabilizer $\Gamma \ltimes_{\rho^{\ast}} (\ZZ^{n})^{\ast}$. Here, $\rho^{\ast}$ denotes the dual representation induced by $\rho$. Furthermore, the canonical symplectic form $\breve{\omega}$ on the cotangent bundle $T^{\ast}G \cong G \ltimes_{\rho^{\ast}} (\RR^{n})^{\ast}$ is left-invariant.
		
		\item Analogously, the total space $X$ is a homogeneous space under the transitive action of $G \ltimes_{\rho} \RR^{n}$, with stabilizer $\Gamma \ltimes_{\rho} \ZZ^{n}$. In this setting, the canonical integrable complex structure $\mathsf{\Omega}$ on the tangent bundle $TG \cong G \ltimes_{\rho} \RR^{n}$ is left-invariant.
	\end{enumerate}
\end{theo_with_name}

As a consequence of Theorem \ref{thm:dedulvan}, the canonical symplectic structure $\breve{\omega}$ on $T^{*}G \cong G \ltimes_{\rho^{*}}(\RR^{n})^{*}$ descends to the quotient $\breve{X}$. Similarly, the canonical integrable complex structure $\mathsf{\Omega}$ on $TG \cong G \ltimes_{\rho} \RR^{n}$ descends to the quotient $X$.

On the complex side $X$, let $\{e_{1}, \ldots, e_{2n}\}$ be a basis of left-invariant $1$-forms on $G \ltimes_{\rho} \RR^{n}$. We choose this basis such that at the identity they satisfy $\restr{e_{j}}{e} = \d\restr{\theta_{j}}{e}$ and $\restr{e_{j+n}}{e} = \d\restr{r_{j}}{e}$, where $\{r_{j}, \theta_{j}\}$ denote the action-angle coordinates in $X$. Recalling that the canonical complex coordinates are $z_j = \theta_j + \mathbf{i} r_j$, the holomorphic $n$-form in this frame is expressed as $\mathsf{\Omega} = \bigwedge_{j=1}^{n}(e_{j} + \mathbf{i}e_{j+n})$. Furthermore, we define a left-invariant $2$-form $\omega := \sum_{j=1}^{n} e_{j} \wedge e_{j+n}$ on $G \ltimes_{\rho} \RR^{n}$. This form also passes to the quotient, equipping $X$ with the triple $(X, \mathsf{\Omega}, \omega)$.

Applying the same procedure to $\breve{X}$, we obtain a basis of left-invariant $1$-forms $\{\breve{e}_{1}, \ldots , \breve{e}_{2n}\}$. The symplectic form is then given by $\breve{\omega} = \sum_{j=1}^{n} \breve{e}_{j} \wedge \breve{e}_{j+n}$. Analogously, we define the $(n,0)$-form $\breve{\mathsf{\Omega}} := \bigwedge_{j=1}^{n}(\breve{e}_{j} + \mathbf{i} \breve{e}_{j+n})$, resulting in the dual triple $(\breve{X}, \breve{\omega}, \breve{\mathsf{\Omega}})$.

The main goal of this section is to prove the following theorem, which provides an explicit algebraic criterion for these dual bundles to form a non-Kähler SYZ mirror pair. 

\vspace{1em}

\begin{theo}\label{mirrorsolvmanifolds}
	Let $G$ be a simply connected Lie group endowed with a complete left-invariant affine structure, and let $\Gamma$ be a lattice such that the induced affine structure on the solvmanifold $G/\Gamma$ is integral. Denote by $\rho$ the linear part of the affine representation. Then, the construction of the dual torus bundles presented in Section~\ref{affine},
	\[
	\xymatrix{
		(X, \mathsf{\Omega}, \omega) \ar[dr]_{\pi} & & (\breve{X}, \breve{\omega}, \breve{\mathsf{\Omega}}) \ar[dl]^{\breve{\pi}} \\
		& G/\Gamma &
	}
	\]
	forms a non-Kähler SYZ mirror pair if and only if
	\[
	(\rho_{*}(X_i))_{i,i} = \frac{1}{2} \sum_{j=1}^n (\rho_{*}(X_j))_{i,j} 
	\]
	for all $1 \le i \le n$.
\end{theo}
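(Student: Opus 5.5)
We reduce everything to left-invariant forms on the two semidirect products and check the supersymmetry conditions of Definitions \ref{def:supersymmetrictypeiiB} and \ref{def:supersymmetrictypeiiA} directly. By Theorem \ref{mirror_Yau}(2), once we know that $\breve{\mathsf{\Omega}}$ coincides with $FT(e^{2\omega})$ (the Fourier--Mukai identity announced as Theorem \ref{Fouriermukaiomega}), it suffices to characterize when $(X,\mathsf{\Omega},\omega)$ is supersymmetric of type IIB. The type IIA property of $\breve X$ then follows automatically. As a consistency check, the same condition should appear on the symplectic side.

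The first step is to write the structure equations of $G\ltimes_\rho\RR^n$. Let $\{\varepsilon^1,\dots,\varepsilon^n\}$ be the left-invariant coframe of $G$ dual to $\{X_i\}$, with $\d\varepsilon^k=-\sum_{i<j}C^k_{ij}\varepsilon^i\wedge\varepsilon^j$; these are the base forms $e_{j+n}$. The fibre directions are the left-invariant forms $e_j$, obtained by twisting $\d\theta$ by $\rho(g)^{-1}$. Differentiating $\rho(g)^{-1}\d\theta$ gives
\[
\d e_j=-\sum_{i,k}(\rho_*(X_i))_{j,k}\,e_{i+n}\wedge e_k .
\]
Affine flatness and torsion-freeness of the left-invariant connection $\nabla_{X_i}X_j=\rho_*(X_i)X_j$ give the relation $\rho_*(X_i)_{\cdot j}-\rho_*(X_j)_{\cdot i}=[X_i,X_j]$, which lets us express $C^k_{ij}$ through $\rho_*$. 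We also use that $\d\mathsf{\Omega}=0$ always holds, since $\mathsf{\Omega}=\d z_1\wedge\dots\wedge\d z_n$ is holomorphic and closed in the affine coordinates, and left-invariant by Theorem \ref{thm:dedulvan}. We will nevertheless verify this in the frame as a sanity check.

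The main step is balancedness. We compute $\d(\omega^{n-1})$ with $\omega^{n-1}=(n-1)!\sum_i\widehat{(e_i\wedge e_{i+n})}$, where the hat denotes the product of all the pairs $e_k\wedge e_{k+n}$ except the $i$-th. A $3$-form term in $\d(e_k\wedge e_{k+n})$ survives the wedge with the remaining $n-2$ pairs only if it has the shape $e_{i+n}\wedge e_i\wedge e_{k+n}$ or $e_i\wedge e_{i+n}\wedge e_k$, i.e., only if it completes the missing pair. Collecting these terms, the coefficient of the $(2n-1)$-form missing $e_i$ (with $e_{i+n}$ present) comes from two sources: the fibre differentials contribute $(\rho_*(X_j))_{i,j}$ summed over $j$, and the base differential $\d e_{i+n}$ contributes a trace of the structure constants. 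This trace is rewritten via torsion-freeness as a combination of $(\rho_*(X_i))_{i,i}$ and $\sum_j(\rho_*(X_j))_{i,j}$. The coefficient of the form missing $e_{i+n}$ should vanish identically, because $\d e_j$ has no $e_k\wedge e_l$ fibre–fibre component. Requiring each coefficient to vanish should then give exactly
\[
2(\rho_*(X_i))_{i,i}=\sum_j(\rho_*(X_j))_{i,j},\qquad 1\le i\le n .
\]
Since all forms are left-invariant and descend to the compact quotient, vanishing on the Lie algebra is equivalent to vanishing on $X$.

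The main obstacle is the index and sign bookkeeping in this step. We must keep track of which index of $\rho_*$ is the row index, where $\rho^*$ versus $\rho$ enters, and the sign conventions for the structure constants. We must also confirm that the base-differential and fibre-differential contributions combine with the factor $\tfrac12$ rather than cancelling. I would first test the computation on a $2$-dimensional almost abelian example to fix the conventions. Finally, for the IIA side we check directly that $\d\breve\omega=0$, which holds because $\breve\omega$ is canonical, and that $\d\bigl(\pi^{n,0}_\Lambda\breve{\mathsf{\Omega}}\bigr)=\d(\breve e_1\wedge\dots\wedge\breve e_n)$ and $\d\bigl(\pi^{1,n-1}_\Lambda\breve{\mathsf{\Omega}}\bigr)$ vanish under the same trace conditions. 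This provides the independent check and recovers the "only if" direction without relying solely on Theorem \ref{mirror_Yau}.
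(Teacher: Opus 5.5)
Your route is the paper's: use $FT(e^{2\omega})=\breve{\mathsf{\Omega}}$ and Theorem \ref{mirror_Yau}(2) to reduce to type IIB on $X$, note that $\d\mathsf{\Omega}=0$ is automatic, and test $\d(\omega^{n-1})=0$ in a left-invariant frame. The gap is the step you only assert, namely that the coefficients ``should give exactly'' $2(\rho_*(X_i))_{i,i}=\sum_j(\rho_*(X_j))_{i,j}$. Your sketch of that step is wrong in three places.

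First, the contributions are misidentified. In your convention, take the $(2n-1)$-form missing the fibre form $e_i$. Each $j\neq i$ contributes:
$(\rho_*(X_j))_{i,j}$ from $\d e_i$;
$-(\rho_*(X_i))_{j,j}$ from $\d e_j$, which you omit;
$-C^j_{i,j}$ from $\d e_{j+n}$.
The term containing $\d e_{i+n}$ is wedged with the missing $e_i$ and drops out, so it is not the source of the structure-constant trace.

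Second, torsion-freeness gives $\sum_j C^j_{i,j}=\mathrm{tr}\,\rho_*(X_i)-\sum_j(\rho_*(X_j))_{j,i}$. This is a full trace plus the transposed sum $\mathrm{tr}(Y\mapsto\nabla_Y X_i)$. No identity isolates the single entry $(\rho_*(X_i))_{i,i}$.

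Third, you never use the two facts that make the coefficient tractable. One is unimodularity, $\mathrm{tr}\,\mathrm{ad}_X=0$, which comes from the lattice. The other is $\mathrm{tr}\,\nabla_X=0$, which follows from unimodularity plus nilpotency of $Y\mapsto\nabla_Y X$ for complete structures, as in the proof of Theorem \ref{Fouriermukaiomega}.

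Done correctly, the coefficient is
\[
\sum_{j\neq i}\Bigl((\rho_*(X_j))_{i,j}-(\rho_*(X_i))_{j,j}-C^j_{i,j}\Bigr)
=\sum_{j=1}^n(\rho_*(X_j))_{i,j}-\mathrm{tr}\,\rho_*(X_i)-\mathrm{tr}\,\mathrm{ad}_{X_i}.
\]
The diagonal entries $(\rho_*(X_i))_{i,i}$ cancel between the first two sums, so your worry about cancellation is exactly what happens. With the two trace identities, balancedness is equivalent to $\sum_j(\rho_*(X_j))_{i,j}=0$ for all $i$, i.e.\ $\sum_j\nabla_{X_j}X_j=0$, not to the $\tfrac12$ relation.

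The paper's own argument reaches the $\tfrac12$ form only through a slip in its Corollary. From $\sum_{j\neq i}\Gamma^i_{j,j}+\Gamma^i_{i,i}=0$ it concludes $\sum_j\Gamma^i_{j,j}=2\Gamma^i_{i,i}$, but the left side already is $\sum_j\Gamma^i_{j,j}$.

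The two conditions genuinely differ. Take $\mathfrak g=\RR^2$ with the complete product $v_1\cdot v_1=v_2$, all other products zero, and the basis $X_1=v_1+v_2$, $X_2=v_1-v_2$. The lattice $\Gamma=2\ZZ v_1\oplus\ZZ v_2$ makes the holonomy integral in these coordinates, and $X$ is a Kodaira--Thurston-type nilmanifold. One gets $\rho_*(X_1)=\rho_*(X_2)=\tfrac12\begin{pmatrix}1&1\\-1&-1\end{pmatrix}$, which satisfies the $\tfrac12$ relation for $i=1,2$. Yet the coefficient above equals $1$ for $i=1$, so $\d\omega\neq0$, as it must be since $X$ is non-K\"ahler.

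So no correct bookkeeping yields the statement as written. What your method proves is the criterion $\sum_j(\rho_*(X_j))_{i,j}=0$. Your planned $2$-dimensional almost abelian test would not catch this, because there $\mathrm{tr}A=0$ forces $A=0$.
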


\vspace{1em}

The strategy for the proof is to verify the hypotheses of Theorem \ref{mirror_Yau}. This verification produces results of independent interest, which we establish in Theorems~\ref{Fouriermukaiomega} and \ref{balancedomega}. Later in this section, we present Examples \ref{almostabelianmirror}, \ref{ex:heisnberg} as a showcase for Theorem \ref{mirrorsolvmanifolds}.

\vspace{1em}

\begin{theo}\label{Fouriermukaiomega}
	Let $(X, \mathsf{\Omega}, \omega)$ and $(\breve{X}, \breve{\omega}, \breve{\mathsf{\Omega}})$ be the T-dual pair defined by the data $(G, \Gamma, \operatorname{dev})$. The Fourier-Mukai transform of the exponentiated symplectic form satisfies
	\[
    FT(e^{2\omega}) = \breve{\mathsf{\Omega}}.
    \]
\end{theo}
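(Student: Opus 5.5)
We compute $FT(e^{2\omega})$ pointwise in the left-invariant frames, since every ingredient of the statement is left-invariant (hence basic). We first express the frames in action-angle coordinates. Because the affine structure is left-invariant with linear part $\rho$, the left-invariant frames on $G\ltimes_\rho\RR^n$ are obtained from the coordinate coframe by applying $\rho$. Concretely, at a point over $r\in B$ we expect
\[
e_j = \sum_k \rho(g)^{-1}_{jk}\,\d\theta_k, \qquad e_{j+n} = \sum_k M_{jk}(r)\,\d r_k ,
\]
where $M$ is the matrix expressing the left-invariant coframe of $G$ in affine coordinates. Dually, on $G\ltimes_{\rho^*}(\RR^n)^*$ we expect $\breve e_j = \sum_k \rho(g)^{T}_{jk}\,\d\breve\theta_k$ and $\breve e_{j+n}=e_{j+n}$ on the base directions. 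The first step is to fix these conventions carefully. The key point to check is that the fiber parts transform by $\rho^{-1}$ and $\rho^{T}$, which are contragredient, so that the pairing $\sum_j \d\theta_j\wedge\d\breve\theta_j$ appearing in $\Theta$ is invariant. Note also that at the identity both frames reduce to $\d\theta_j,\d r_j$ and $\d\breve\theta_j,\d r_j$.

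Next we write $\omega=\sum_j e_j\wedge e_{j+n}=\sum_{k,l} h_{kl}(r)\,\d\theta_k\wedge\d r_l$ for the basic coefficient matrix $h=\rho^{-T}M$ (up to transposition). We then apply $\mathsf P$; because $\omega$ is a real $(1,1)$-form, we must first rewrite it in $\d z,\d\bar z$ and then apply $\mathsf P$. We follow the Lau--Tseng--Yau computation. For a torus-invariant form $\omega=\tfrac{\mathbf i}{2}\sum \mathsf g_{kl}\,\d z_k\wedge\d\bar z_l$ one gets
\[
\mathsf P(e^{2\omega})=\exp\Big(\mathbf i\sum_{k,l}\mathsf g_{kl}\,\d\theta_k\wedge\d r_l\Big).
\]
Wedging with $\exp(\Theta/2\mathbf i)=\exp(\sum_j\d\theta_j\wedge\d\breve\theta_j)$ and integrating over the $\theta$-torus picks out the top fiber degree. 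The standard Gaussian/Berezin computation then gives
\[
FT(e^{2\omega})=\bigwedge_{k}\Big(\d\breve\theta_k+\mathbf i\sum_l \mathsf g_{kl}\,\d r_l\Big),
\]
up to an overall sign and normalization, which we fix by the orientation convention of the fiber integration (this matches the formula used in the proof of Proposition~\ref{prop:dHYM}).

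The last step is to identify $\bigwedge_k(\d\breve\theta_k+\mathbf i\sum_l \mathsf g_{kl}\,\d r_l)$ with $\bigwedge_j(\breve e_j+\mathbf i\breve e_{j+n})$. Substituting the frame expressions, we get $\breve e_j+\mathbf i\breve e_{j+n}=\sum_k \rho^T_{jk}\big(\d\breve\theta_k+\mathbf i\sum_l(\rho^{-T}M)_{kl}\,\d r_l\big)$. This is exactly a linear combination, with matrix $\rho^T$, of the factors above with $\mathsf g=\rho^{-T}M$. The wedge product therefore equals $\det\rho(g)$ times $FT(e^{2\omega})$. So the remaining point is to show $\det\rho(g)=1$, or that it is absorbed by the matching determinant coming from the $\theta$-integration: integrating over the fiber with respect to the lattice $\rho(g)\ZZ^n$ in the left-invariant frame contributes $\det\rho(g)^{-1}$, which cancels it. We expect this bookkeeping of the determinant factors and normalization constants to be the main obstacle. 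We will handle it by doing the whole computation in the left-invariant frames from the start: express $\Theta$ as $2\mathbf i\sum_j e_j\wedge\breve e_j$, which is possible by contragredience, and integrate against the invariant fiber volume $e_1\wedge\dots\wedge e_n$. This reduces everything to the computation at the identity, where the frames are the coordinate coframes and the identity is immediate.
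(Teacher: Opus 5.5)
Your frame computation is correct up to $\breve{\mathsf{\Omega}}=\det\rho(g)\cdot FT(e^{2\omega})$. This is exactly the paper's intermediate identity $FT(e^{2\omega})=\det(\Phi)\,\breve{\mathsf{\Omega}}$ with $\Phi=\rho^{-1}$. In fact $M=\rho(g)^{-1}$: since $L_g$ acts on affine coordinates by $\alpha(g)$, the form $\rho(g)^{-1}\d r$ is left-invariant.

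\textbf{The gap.} The gap is your claim that this factor is ``absorbed'' by the fiber integration. That is double counting. Your first computation already integrated honestly over $\pi^{-1}(b)=\RR^n/\ZZ^n$ in the $\theta$-coordinates, and the lattice there is $\ZZ^n$ by construction of $X=TB/\Lambda$. So no further factor is available.

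Redoing the computation in invariant frames gives the same answer, not a cancellation. Since $e_1\wedge\dots\wedge e_n=\det\rho(g)^{-1}\,\d\theta_1\wedge\dots\wedge\d\theta_n$, the fiber integral of the invariant volume is $\det\rho(g)^{-1}$. Pushing forward the constant-coefficient integrand therefore yields $\det\rho(g)^{-1}\,\breve{\mathsf{\Omega}}$. This is the same factor seen from the other side, not a second one.

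Your ``reduction to the identity'' would need $FT$ to commute with left translations. But $L_g$ acts on the fibers by $\rho(g)$, which does not preserve the fiber lattice: left translations are affine, not integral affine. So fiber integration is not $G$-equivariant.

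A sanity check shows something must be missing, since your argument never uses completeness or the lattice. Take the incomplete structure $\operatorname{dev}(x)=e^{x}-1$ on $G=\RR$, where $\rho(x)=e^{x}$. There, in your conventions, $\omega=e^{-2x}\,\d\theta\wedge\d r$ and $\breve{\mathsf{\Omega}}=e^{x}\bigl(\d\breve\theta+\mathbf{i}e^{-2x}\d r\bigr)=\pm e^{x}\,FT(e^{2\omega})$. That is a non-constant discrepancy.

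\textbf{The missing ingredient.} What you need is the Lie-theoretic fact $\det\rho\equiv 1$. This is precisely where the hypotheses enter, and it is the second half of the paper's proof.
\begin{enumerate}
\item One has $\rho_*(X)=\nabla_X$, i.e.\ left multiplication in the associated left-symmetric algebra.
\item By completeness and Segal's result, the right multiplication $Y\mapsto\nabla_Y X$ is nilpotent, hence traceless.
\item The existence of a lattice forces $G$ to be unimodular, so $\mathrm{tr}\,\mathrm{ad}_X=0$.
\item Torsion-freeness gives $\mathrm{ad}_X=\nabla_X-(Y\mapsto\nabla_Y X)$.
\end{enumerate}
Hence $\mathrm{tr}\,\rho_*(X)=0$ for all $X\in\mathfrak g$, and since $G$ is connected, $\det\rho(g)=1$ for every $g$. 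With this lemma in place, your first route (the coordinate computation) closes the proof exactly as in the paper.
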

\begin{proof}
	We begin by establishing a basis of left-invariant $1$-forms on the semi-direct product $G \ltimes_{\rho} \RR^{n}$. For each $1 \le i \le n$ and any $(g,h) \in G \ltimes_{\rho} \RR^{n}$, we define:
	\[
	\restr{e_{i}}{(g,h)} := L^{*}_{(g,h)^{-1}} \left( \restr{\d\theta_i}{e} \right) \quad \text{and} \quad \restr{e_{i+n}}{(g,h)} := L^{*}_{(g,h)^{-1}} \left( \restr{\d r_i}{e} \right).
	\]
	Representing the pullback $L^{*}_{(g,h)^{-1}}$ in the coordinate basis $\{\d r_j, \d\theta_j\}$, we have:
	\[
	\restr{e_i}{(g,h)} = \sum_{j=1}^{n} \Phi_{i,j}(g) \restr{\d\theta_j}{(g,h)}, \quad  \restr{e_{i+n}}{(g,h)} = \sum_{j=1}^{n} \Phi_{i,j}(g)  \restr{\d r_{j}}{(g,h)}.
	\]
	Analogously, for the dual side $G \ltimes_{\rho^{*}} \RR^{n}$, the corresponding basis $\{\breve{e}_i, \breve{e}_{i+n}\}$ satisfies:
	\[
	\restr{\breve{e}_i}{(g,\tilde{h})} = \sum_{j=1}^{n} \Phi_{i,j}(g)  \restr{\d r_j}{(g,\tilde{h})}, \quad \restr{\breve{e}_{i+n}}{(g,\tilde{h})} = \sum_{j=1}^{n} (\Phi^{-1})_{j,i}(g) \restr{\d\breve{\theta}_{j}}{(g, \tilde{h})},
	\]
	for $g \in G$ and $(g,\tilde{h}) \in G \ltimes_{\rho^{*}} \RR^{n}$. 
	
	We can rewrite the symplectic form $\omega = \sum e_i \wedge e_{i+n}$ on the base as:
	\[
	\omega = \sum^{n}_{j,k=1} (\Phi^{t} \Phi)_{k,j} \d\theta_j \wedge \d r_{k}.
	\]
	Similarly, the holomorphic volume form on the dual side can be expanded as:
	\[
	\breve{\mathsf{\Omega}} = \bigwedge_{i=1}^{n}(\breve{e}_{i} + \mathbf{i} \breve{e}_{i+n}) = \bigwedge_{i=1}^{n} \left( [\Phi \d r]_{i} + \mathbf{i} [(\Phi^{-1})^{t} \d\breve{\theta}]_{i} \right).
	\]
	Factoring out the determinant of the transformation, we obtain:
	\[
	\breve{\mathsf{\Omega}} = \det(\Phi)^{-1} \bigwedge_{i=1}^{n} \left( \d\breve{\theta}_{i} - \mathbf{i} (\Phi^{t} \Phi \d r)_{i} \right).
	\]
	Applying the Fourier-Mukai transform to $e^{2\omega}$ involves a fiberwise integration that identifies the $\d\theta$ components with the dual $\d\breve{\theta}$ coordinates. The resulting transformed form is:
	\[
	FT(e^{2\omega}) = \bigwedge_{i=1}^{n} \left( \d\breve{\theta}_i - \mathbf{i} (\Phi^{t} \Phi \d r)_{i} \right).
	\]
	Comparing the last two expressions, we conclude:
	\[
	FT(e^{2\omega}) = \det(\Phi) \cdot \breve{\mathsf{\Omega}}.
	\]
	We finish the proof by showing that $\det(\Phi) \equiv 1$.

    \vspace{1em}
	
	Fix $g \in G$ and let $L_{g}$ be left multiplication in the affine coordinates. Set $\tilde{L}_{g}$ to be the same operator but written in standard coordinates. From the commutative diagram \eqref{eq:diagram-dev}, it is straightforward that:
	\[
	\alpha(g) \circ \operatorname{dev} = \operatorname{dev} \circ \tilde{L}_{g}.
	\]
	This defines the affine representation of $G$. Differentiating this yields $\d L_{g} = \rho(g)$, where $\rho$ is the linear part of the affine representation.
	
	Let $g = \exp(X_1)\cdot \ldots \cdot \exp(X_m) \in G$, where $X_1,\ldots, X_m \in \mathfrak{g}$. Then:
	\[
	\rho(g) = \rho(\exp(X_1)) \cdot \ldots \cdot \rho(\exp(X_m)) = \exp(\rho_{*}(X_1)) \cdot \ldots \cdot \exp(\rho_{*}(X_m)),
	\]
	from which it follows that:
	\[
	\rho_{*}(X_i) = \d \rho_{e}(X_i) = \restr{\frac{\d}{\d t}}{t=0} \rho(\exp(tX_i))  = \restr{\frac{\d}{\d t}}{t=0} \d L_{\exp(tX_i)} = \nabla_{X_i}.
	\]
	Here, $\nabla$ represents the affine connection, meaning it is a flat, torsion-free affine connection.
	
	According to \cite{Segal1991}, because $G$ has a complete left-invariant affine structure, the operator
	\begin{align}
		\mathfrak{R}_{X} \colon \mathfrak{g} &\longrightarrow \mathfrak{g} \\
		Y & \longmapsto \nabla_{Y}X
	\end{align}
	is a nilpotent endomorphism for all $X \in \mathfrak{g}$. Thus, its eigenvalues are identically zero, implying $\mathrm{tr}(\mathfrak{R}_{X}) = 0$ for all $X \in \mathfrak{g}$. 
	
	Because the existence of a lattice implies that $G$ is unimodular, the trace of the adjoint representation vanishes: $\mathrm{tr}(\mathrm{ad}_{X}) = \mathrm{tr}(Y \mapsto [X,Y]) = 0$ for all $X \in \mathfrak{g}$. Combining this with the torsion-free symmetry of the affine connection ($\nabla_X Y - \nabla_Y X = [X,Y]$), we obtain:
	\[
	0 = \mathrm{tr}(Y \mapsto [X,Y]) = \mathrm{tr}(Y \mapsto \nabla_{X}Y) - \underbrace{\mathrm{tr}(Y \mapsto \nabla_{Y}X)}_{\mathrm{tr}(\mathfrak{R}_X)=0}.
	\]
	Thus, $\mathrm{tr}(Y \mapsto \nabla_{X} Y) = 0$ for all $X \in \mathfrak{g}$. Consequently, for every $g \in G$, we have:
	\[
	\det(\rho(g)) =  \exp(\mathrm{tr}(\nabla_{X_1})) \cdot \ldots \cdot \exp(\mathrm{tr}(\nabla_{X_m})) = 1 \implies \det(\Phi(g)) = 1. 
	\]
\end{proof}

\vspace{1em}

Theorem \ref{balancedomega} below furnishes a characterization for the $2$-form $\omega$ to be balanced purely in terms of Lie-theoretic data and the linear part $\rho$ of the affine representation. Before we do so, let us prove the following algebraic lemma. While likely well-known, we include it for completeness and to fix our notation.

\begin{lemma}\label{lemmaliealgebra}
	Let $G \ltimes_{\rho} \RR^{n}$ be a semi-direct product Lie group, where $\dim(G) = n$. Denote by $\mathfrak{g}$ the Lie algebra of $G$ and fix a basis $X_1,\dots,X_n$ of $\mathfrak{g}$ with structure constants $C^{k}_{i,j}$, i.e., $[X_i,X_j] = \sum_{k=1}^n C^{k}_{i,j} X_k$. Let $Y_1,\dots,Y_n$ be the standard basis of $\RR^n$. Then the Lie algebra of $G \ltimes_\rho \RR^n$ is $\mathfrak{h} = \mathfrak{g} \ltimes_{\rho_{\ast}} \RR^n$, with basis $\{S_1,\dots,S_{2n}\} := \{X_1,\dots,X_n,Y_1,\dots,Y_n\}$. If $c^{p}_{q,r}$ (with $1\le p,q,r\le 2n$) denotes the structure constants of $\mathfrak{h}$ with respect to this basis (so that $[S_q,S_r] = \sum_p c^{p}_{q,r} S_p$), then:
	\[
	c^{p}_{q,r} =
	\begin{cases}
		C^{p}_{q,r}, & 1\le p,q,r\le n,\\[4pt]
		(\rho_{\ast}(X_q))_{p-n,\,r-n}, & 1\le q\le n,\; n+1\le p,r\le 2n,\\[4pt]
		-(\rho_{\ast}(X_r))_{p-n,\,q-n}, & n+1\le q\le 2n,\; 1\le r\le n,\; n+1\le p\le 2n,\\[4pt]
		0, & \text{otherwise}.
	\end{cases}
	\]
\end{lemma}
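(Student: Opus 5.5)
The plan is to compute the bracket of the semi-direct product Lie algebra $\mathfrak{h} = \mathfrak{g} \ltimes_{\rho_*} \RR^n$ directly and read off the coefficients in the basis $\{S_1,\dots,S_{2n}\}$. First we recall why $\mathrm{Lie}(G \ltimes_\rho \RR^n) = \mathfrak{g} \ltimes_{\rho_*} \RR^n$. The group law is $(g,h)(g',h') = (gg', h + \rho(g)h')$. Differentiating the conjugation $(g,0)(e,h)(g,0)^{-1} = (e,\rho(g)h)$ at $g = e$ gives $[X, Y] = \rho_*(X)Y$ for $X \in \mathfrak{g}$ and $Y \in \RR^n$. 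Since $\mathfrak{g} \times \{0\}$ and $\{e\}\times\RR^n$ are subgroups, the restriction of the bracket to $\mathfrak{g}$ is the bracket of $\mathfrak{g}$. Because $\RR^n$ is abelian, the bracket vanishes on $\RR^n \times \RR^n$. This is the standard description, and we will cite it as such.

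With this description, we split into cases according to where $q$ and $r$ lie.

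\emph{Case $1 \le q, r \le n$.} Here $[S_q,S_r] = [X_q,X_r] = \sum_k C^k_{q,r} X_k$. This gives $c^p_{q,r} = C^p_{q,r}$ for $p \le n$, and $c^p_{q,r} = 0$ for $p > n$.

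\emph{Case $q \le n < r$.} Here $[S_q,S_r] = \rho_*(X_q) Y_{r-n} = \sum_{m} (\rho_*(X_q))_{m,\,r-n} Y_m$, since column $r-n$ of the matrix is the image of the standard vector. Setting $p = m + n$ gives the second line of the formula, and the coefficient is zero for $p \le n$.

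\emph{Case $q > n \ge r$.} Antisymmetry gives $[S_q,S_r] = -\rho_*(X_r)Y_{q-n}$, which yields the third line.

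\emph{Case $q, r > n$.} The bracket vanishes.

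Together these cases exhaust the "otherwise" clause.

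No real obstacle is expected. The only care needed is in the index conventions: we must make sure that the matrix entry $(\rho_*(X))_{a,b}$ denotes row $a$ and column $b$, which is consistent with $\rho_*(X) Y_b = \sum_a (\rho_*(X))_{a,b} Y_a$. We should also check the sign convention of the Lie bracket from conjugation, namely that $\operatorname{ad}$ is the derivative of $\operatorname{Ad}$. This ensures that the mixed bracket comes out as $+\rho_*(X_q)$ and not its negative.
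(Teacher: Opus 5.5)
Your proposal is correct and follows the paper's proof essentially verbatim. The paper also writes down the semidirect-product bracket $[(X,Y),(X',Y')] = ([X,X']_{\mathfrak{g}}, \rho_*(X)Y' - \rho_*(X')Y)$ and reads off the coefficients case by case, using the same antisymmetry step for $q>n\ge r$. Your derivation of the mixed bracket by differentiating $(g,0)(e,h)(g,0)^{-1}=(e,\rho(g)h)$ is a small extra justification that the paper omits. One minor slip: the subgroup you mean there is $G\times\{0\}$, not $\mathfrak{g}\times\{0\}$.
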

\begin{proof}
	The Lie bracket in $\mathfrak{h} = \mathfrak{g} \ltimes_{\rho_{\ast}} \RR^n$ is given by:
	\[
	[(X,Y),\,(X',Y')] = \bigl([X,X']_{\mathfrak{g}},\; \rho_{\ast}(X)Y' - \rho_{\ast}(X')Y \bigr),
	\]
	for $X,X'\in\mathfrak{g}$ and $Y,Y'\in\RR^n$. In the combined basis, we compute:
	\begin{itemize}
		\item For $1\le i,j,k\le n$: $[X_j,X_k] = \sum_{i=1}^n C^{i}_{j,k} X_i$, and hence $c^{i}_{j,k} = C^{i}_{j,k}$ and $c^{p}_{j,k} = 0$ for $p>n$.
		\item For $1\le j\le n$ and $1\le a\le n$ (so $q=j$, $r=n+a$):
		\[
		[X_j, Y_a] = \rho_{\ast}(X_j)Y_a = \sum_{b=1}^n (\rho_{\ast}(X_j))_{b,a} Y_b.
		\]
		Thus, $c^{\,n+b}_{\,j,\,n+a} = (\rho_{\ast}(X_j))_{b,a}$ for $1\le b\le n$.
		\item By antisymmetry, for $1\le a\le n$ and $1\le j\le n$:
		$[Y_a, X_j] = -[X_j,Y_a] = -\sum_{b=1}^n (\rho_{\ast}(X_j))_{b,a} Y_b$, yielding $c^{\,n+b}_{\,n+a,\,j} = -(\rho_{\ast}(X_j))_{b,a}$.
		\item For $1\le a,b\le n$: $[Y_a,Y_b]=0$, meaning all structure constants with both lower indices in the translation part identically vanish. \qedhere
	\end{itemize}
\end{proof}

\vspace{1em}

\begin{theo}\label{balancedomega}
	Consider a semi-direct product Lie group $G \ltimes_{\rho} \RR^{n}$ with $\dim(G)=n$, and let $\{e_1,\dots,e_{2n}\}$ be a basis of left-invariant $1$-forms dual to the basis $\{X_1,\dots,X_n,Y_1,\dots,Y_n\}$ of its Lie algebra. Define the $2$-form:
	\[
	\omega = \sum_{k=1}^{n} e_k \wedge e_{k+n}.
	\]
	Let $C^{k}_{i,j}$ be the structure constants of $\mathfrak{g}$, and let $\rho_{\ast} \colon \mathfrak{g} \to \mathfrak{gl}(n,\RR)$ be the derivative of the representation $\rho$. Then $\omega$ is balanced (i.e., $\d(\omega^{n-1})=0$) if and only if for every $i=1,\dots,n$:
	\[
	\sum_{\substack{j=1 \\ j\neq i}}^n \Bigl( (\rho_{\ast}(X_j))_{i,j} - (\rho_{\ast}(X_i))_{j,j} - C^{j}_{i,j} \Bigr) = 0.
	\]
\end{theo}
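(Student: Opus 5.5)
The plan is to compute $\d(\omega^{n-1})$ directly in the left-invariant coframe $\{e_1,\dots,e_{2n}\}$. We use the Maurer--Cartan equations
\[
\d e_p=-\sum_{q<r}c^{p}_{q,r}\,e_q\wedge e_r,
\]
with the structure constants $c^p_{q,r}$ read off from Lemma \ref{lemmaliealgebra}.

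\textbf{Step 1: expand $\omega^{n-1}$.} Since the $2$-forms $e_k\wedge e_{k+n}$ commute and square to zero, we have
\[
\omega^{n-1}=(n-1)!\sum_{i=1}^n\sigma_i,\qquad \sigma_i:=\bigwedge_{k\neq i}e_k\wedge e_{k+n}.
\]
Thus $\sigma_i$ is, up to sign, $\iota_{S_{i+n}}\iota_{S_i}$ of the volume form. Then $\d(\omega^{n-1})$ is a left-invariant $(2n-1)$-form, so it is determined by its $2n$ coefficients along the forms $\mathrm{vol}_{\hat p}$ obtained by omitting one $e_p$. Balancedness is equivalent to all these coefficients vanishing. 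We organize the computation by fixing $i$ and computing the coefficient of $\mathrm{vol}_{\widehat{i+n}}$ and of $\mathrm{vol}_{\hat i}$.

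\textbf{Step 2: the coefficients along $\mathrm{vol}_{\hat i}$ vanish.} A term of $\d\sigma_j$ lands in $\mathrm{vol}_{\hat i}$ only if it replaces some $e_p$ of $\sigma_j$ by a wedge $e_q\wedge e_r$ supplying exactly the missing indices. For $j=i$ we must add $e_{i+n}$, which needs a bracket $c^{p}_{i+n,\,\cdot}$ whose other lower index is the one being replaced. By Lemma \ref{lemmaliealgebra} this means $c^{k}_{i+n,k}$ or $c^{k+n}_{i+n,k+n}$, and both are zero. For $j\neq i$ we would need $c^{i}_{j,j+n}$ with $i\le n$, which again vanishes. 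So these components impose no condition.

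\textbf{Step 3: the coefficients along $\mathrm{vol}_{\widehat{i+n}}$.} There are two sources.

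\begin{itemize}
\item[(a)] From $\sigma_i$ itself, where $e_i$ must be inserted in place of some $e_k$ or $e_{k+n}$ ($k\neq i$). This gives the constants
\[
c^{k}_{i,k}=C^{k}_{i,k}\quad\text{and}\quad c^{k+n}_{i,k+n}=(\rho_*(X_i))_{k,k}.
\]
\item[(b)] From $\sigma_j$ with $j\neq i$, where $\d e_{i+n}$ must produce $e_j\wedge e_{j+n}$. This gives
\[
c^{i+n}_{j,j+n}=(\rho_*(X_j))_{i,j}.
\]
\end{itemize}

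Collecting these, the coefficient is $\pm\sum_{j\neq i}\bigl((\rho_*(X_j))_{i,j}-(\rho_*(X_i))_{j,j}-C^j_{i,j}\bigr)$. Its vanishing for each $i$ is the stated criterion.

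\textbf{Main obstacle: the signs.} The main difficulty is bookkeeping the relative signs among the three contributions. These come from the $-$ in Maurer--Cartan, from the antisymmetry $c^{p}_{q,r}=-c^p_{r,q}$, and from moving the new $e_i$ (respectively $e_j\wedge e_{j+n}$) into canonical position. I would fix the ordering $\mathrm{vol}=\bigwedge_k e_k\wedge e_{k+n}$, in which $2$-blocks commute freely, to reduce sign tracking to a single transposition inside one block.

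As a consistency check I would test the signs on $n=2$ examples, for instance $\mathfrak g$ abelian with a nonzero $\rho_*$. In particular, the contributions from (a) and from (b) must come with opposite signs relative to each other, which is what the stated formula requires.
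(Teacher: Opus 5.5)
Your proposal is correct and is essentially the paper's argument. Both are direct Maurer--Cartan computations in the left-invariant coframe using Lemma~\ref{lemmaliealgebra}; the paper passes through $(n-1)\,\d\omega\wedge\omega^{n-2}$, where your Step 2 is automatic because every term of $\d\omega$ has two base legs and one fiber leg, while you differentiate each $\sigma_i$ directly.

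The signs you defer come out as you predict, and you do not need to read them off the target formula. The part of $\d(e_k\wedge e_{k+n})$ that survives against $\bigwedge_{l\neq i,k}e_l\wedge e_{l+n}$ is $-\bigl(C^k_{i,k}+(\rho_*(X_i))_{k,k}\bigr)\,e_i\wedge e_k\wedge e_{k+n}+(\rho_*(X_i))_{k,i}\,e_k\wedge e_i\wedge e_{i+n}$. In your block ordering these two terms complete to $+\mathrm{vol}_{\widehat{i+n}}$ and $+\mathrm{vol}_{\widehat{k+n}}$ respectively, which gives exactly the stated coefficient up to the factor $(n-1)!$.
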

\begin{proof}
	The Maurer-Cartan equations give $\d e_p = -\frac12\sum_{q,r} c^{p}_{q,r} e_q\wedge e_r$. Using the explicit constants derived in Lemma~\ref{lemmaliealgebra}, we obtain:
	\begin{align}
	    \d e_i &= -\sum_{1\le j<k\le n} C^{i}_{j,k}\, e_j\wedge e_k, &(1\le i\le n), \\
	    \d e_{i+n} &= -\sum_{j,k=1}^{n} (\rho_{\ast}(X_j))_{i,k}\, e_j\wedge e_{k+n}, &(1\le i\le n).
	\end{align}
	
	Since $\omega = \sum_{k=1}^n e_k\wedge e_{k+n}$, its exterior derivative expands as:
	\[
	\d\omega = \sum_{k=1}^n \bigl( \d e_k\wedge e_{k+n} - e_k\wedge \d e_{k+n} \bigr).
	\]
	Substituting the differentials:
	\[
	\d\omega = -\sum_{k=1}^n\sum_{1\le i<j\le n} C^{k}_{i,j}\, e_i\wedge e_j\wedge e_{k+n} + \sum_{k=1}^n\sum_{i,j=1}^n (\rho_{\ast}(X_j))_{i,k}\, e_i\wedge e_j\wedge e_{k+n}.
	\]
	In the second sum, we utilize antisymmetry to rewrite $\sum_{i,j} (\rho_{\ast}(X_j))_{i,k} e_i\wedge e_j = \sum_{1\le i<j\le n} \bigl( (\rho_{\ast}(X_j))_{i,k} - (\rho_{\ast}(X_i))_{j,k} \bigr) e_i\wedge e_j$. Therefore:
	\begin{equation}\label{eq:domega}
		\d\omega = \sum_{k=1}^n\sum_{1\le i<j\le n} \Bigl( (\rho_{\ast}(X_j))_{i,k} - (\rho_{\ast}(X_i))_{j,k} - C^{k}_{i,j} \Bigr) e_i\wedge e_j\wedge e_{k+n}. 
	\end{equation}
	
	Next, $\omega^{n-2}$ is given by the standard expansion of a power of a sum of $n$ commuting terms:
	\begin{equation}\label{eq:omegan-2}
		\omega^{n-2} = (n-2)! \sum_{1\le i_1<\cdots<i_{n-2}\le n} e_{i_1}\wedge e_{i_1+n}\wedge\cdots\wedge e_{i_{n-2}}\wedge e_{i_{n-2}+n}. 
	\end{equation}
	
	Now, $\d(\omega^{n-1}) = (n-1)\,\d\omega\wedge\omega^{n-2}$. Wedging Equations \eqref{eq:domega} and \eqref{eq:omegan-2} yields a $(2n-1)$-form. For a fixed $r\in\{1,\dots,n\}$, consider the standard basis $(2n-1)$-form with the $r$-th translation differential omitted:
	\[
	\Phi_r = e_1\wedge e_{n+1}\wedge \cdots\wedge e_r\wedge \widehat{e_{n+r}}\wedge\cdots\wedge e_n\wedge e_{2n}.
	\]
	In the wedge product, a non-zero contribution to $\Phi_r$ occurs precisely when we select a term from $\d\omega$ with indices $\{i,j,k\}$ such that $\{i,j,k\} \cup \{i_1,\dots,i_{n-2}\} = \{1,\dots,n\}$ and all $n-1$ translation indices are distinct. This mathematically forces $k$ to equal either $i$ or $j$, and the set $\{i_1,\dots,i_{n-2}\}$ must be $\{1,\dots,n\}\setminus\{r,s\}$ where $s$ is the index among $i,j$ that matches $k$. A careful parity analysis of the permutation sign shows that the coefficient of $\Phi_r$ in $(n-1)\d\omega\wedge\omega^{n-2}$ is, up to an overall non-zero constant:
	\[
	\sum_{\substack{s=1\\ s\neq r}}^n \Bigl( (\rho_{\ast}(X_s))_{r,s} - (\rho_{\ast}(X_r))_{s,s} - C^{s}_{r,s} \Bigr).
	\]
	Hence:
	\[
	\frac{1}{(n-1)(n-2)!}\,\d(\omega^{n-1}) = \sum_{r=1}^n \left( \sum_{s\neq r} \bigl( (\rho_{\ast}(X_s))_{r,s} - (\rho_{\ast}(X_r))_{s,s} - C^{s}_{r,s} \bigr) \right) \Phi_r.
	\]
	Thus, $\d(\omega^{n-1})=0$ if and only if for every $r=1,\dots,n$:
	\[
	\sum_{s\neq r} \Bigl( (\rho_{\ast}(X_s))_{r,s} - (\rho_{\ast}(X_r))_{s,s} - C^{s}_{r,s} \Bigr) = 0. 
	\]
	Renaming the dummy indices $r=i$ and $s=j$ concludes the proof.
\end{proof}

\vspace{1em}

\begin{cor}
	Let $G$ be a Lie group with a complete left-invariant affine structure and a lattice $\Gamma$ that makes the induced affine structure on $G/\Gamma$ integral. Denote by $\rho$ the linear part of the affine representation. Consider the semi-direct product Lie group $G \ltimes_{\rho} \RR^{n}$ with $\dim(G)=n$, and let $\{e_1,\dots,e_{2n}\}$ be the standard basis of left-invariant $1$-forms. The $2$-form
	\[
	\omega = \sum_{k=1}^{n} e_k \wedge e_{k+n}
	\]
	is balanced if and only if
	\[
	(\rho_{*}(X_i))_{i,i} = \frac{1}{2} \sum_{j=1}^n (\rho_{*}(X_j))_{i,j} \quad \text{for all } 1 \le i \le n.
	\]
	Equivalently, if $\nabla$ denotes the affine connection described by the Christoffel symbols $\Gamma_{j,k}^i$, then
	\[
	\Gamma^{i}_{i,i} = \frac{1}{2} \sum_{j=1}^n \Gamma^{i}_{j,j} \quad \text{for all } 1 \le i \le n.
	\]
\end{cor}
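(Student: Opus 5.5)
The plan is to specialize the balancedness criterion of Theorem~\ref{balancedomega} to the case where $\rho$ is the linear part of a complete affine structure on a group with a lattice. In that case the three families of terms in
\[
\sum_{j\neq i}\Bigl((\rho_*(X_j))_{i,j}-(\rho_*(X_i))_{j,j}-C^j_{i,j}\Bigr)
\]
can be rewritten using the trace identities already established in the proof of Theorem~\ref{Fouriermukaiomega}. First, identify the translation basis $Y_1,\dots,Y_n$ of $\RR^n$ with $X_1,\dots,X_n$ via $\d\operatorname{dev}_e$. Under this identification, $\rho_*(X)Y=\nabla_XY$, so $(\rho_*(X_j))_{i,k}=\Gamma^i_{j,k}$ in the left-invariant frame. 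This identification also immediately gives the ``equivalently'' reformulation in terms of Christoffel symbols.

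Next, I will evaluate the three sums over $j\neq i$ one at a time.
\begin{enumerate}
\item The structure-constant sum: $\sum_{j\neq i}C^j_{i,j}=\sum_jC^j_{i,j}=\mathrm{tr}(\mathrm{ad}_{X_i})$. The $j=i$ term vanishes by antisymmetry, and the full trace vanishes because $G$ is unimodular, which holds since $G$ admits a lattice.
\item The diagonal sum: $\sum_{j\neq i}(\rho_*(X_i))_{j,j}=\mathrm{tr}\,\rho_*(X_i)-(\rho_*(X_i))_{i,i}$. The trace is zero because, as shown in the proof of Theorem~\ref{Fouriermukaiomega}, $\mathrm{tr}(Y\mapsto\nabla_XY)=0$; this combines Segal's nilpotency of $\mathfrak R_X$ with unimodularity.
\item The remaining sum $\sum_{j\neq i}(\rho_*(X_j))_{i,j}$ must be expressed through the full sum $\sum_j(\rho_*(X_j))_{i,j}$ minus its diagonal term. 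To do this I would use the torsion-free identity $\Gamma^i_{j,k}-\Gamma^i_{k,j}=C^i_{j,k}$ together with the nilpotency relation $\mathrm{tr}\,\mathfrak R_{X}=\sum_j\Gamma^j_{j,k}=0$. The aim is to reorganize the diagonal entries $(\rho_*(X_i))_{i,i}$ so that they appear with coefficient $2$ against the full row-sum $\sum_j(\rho_*(X_j))_{i,j}$.
\end{enumerate}
Substituting these three evaluations into the criterion of Theorem~\ref{balancedomega} should collapse it to a relation between $(\rho_*(X_i))_{i,i}$ and $\sum_j(\rho_*(X_j))_{i,j}$. Comparing with the claimed identity should then give the statement.

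The main obstacle is bookkeeping of signs and index conventions. A naive substitution of (1) and (2) alone yields $\sum_{j\neq i}(\rho_*(X_j))_{i,j}=-(\rho_*(X_i))_{i,i}$. The stated form instead requires $+(\rho_*(X_i))_{i,i}$, so the overall coefficient of the diagonal term must be tracked carefully. I would check this against the sign conventions in the Maurer--Cartan equation of Lemma~\ref{lemmaliealgebra}, namely whether $\rho_*$ acts on $\RR^n$ or, via $\rho^*$, on its dual, and whether the frame $e_{i+n}$ transforms by $\Phi$ or by $(\Phi^{-1})^t$. I would also use the torsion identity to move the $j=i$ term between the two sides. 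I would verify the final normalization on a low-dimensional almost abelian example, where all terms can be computed by hand.
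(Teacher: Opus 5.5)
Your steps (1) and (2) are exactly the paper's argument. The discrepancy you flag at the end is real, not a bookkeeping issue, so step (3) is a genuine gap that cannot be filled. After (1) and (2), the criterion of Theorem~\ref{balancedomega} is \emph{equivalent} to
\[
\sum_{j=1}^n(\rho_*(X_j))_{i,j}=0,\quad\text{i.e.}\quad \sum_{j=1}^n\Gamma^i_{j,j}=0 \qquad(1\le i\le n),
\]
that is, $\sum_j\nabla_{X_j}X_j=0$.

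No available identity turns this into the stated condition:
\begin{itemize}
\item torsion-freeness $\Gamma^i_{j,k}-\Gamma^i_{k,j}=C^i_{j,k}$ says nothing when $j=k$;
\item $\mathrm{tr}\,\mathfrak R_{X_k}=\sum_j\Gamma^j_{j,k}=0$ contracts a different pair of indices;
\item the conventions are already fixed, since the corollary concerns $G\ltimes_\rho\RR^n$ with the Maurer--Cartan equations of Lemma~\ref{lemmaliealgebra} and Theorem~\ref{balancedomega}.
\end{itemize}
The paper's proof reaches the stated form only through an arithmetic slip. It obtains $\sum_{j\neq i}\Gamma^i_{j,j}+\Gamma^i_{i,i}=0$ and then ``adds $\Gamma^i_{i,i}$ to both sides'' to get $\sum_j\Gamma^i_{j,j}=2\Gamma^i_{i,i}$. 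But the left-hand side already equals $\sum_j\Gamma^i_{j,j}$.

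In fact the stated equivalence is false. Take $G=\RR^2$ with the complete left-symmetric product $X_2\cdot X_2=X_1$ (all other products zero) and the lattice $\ZZ X_1\oplus\ZZ X_2$. Its linear holonomy consists of the integer matrices $\left[\begin{smallmatrix}1&m\\0&1\end{smallmatrix}\right]$, so the induced structure is integral. Pass to the basis $f_1=X_1+X_2$, $f_2=X_2$, with $Y_k$ matched to $f_k$. Every product $f_j\cdot f_k$ equals $X_1=f_1-f_2$, so
\[
\rho_*(f_1)=\rho_*(f_2)=\begin{bmatrix}1&1\\-1&-1\end{bmatrix}.
\]
The stated condition holds: $1=\tfrac12(1+1)$ and $-1=\tfrac12(-1-1)$. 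However, for $n=2$ balanced means closed, and the Maurer--Cartan equations give $\d\omega=2\,e_1\wedge e_2\wedge(e_3+e_4)\neq0$.

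So the correct conclusion of your argument is the criterion displayed above, not the one in the statement. Your proposed sanity check on the almost abelian example would not catch this. There every $(\rho_*(X_i))_{i,i}$ and every $\sum_j(\rho_*(X_j))_{i,j}$ vanishes, so both criteria hold trivially; you need an example with some $\Gamma^i_{i,i}\neq0$.
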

\begin{proof}
	Because $G$ admits a complete left-invariant affine structure with a compatible lattice, it must be unimodular. Thus, $\mathrm{tr}(\mathrm{ad}_{X}) = 0$ for all $X \in \mathfrak{g}$. In terms of structure constants, this means:
	\[
	\sum_{j=1}^{n} C^{j}_{i,j} = 0 \quad \text{for all } 1 \le i \le n.
	\]
	Substituting this directly into the result of Theorem~\ref{balancedomega} yields:
	\begin{equation}\label{eq:rho-condition-balanced}
		\sum_{\substack{j=1 \\ j \neq i} }^n \Bigl( (\rho_{\ast}(X_j))_{i,j} - (\rho_{\ast}(X_i))_{j,j} \Bigr) = 0 \quad \text{for all } 1 \le i \le n.
	\end{equation}
	
	Denote by $\nabla$ the affine connection given by the left-invariant affine structure on $G$, and let $\Gamma^{i}_{j,k}$ be its corresponding Christoffel symbols. Following the proof of Theorem \ref{Fouriermukaiomega}, we established that $\rho_{*}(X_i) = \nabla_{X_i}$. Therefore, Equation \eqref{eq:rho-condition-balanced} can be rewritten explicitly in terms of the Christoffel symbols as:
	\begin{equation}\label{eq:Gamma}
		\sum_{\substack{j=1 \\ j \neq i} }^n \left( \Gamma^{i}_{j,j} - \Gamma^{j}_{i,j} \right) = 0 \quad \text{for all } 1 \le i \le n.
	\end{equation}
	
	Recall from the proof of Theorem \ref{Fouriermukaiomega} that the trace of the connection vanishes identically: $\mathrm{tr}(Y \mapsto \nabla_X Y) = 0$ for all $X \in \mathfrak{g}$. Consequently, $\sum_{j=1}^{n} \Gamma^{j}_{i,j} = 0$ for all $1 \le i \le n$. We can therefore split the second summation in Equation \eqref{eq:Gamma} as:
	\[
	\sum_{\substack{j=1 \\ j \neq i}}^n \Gamma^{j}_{i,j} = \left( \sum_{j=1}^{n} \Gamma^{j}_{i,j} \right) - \Gamma^{i}_{i,i} = 0 - \Gamma^{i}_{i,i} = -\Gamma^{i}_{i,i}.
	\]
	Substituting this back into Equation \eqref{eq:Gamma} leaves:
	\[
	\sum_{\substack{j=1 \\ j \neq i}}^n \Gamma^{i}_{j,j} + \Gamma^{i}_{i,i} = 0.
	\]
	By adding $\Gamma^{i}_{i,i}$ to both sides, we complete the sum over all $j$ on the left side, resulting in:
	\[
	\sum_{j=1}^n \Gamma^{i}_{j,j} = 2\Gamma^{i}_{i,i} \implies \Gamma^{i}_{i,i} = \frac{1}{2} \sum_{j=1}^{n} \Gamma^{i}_{j,j}.
	\]
	Returning this expression to the $\rho_{\ast}$ notation completes the proof.
\end{proof}
	
Let us present two examples verifying Theorem \ref{mirrorsolvmanifolds}.

\begin{example}\label{almostabelianmirror}
	Let $G = \RR \ltimes_{e^{x_{1}A}} \RR^{n-1}$ be an almost abelian Lie group endowed with a left-invariant complete affine structure given by the developing map $\operatorname{dev} = \mathrm{id}$. The affine representation is explicitly given by:
	\[
	\alpha(x_1, \ldots, x_n) v = \operatorname{dev} \circ L_{(x_1, \ldots, x_n)} \circ \operatorname{dev}^{-1}(v) = \begin{bmatrix}
		1  & 0 \\
		0  &  e^{x_1 A}
	\end{bmatrix}v +  \begin{bmatrix}
		x_1  \\
		\vdots  \\
		x_n
	\end{bmatrix}
	\]
	for $(x_1 , \ldots , x_n) \in G $ and every $v \in \RR^{n}$. The linear part is $\rho \colon G \to \mathrm{GL}(n,\RR)$, where:
	\[
	\rho(x_1 , \ldots, x_n) = \begin{bmatrix}
		1  & 0 \\
		0  &  e^{x_1 A}
	\end{bmatrix}.
	\]
	In Appendix \ref{ap:lattice}, we prove that there exists a lattice $\Gamma$ such that the induced affine structure on $B = G/\Gamma$ is integral. Assume $\mathrm{tr}(A) = 0$.
	
	Let $\{X_1 , \ldots, X_n\}$ be a basis of $\mathfrak{g}$. From the group law in $G$, we compute the derivative at the identity $e_G$:
	\[
	\rho_{*}(X_1) = \d \rho_{e_{G}}(X_1) = \begin{bmatrix}
		0  & 0 \\
		0 & A
	\end{bmatrix}
	\]
	and $\rho_{*}(X_i) = 0$ for all $i \neq 1$. Consequently, the diagonal entries vanish:
	\[
	(\rho_{*}(X_i))_{i,i} = 0 \quad \text{for all } 1 \le i \le n,
	\]
	and the column sums also vanish:
	\[
	\sum_{j=1}^{n} (\rho_{*}(X_j))_{i,j} = (\rho_{*}(X_1))_{i,1} = 0.
	\]
	By Theorem~\ref{mirrorsolvmanifolds}, the dual torus bundles $X$ and $\breve{X}$ define a non-Kähler SYZ mirror pair $(X, \mathsf{\Omega}, \omega)$ and $(\breve{X}, \breve{\omega}, \breve{\mathsf{\Omega}})$.
\end{example}

\vspace{1em}

Example \ref{ex:heisnberg} below is a generalization of the example presented in Section 7.1 in \cite{Lau_2015} and provides a simple presentation of the example presented in Section 7.2 in the same reference.

\begin{example}\label{ex:heisnberg}
	Let $\mathcal{H}_{2n+1}(\RR)$ be the generalized Heisenberg group, whose elements are represented by matrices of the form:
	\[
	\begin{bmatrix}
		1 & \overline{a} & c \\
		0  &  I_{n}  &  \overline{b} \\
		0  & 0 & 1
	\end{bmatrix}
	\]
	where $\overline{a}$ is a real $(1\times n)$ row vector, $\overline{b}$ is a real $(n \times 1)$ column vector, and $c \in \RR$. It admits a left-invariant affine structure given by the developing map:
	\begin{align}
		D \colon \mathcal{H}_{2n+1}(\RR) & \longrightarrow \RR^{2n+1} \\
		\begin{bmatrix}
			1 & \overline{a} & c \\
			0  &  I_{n}  &  \overline{b} \\
			0  & 0 & 1
		\end{bmatrix} & \longmapsto (\overline{a}, c , \overline{b}).
	\end{align}
	Furthermore, $\mathcal{H}_{2n+1}(\RR)$ contains a lattice $\Gamma$ given by integer matrices of the form:
	\[
	\begin{bmatrix}
		1 & \overline{m}_1 & m \\
		0  &  I_{n}  &  \overline{m}_2 \\
		0  & 0 & 1
	\end{bmatrix}
	\]
	with $m \in \ZZ$ and $\overline{m}_{1},\overline{m}_{2} \in \ZZ^{n}$, such that the affine structure induced on $\mathcal{H}_{2n+1}(\RR)/\Gamma$ is integral.
	
	Note that the linear part of the affine representation is given by:
	\begin{align}
		\rho \colon \mathcal{H}_{2n+1}(\RR)  & \longrightarrow \mathrm{GL} (\RR^{2n+1}) \\
		\begin{bmatrix}
			1 & \overline{a} & c \\
			0  &  I_{n}  &  \overline{b} \\
			0  & 0 & 1
		\end{bmatrix} & \longmapsto \begin{bmatrix}
			I_n & 0 & 0 \\
			0 & 1 & \overline{a} \\
			0 & 0 & I_n
		\end{bmatrix}.
	\end{align}
	Let $\{X_1, \ldots, X_n , T , Y_1, \ldots , Y_n\}$ be a basis for the Lie algebra of $\mathcal{H}_{2n+1}(\RR)$. We then have:
	\[
	\rho_{*}(X_i) = \begin{bmatrix}
		0_{n}  & 0 & 0 \\
		0 & 0 & E_i \\
		0 & 0 & 0_{n}
	\end{bmatrix} 
	\]
	where $E_i$ is a row vector of zeros with a $1$ in the $i$-th entry, and $\rho_{*}(Y_i) = \rho_{*}(T) = 0$. Note that:
	\[
	\sum_{j=1}^{n} (\rho_{*}(X_j))_{i,j} + (\rho_{*}(T))_{i,n+1} + \sum_{j=n+2}^{2n+1} (\rho_{*}(Y_{j-(n+1)}))_{i,j} = \sum_{j=1}^{n} (\rho_{*}(X_j))_{i,j} = 0
	\]
	for all $1 \le i \le 2n+1$. 
	
	By Theorem~\ref{mirrorsolvmanifolds}, the dual torus bundles $X$ and $\breve{X}$ form a non-Kähler SYZ mirror pair $(X, \mathsf{\Omega}, \omega)$ and $(\breve{X}, \breve{\omega}, \breve{\mathsf{\Omega}})$.
\end{example}

\vspace{1em}

At this point, a natural question is whether one can produce other families of examples in which Theorem \ref{mirrorsolvmanifolds} could be applied. At first glance, it is very tempting to proceed as follows.
	
\vspace{1em}

Let $G$ be a simply connected Lie group equipped with a complete left-invariant metric $\langle \cdot, \cdot \rangle$. Assume that the Levi-Civita connection $\nabla$ of this metric is flat. Then, it defines a complete left-invariant affine structure on $G$ (completeness follows from the Hopf-Rinow Theorem). Let $\{X_1,\ldots,X_n\}$ be an orthonormal basis for $\mathfrak{g}$. Because the connection is compatible with the metric, we have the following:
\[
\langle \nabla_{X_i} X_i , X_j \rangle = - \langle X_i, \nabla_{X_i} X_j\rangle 
\]
for all $1\le i,j \le n$. In terms of Christoffel symbols, this implies:
\[
\Gamma^{j}_{i,i} = - \Gamma^{i}_{i,j} = \Gamma^{i}_{j,i}.
\]
Setting $j=i$, this immediately forces $\Gamma^{i}_{i,i} = 0$ for all $1 \le i \le n$. Using the fact that the connection gives a complete left-invariant affine structure on $G$, we see that $G$ is unimodular. Hence:
\[
\sum_{j=1}^{n} \Gamma^{i}_{j,j} = \sum_{j=1}^{n} \Gamma^{j}_{i,j} = 0
\]
for all $ 1\le i \le n$. Thus, the algebraic hypotheses of Theorem~\ref{mirrorsolvmanifolds} are perfectly satisfied.

If $G$ admits a lattice $\Gamma$ such that the induced affine structure on $G/ \Gamma$ is integral, Theorem~\ref{mirrorsolvmanifolds} would immediately yield a non-Kähler SYZ mirror pair:
\[
\xymatrix{
	(X, \mathsf{\Omega}, \omega) \ar[dr]_{\pi} & & (\breve{X}, \breve{\omega}, \breve{\mathsf{\Omega}}) \ar[dl]^{\breve{\pi}} \\
	& G/\Gamma &
}
\]  
However, a classical result of Milnor~\cite[Corollary 3.2, Theorem 3.3]{Milnor1976} dictates that under these exact conditions, $G$ must have a commutative Lie algebra. Thus $G \cong \RR^{n}$, which yields no new geometrically interesting examples. 

Nevertheless, if we relax the Riemannian signature, we can prove the following:

\begin{theo}\label{thm:lorentzian}
	Let $G$ be a simply connected Lie group endowed with a left-invariant Lorentzian metric $\langle \cdot, \cdot \rangle$. Assume that the Levi-Civita connection $\nabla$ is flat and geodesically complete. Furthermore, assume that there exists a lattice $\Gamma$ such that the induced affine structure on $G/ \Gamma$ is integral. Then, the construction of the dual torus bundle presented in Section~\ref{affine},
	\[
	\xymatrix{
		(X, \mathsf{\Omega}, \omega) \ar[dr]_{\pi} & & (\breve{X}, \breve{\omega}, \breve{\mathsf{\Omega}}) \ar[dl]^{\breve{\pi}} \\
		& G/\Gamma &
	}
	\]
	forms a non-Kähler SYZ mirror pair.
\end{theo}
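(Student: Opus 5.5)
The plan is to reduce the statement to Theorem \ref{mirrorsolvmanifolds}. That theorem says the dual torus bundles form a non-Kähler SYZ mirror pair exactly when $(\rho_*(X_i))_{i,i} = \frac12\sum_j (\rho_*(X_j))_{i,j}$ for all $i$. By the Corollary following Theorem \ref{balancedomega}, this is the same as $\Gamma^i_{i,i} = \frac12 \sum_j \Gamma^i_{j,j}$ for the Christoffel symbols of the affine connection $\nabla$. The hypotheses of Theorem \ref{mirrorsolvmanifolds} hold directly. The Levi-Civita connection is flat and torsion-free, and being geodesically complete it defines a complete left-invariant affine structure on $G$. By Lemma \ref{lem:auslander}, $G$ is then solvable. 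The integrality of the induced structure on $G/\Gamma$ is assumed.

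To verify the Christoffel condition, I would not work with an orthonormal basis; that is where the Riemannian argument breaks down, since a Lorentzian orthonormal basis carries signs $\epsilon_i=\pm1$. Instead I would choose a basis adapted to the metric. The first attempt is an orthonormal basis $\{X_1,\dots,X_n\}$ with $\langle X_i,X_j\rangle=\epsilon_i\delta_{ij}$. Metric compatibility gives $\epsilon_j\Gamma^j_{i,k} = -\epsilon_k\Gamma^k_{i,j}$. Setting $j=k$ yields $\Gamma^j_{i,j}=0$ for all $i,j$, so in particular $\Gamma^i_{i,i}=0$. Applying the relation with $i=k$ gives $\epsilon_j\Gamma^j_{i,i} = -\epsilon_i\Gamma^i_{i,j}$. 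Torsion-freeness gives $\Gamma^i_{i,j}-\Gamma^i_{j,i}=C^i_{i,j}$, and unimodularity gives $\sum_j C^j_{i,j}=0$. From these I need to derive $\sum_j\Gamma^i_{j,j}=0$.

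The argument runs as follows. By the trace identity in the proof of Theorem \ref{Fouriermukaiomega}, $\operatorname{tr}(Y\mapsto\nabla_YX)=0$, i.e.\ $\sum_j\Gamma^j_{j,i}=0$ for each $i$. Metric compatibility with $i\to j$ gives $\Gamma^j_{j,i} = -\epsilon_i\epsilon_j\Gamma^i_{j,j}$. Hence $\sum_j \epsilon_j\Gamma^i_{j,j}=0$. This is a signed sum, whereas the condition requires the unsigned sum $\sum_j\Gamma^i_{j,j}$, and this discrepancy is the main obstacle. To handle it I would use the structure theory of flat complete Lorentzian Lie groups (Aubert–Medina). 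The Lie algebra is a unimodular flat Lorentzian algebra, and its structure is known: $\mathfrak g$ is nilpotent, and one can choose a basis $\{e,\dots,\bar e\}$ with $e$ null and central, such that $\nabla$ is determined by an abelian Euclidean part together with a null direction. Working in a Witt-type basis $\{u, X_2,\dots,X_{n-1}, v\}$ with $u,v$ null and $\langle u,v\rangle=1$, I would compute the $\Gamma$'s explicitly from that classification. There the $\nabla_{X}$ are nilpotent and strictly upper-triangular with respect to the filtration $u\subset u^\perp\subset\mathfrak g$. The diagonal terms $\Gamma^i_{i,i}$ vanish, and the sums $\sum_j\Gamma^i_{j,j}$ vanish because $\nabla_YY\in\mathbb R u$ with coefficient $\langle \nabla_YY,v\rangle = -\langle Y,\nabla_Yv\rangle$, which is forced to be zero in the diagonal directions.

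If the classification route proves too heavy, the fallback is a direct argument that $\nabla_{X}$ is nilpotent for every $X$. This uses completeness together with unimodularity and Segal's result already quoted, applied to both $L$ and $R$. The aim is then to choose the basis so that all $\rho_*(X_j)$ are simultaneously strictly triangular in a way compatible with the null flag. Both sides of the condition would then vanish, exactly as in Examples \ref{almostabelianmirror} and \ref{ex:heisnberg}, and Theorem \ref{mirrorsolvmanifolds} finishes the proof.
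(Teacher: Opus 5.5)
Your diagnosis of the obstacle is correct. In a Lorentzian orthonormal frame, skew-adjointness of $\nabla_X$ gives $\Gamma^i_{i,i}=0$ but only the signed identity $\sum_j\epsilon_j\Gamma^i_{j,j}=0$. The paper's proof simply asserts that the Riemannian manipulations carry over and yield $\sum_j\Gamma^i_{j,j}=0$. However, your repair rests on structural claims that are false, so there is a genuine gap. First, a complete unimodular flat Lorentzian Lie algebra need not be nilpotent, and $\nabla_X$ need not be nilpotent. Take $\mathfrak g=\RR b\ltimes\RR^{1,1}$ with $\mathrm{ad}_b$ a boost and $b$ a unit spacelike vector orthogonal to $\RR^{1,1}$. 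The Levi-Civita connection is $\nabla_b=\mathrm{ad}_b$ (with $\nabla_bb=0$) and $\nabla_u=0$ for $u\in\RR^{1,1}$. In global coordinates on $\mathrm{Sol}$ the metric is $\d t^2+2\,\d p\,\d q$, so it is flat and complete, yet $\nabla_b$ has eigenvalues $0,\pm1$. Segal's theorem only gives nilpotency of the right multiplications $Y\mapsto\nabla_YX$, so your fallback cannot produce nilpotent $\nabla_X$ either.

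Second, and decisively, the vanishing you need fails even in the nilpotent situation you describe. Take $[x,y]=z$ with $\langle z,y\rangle=\langle x,x\rangle=1$ and all other pairings of basis vectors zero. Then $\nabla_yy=x$, $\nabla_yx=-z$, and all other products of basis vectors vanish. This connection is flat and complete, and $z$ is null and central. Every $\nabla_X$ is strictly triangular for $\RR z\subset z^{\perp}=\mathrm{span}\{z,x\}\subset\mathfrak g$. For $\Gamma=\langle\exp x,\exp y\rangle$ the linear holonomy is generated by the unipotent $\exp(\nabla_y)$, so all hypotheses of Theorem~\ref{thm:lorentzian} are met. Yet $\nabla_yy=x\notin\RR z$. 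In the Witt frame $(z,x,y)$, and equally in the orthonormal frame $\{x,(y+z)/\sqrt2,(y-z)/\sqrt2\}$, one has $\sum_j\nabla_{X_j}X_j=x$ and $\nabla_xx=0$. Hence $\Gamma^x_{x,x}=0\neq\tfrac12=\tfrac12\sum_j\Gamma^x_{j,j}$. Strict triangularity only kills the diagonal entries $(\rho_*(X_i))_{i,i}$. It does not kill the entries $(\rho_*(X_j))_{i,j}$ with $i<j$ that make up the right-hand side; here the $(x,y)$ entry of $\nabla_y$ is $1$. Thus the unsigned identity that both you and the paper aim for is false in the natural frames.

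The missing idea is that the criterion of Theorem~\ref{mirrorsolvmanifolds} is frame-dependent. In this example the frame $\{x+y,\,x-y,\,z\}$ satisfies it, while $\{-z,\,x-\tfrac12 z,\,y\}$, a frame in which $\rho(\Gamma)$ is integral, violates it. So the criterion cannot be derived from metric compatibility alone. A proof has to fix the frame coming from the integral affine coordinates that define $\omega$ and verify the condition there, which your proposal does not do.
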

\begin{proof}
	Following the same algebraic manipulations as above, we have:
	\[
	\sum_{j=1}^{n} \Gamma^{i}_{j,j} = \sum_{j=1}^{n} \Gamma^{j}_{i,j} = 0
	\]
	and $\Gamma^{i}_{i,i} = 0$ for all $1\le i \le n$. Thus, the conditions of Theorem~\ref{mirrorsolvmanifolds} are satisfied, completing the proof.   
\end{proof}

\vspace{1em}

From a different perspective, we can provide a complete characterization of all non-Kähler SYZ mirror pairs arising from nilpotent Lie groups.

\begin{theo}\label{mirror_nilmanifolds}
	Let $G$ be a simply connected nilpotent Lie group with a complete left-invariant affine structure. Denote by $\rho$ the linear part of the affine representation and let $\{X_1,\ldots , X_n\}$ be a basis for $\mathfrak{g}$. Then, $G$ produces a non-Kähler SYZ mirror pair if and only if:
	\begin{itemize}
		\item[(1)] $C^{i}_{j,k} \in \QQ$ for all $1 \le i,j,k \le n$;
		\item[(2)] $(\rho_{*}(X_i))_{i,i} = \frac{1}{2}\sum_{j=1}^{n} (\rho_{*}(X_j))_{i,j}$ for all $1\le i \le n$;
		\item[(3)] There exists a lattice $\gamma \subset \mathfrak{g}$ such that $\exp(\rho_{*}(X))$ is conjugate to an integer matrix for all $X \in \gamma$.
	\end{itemize}
	Furthermore, the lattice in $G$ is explicitly given by $\Gamma := \exp(\gamma)$, making the induced affine structure on $G/ \Gamma$ integral and yielding the mirror pair:
	\[
	\xymatrix{
		(X, \mathsf{\Omega}, \omega) \ar[dr]_{\pi} & & (\breve{X}, \breve{\omega}, \breve{\mathsf{\Omega}}) \ar[dl]^{\breve{\pi}} \\
		& G/\Gamma &
	}
	\]
\end{theo}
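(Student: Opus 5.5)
The plan is to reduce the statement to Theorem \ref{mirrorsolvmanifolds} together with Mal'cev's criterion \cite[Theorem 2.12]{Raghunathan1972}. Here "$G$ produces a non-Kähler SYZ mirror pair" means that some lattice $\Gamma\subset G$ makes the induced affine structure on $G/\Gamma$ integral, and the resulting dual bundles form a mirror pair. Theorem \ref{mirrorsolvmanifolds} says that, once such a $\Gamma$ exists, the mirror condition is exactly (2). So the theorem reduces to showing that (1) and (3) are equivalent to the existence of a lattice $\Gamma$ with $\rho(\Gamma)$ conjugate into $\mathrm{GL}(n,\ZZ)$.

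For the direction ($\Leftarrow$), I proceed in three steps.
\begin{enumerate}
\item Assume (1) and (3). Since $G$ is simply connected nilpotent, $\exp\colon\mathfrak g\to G$ is a diffeomorphism. Mal'cev's criterion says that rational structure constants (condition (1)) is equivalent to $G$ admitting a lattice.
\item More precisely, for the $\ZZ$-span of a suitable basis with rational structure constants, $\exp$ of a lattice $\gamma$ (after possibly passing to a sublattice/rescaling, using Baker--Campbell--Hausdorff with bounded denominators) generates a lattice $\Gamma$. I take $\Gamma:=\exp(\gamma)$ as stated. If $\exp(\gamma)$ is not itself closed under multiplication, I will argue that one may replace $\gamma$ by a suitable multiple $N\gamma$ that is. Condition (3) must be preserved under this replacement; I would check that it is, since $\exp(\rho_*(NX))=\exp(\rho_*(X))^N$ is still integral up to the same conjugation.
\item Since $\rho(\exp X)=\exp(\rho_*(X))$, the holonomy $\rho(\Gamma)$ consists of matrices conjugate to integer matrices. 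This gives integrality of the affine structure on $G/\Gamma$, and Theorem \ref{mirrorsolvmanifolds} with (2) finishes this direction.
\end{enumerate}

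For the direction ($\Rightarrow$), suppose a lattice $\Gamma$ exists giving a mirror pair.
\begin{enumerate}
\item Mal'cev gives (1) for a suitable basis. The statement says "for a basis", which I read as: there is a basis adapted to the rational structure, namely $\log\Gamma$ spans a $\QQ$-form.
\item Theorem \ref{mirrorsolvmanifolds} gives (2).
\item For (3), I take $\gamma$ to be a lattice contained in $\log\Gamma$, which exists by Mal'cev's theory: $\log\Gamma$ contains and is contained in lattices of $\mathfrak g$. Then for $X\in\gamma$, $\exp(\rho_*(X))=\rho(\exp X)\in\rho(\Gamma)$ is conjugate into $\mathrm{GL}(n,\ZZ)$.
\end{enumerate}

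The main obstacle is the interplay between $\gamma$ and $\Gamma=\exp(\gamma)$, since $\exp(\gamma)$ need not be a subgroup. I would handle it via the standard fact that for a lattice $\gamma$ in a rational nilpotent Lie algebra, $\exp(N\gamma)$ generates a lattice whose logarithm lies between $N'\gamma$ and $\gamma$ for suitable integers. Integrality is then checked on generators: integrality of $\rho$ on a generating set propagates to the group, because the conjugating matrix is a single fixed one. That last point also needs care, and I would first try to choose the conjugation uniformly, using that $\rho_*(\mathfrak g)$ consists of nilpotent matrices. By Segal's result, $\rho(G)$ is then unipotent, so $\rho(\Gamma)$ integral just means that it preserves some full lattice in $\RR^n$. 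Such a lattice can be produced from the $\ZZ$-span of the orbit of $\ZZ^n$ under a finitely generated unipotent group with rational entries.
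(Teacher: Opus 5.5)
\textbf{There is a genuine gap in your ($\Leftarrow$) direction, and it is the same step where the paper's own proof breaks.} Your skeleton matches the paper's: Mal'cev's criterion, then $\rho(\exp X)=\exp(\rho_{*}(X))$, then Theorem~\ref{mirrorsolvmanifolds}. Your ($\Rightarrow$) direction is fine, provided (1) is read for a suitable basis, as you do.

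The problem is the passage from (3) to integrality. Integrality needs a \emph{single} matrix $P$ with $P\rho(\Gamma)P^{-1}\subset\mathrm{GL}(n,\ZZ)$. Condition (3) only gives a separate conjugating matrix for each element. You notice this, but your repair presupposes that $\rho(\Gamma)$ is rational in one common basis: you take the $\ZZ$-span of the orbit of $\ZZ^n$ under a finitely generated unipotent group \emph{with rational entries}. Nothing in (1)--(3) supplies that rationality.

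In fact the fact you invoke makes (3) empty. For nilpotent $G$ with a complete structure, every $\rho_{*}(X)=\nabla_X$ is nilpotent. This is true, but it is not Segal's criterion, which concerns $Y\mapsto\nabla_Y X$. So every $\exp(\rho_{*}(X))$ is unipotent, hence conjugate to its Jordan form, which is an integer matrix. Thus (3) holds for \emph{every} lattice $\gamma$ and says nothing about simultaneous integrality.

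\textbf{The step genuinely fails.} Take $G=\RR^2$ with the complete left-symmetric product $e_1\cdot e_1=e_2$, all other products zero. Then $\rho(xe_1+ye_2)=I+xE_{2,1}$.
\begin{itemize}
\item Condition (1) is trivial.
\item Condition (2) holds in the basis $X_1=e_1$, $X_2=e_1+e_2$: both $\rho_{*}(X_j)$ have matrix $\left(\begin{smallmatrix}-1&-1\\1&1\end{smallmatrix}\right)$.
\item The lattice $\gamma=\ZZ e_1\oplus\ZZ(\sqrt2\,e_1+e_2)$ satisfies (3); here $\exp(\gamma)=\gamma$ is a lattice in $G$.
\end{itemize}
But $\rho(\exp\gamma)=\{I+tE_{2,1} : t\in\ZZ+\sqrt2\,\ZZ\}$ is not discrete, so no conjugate of it lies in $\mathrm{GL}(2,\ZZ)$. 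The ``Furthermore'' clause is therefore false for this $\gamma$. The paper's proof makes the same leap (``In other words, the restriction $\rho|_{\Gamma}$ is conjugate to a subgroup of $\mathrm{GL}(n,\ZZ)$''). So following it will not close the gap; condition (3) itself has to be strengthened.

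\textbf{How to repair it.}
\begin{itemize}
\item If (3) asks for one $P$ with $P\exp(\rho_{*}(X))P^{-1}\in\mathrm{GL}(n,\ZZ)$ for all $X\in\gamma$, the generated group is handled automatically.
\item Your orbit-span argument is exactly what you need under the weaker hypothesis that all $\rho_{*}(X)$, $X\in\gamma$, are rational in one fixed basis. Finitely generated rational unitriangular groups have bounded denominators, so the orbit span is an invariant lattice.
\item You also need $\gamma$ compatible with (1), i.e.\ $\gamma\otimes\QQ$ a $\QQ$-subalgebra. Otherwise replacing $\gamma$ by $N\gamma$ does not help. For example, in the Heisenberg algebra $[X,Y]=Z$ with $\gamma=\ZZ X+\ZZ Y+\ZZ\sqrt2\,Z$, the group generated by $\exp(N\gamma)$ contains $\exp(N^2Z)$ and $\exp(\sqrt2\,NZ)$, so it is not discrete.
\end{itemize}
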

\begin{proof}
	By Mal'cev's criterion~\cite[Theorem 2.12]{Raghunathan1972}, knowing that the structure constants of $\mathfrak{g}$ are rational ensures the existence of a lattice $\Gamma$ in $G$. Furthermore, every lattice in a simply connected nilpotent Lie group $G$ is obtained by exponentiating some lattice in its Lie algebra $\mathfrak{g}$.
	
	We only need to verify whether the affine structure induced in $G/\Gamma$ is integral. From the third condition, for $\exp(X) \in \Gamma$, the element
	\[
	\rho(g) = \rho(\exp(X)) = \exp(\rho_{*}(X))
	\]
	is conjugate to an integer matrix. In other words, the restriction $\restr{\rho}{\Gamma}$ is conjugate to a subgroup of $\mathrm{GL}(n,\ZZ)$.
	
	Finally, invoking the second condition alongside Theorem~\ref{mirrorsolvmanifolds} completes the proof.
\end{proof}

\vspace{1em}

There is a complete characterization of complete left-invariant affine structures on simply connected nilpotent Lie groups~\cite{kim1986complete} across several dimensions. A plethora of examples satisfying Theorem \ref{mirrorsolvmanifolds} can be systematically built, akin to the one below: 

\begin{example}\label{ex:plethora}
	Let $\mathfrak{g}$ be a nilpotent Lie algebra of dimension $4$. Denoting by $\{X_1,\ldots, X_4\}$ a basis of $\mathfrak{g}$, consider the following LSA\footnote{There is an equivalence between Left Symmetric Algebra (LSA) structures and left-invariant affine structures. Thus, prescribing an affine structure is equivalent to defining an LSA product, see \cite[p. 376]{kim1986complete}.} structure:
	\begin{align}
		X_2 \cdot X_3 &= X_1, & X_3 \cdot X_3 &= X_2, & X_1 \cdot X_4 &= X_2, \\
		X_4 \cdot X_1 &= X_2, & X_2 \cdot X_4 &= -X_3. & &
	\end{align}
	Let $\rho$ be the affine representation associated with the complete affine structure (induced by the LSA product). We have:
	\[
	\rho_{*}(X)Y = X \cdot Y.
	\]
	This yields the following matrix representations:
	\begin{align}
		\rho_{*}(X_1) &= \begin{bmatrix}
			0  &  0 &  0  &  0 \\
			0  &  0 &  0  &  0 \\
			0  &  1 &  0  &  0 \\
			0  &  0  &  0  &  0  
		\end{bmatrix}, & 
		\rho_{*}(X_2) &= \begin{bmatrix}
			0  &  0  &  0   &  0 \\
			0  &  0  &  0  &  0  \\
			1  &  0  &  0  &  0 \\
			0  &  0  &  -1  &  0
		\end{bmatrix}, \\
		\rho_{*}(X_3) &= \begin{bmatrix}
			0  &  0  &  0  &  0 \\
			0  &  0  &  0  &  0 \\
			0  &  1  &  0  &  0 \\
			0  &  0  &  0  &  0
		\end{bmatrix}, & 
		\rho_{*}(X_4) &= \begin{bmatrix}
			0  &  1  &  0  &  0 \\
			0  &  0  &  0  &  0  \\
			0  &  0  &  0  &  0  \\
			0  &  0  &  0  &  0  
		\end{bmatrix}.
	\end{align}
	Note that conditions (1) and (2) of Theorem \ref{mirror_nilmanifolds} are identically satisfied. Let $\gamma = \mathrm{span}_{\ZZ}\{X_1 ,X_2 , X_3,X_4\}$. Then, for $X = \sum_{i=1}^{4} n_i X_i \in \gamma$ with integers $n_i \in \ZZ$, we have:
	\[
	\rho_{*}(X) = \begin{bmatrix}
		0  &  n_4  &  0  &  0 \\
		0  &  0    &  0  &  0 \\
		n_2 &  n_3 &  0  &  0 \\
		0   &  n_1 &  -n_2 & 0 
	\end{bmatrix}.
	\]
	It can be verified that $\exp(\rho_{*}(X))$ is conjugate to an integer matrix. Therefore, by Theorem~\ref{mirror_nilmanifolds}, there is a non-Kähler SYZ mirror pair with base $G/\Gamma$, where $\Gamma = \exp(\gamma)$.
\end{example}
	
	\ 
	
\section{On the Tseng-Yau cohomology}

Let $(\breve{X}, \breve{\omega})$ be a symplectic manifold and let $\boldsymbol{\alpha}$ denote the bivector field dual to the symplectic form $\breve{\omega}$. In \cite{Tseng_2012}, Tseng and Yau introduced two cohomology theories that are sensitive to the symplectic structure. Specifically, they are invariant under symplectomorphisms but not necessarily under diffeomorphisms, and they are finite-dimensional on compact manifolds. 

Let $(\Omega^{\bullet}(\breve{X}, \CC), \d)$ be the de Rham complex. Set $\Lambda = \iota(\boldsymbol{\alpha})$ as the interior product with $\boldsymbol{\alpha}$, and let $\dL \colon \Omega^{k}(\breve{X}, \CC) \to \Omega^{k-1}(\breve{X}, \CC)$ be given by $\dL = [\d, \Lambda]$. It is straightforward to check that:
\[
\d\dL = -\dL\d, \quad (\d+\dL) \circ \d\dL = 0.
\]

Consider the complex:
\begin{equation}\label{eq:complex1}
	\Omega^{k}(\breve{X}, \CC) \xrightarrow{\d\dL} \Omega^{k}(\breve{X}, \CC) \xrightarrow{\d +\dL} \Omega^{k+1}(\breve{X}, \CC) \oplus \Omega^{k-1}(\breve{X}, \CC).
\end{equation}
The \emph{Tseng-Yau cohomology} is defined as:
\begin{equation}\label{eq:TSYAU}
	H^{k}_{\d+\dL}(\breve{X}) = \frac{\ker\left(\d+\dL \colon \Omega^{k}(\breve{X}, \CC) \longrightarrow \Omega^{k+1}(\breve{X}, \CC) \oplus \Omega^{k-1}(\breve{X}, \CC) \right)}{\im\left( \d\dL \colon \Omega^{k}(\breve{X}, \CC) \longrightarrow \Omega^{k}(\breve{X}, \CC) \right)}.
\end{equation}

Now reverse the arrows in the complex \eqref{eq:complex1}:
\begin{equation}\label{eq:complex2}
	\begin{tikzcd}[row sep=0.1em]
		\Omega^{k+1}(\breve{X}, \CC) \arrow[rd, "\dL"] & & \\
		\oplus & \Omega^{k}(\breve{X}, \CC) \arrow[r, "\d\dL"] & \Omega^{k}(\breve{X}, \CC) \\
		\Omega^{k-1}(\breve{X}, \CC) \arrow[ru, "\d"'] & &
	\end{tikzcd}
\end{equation}
The cohomology of the complex \eqref{eq:complex2} is defined as
\begin{equation}
	H^{k}_{\d\dL}(\breve{X}) = \frac{\ker\left( \d\dL \colon \Omega^{k}(\breve{X}, \CC) \longrightarrow \Omega^{k}(\breve{X}, \CC) \right)}{\left( \im \,\d + \im\,\dL \right)\cap \Omega^{k}(\breve{X}, \CC)}.
\end{equation}

It turns out that whenever $\breve X$ is compact, both cohomologies coincide:

\begin{theo_with_name}{Tseng, Yau~\cite[Corollary 3.25]{Tseng_2012}} 
    Let $(\breve{X},\breve{\omega})$ be a compact symplectic manifold of dimension $2n$. Then the Hodge star operator induces the following isomorphism:
	\[
	H^{k}_{\d+\dL}(\breve{X}) \simeq H^{2n-k}_{\d\dL}(\breve{X})
	\]
	for all $ 0 \le k \le 2n$.
\end{theo_with_name}

\vspace{1em}

Drawing inspiration from Brylinski's definition of Poisson cohomology \cite{Brylinski1988} and building upon the foundational work of Connes \cite{Connes}, Kontsevich \cite{Kontsevich2008}, and Nest and Tsygan \cite{NestTsyganCyclic, Kontsevich2008}, this section introduces a bicomplex construction designed to recover Tseng-Yau cohomology. Our ultimate aim is to explore its potential connection to the periodic cyclic homology of a given dg-algebra; although this correspondence remains a subject for future work. Furthermore, we prove that our constructions are compatible with the Fourier-Mukai transform and, by extension, with non-K\"ahler SYZ mirror symmetry in the sense of Lau-Tseng-Yau.

\vspace{1em}

\subsection{Brylinski's and Tseng-Yau's bicomplexes}
Consider the following operators on the de Rham complex $(\Omega^{\bullet}(\breve{X}, \CC), \d)$:
\begin{itemize}
	\item $B := \d \colon \Omega^{k}(\breve{X}, \CC) \to \Omega^{k+1}(\breve{X}, \CC)$, the standard de Rham differential (degree $+1$).
	\item $b := \dL \colon \Omega^{k}(\breve{X}, \CC) \to \Omega^{k-1}(\breve{X}, \CC)$, where $\dL = [\d, \Lambda]$ (degree $-1$).
\end{itemize}
Note that $\dL$ coincides with the Koszul differential \cite{Brylinski1988}. These operators $(B, b)$ naturally define the $(\Omega^{\bullet}(\breve{X}, \CC), b, B)$ \emph{Brylinski's bicomplex}, since:
\[
B^2 = \d^2 = 0, \quad b^2 = (\dL)^2 = 0, \quad \text{and} \quad Bb + bB = \d\dL + \dL\d = 0.
\]

Let $u$ be a formal variable of degree $+2$. We consider the ``positive'' complex $C_1$:
\[
C^\bullet_1 = \left(\Omega^{\bullet}(\breve{X}, \CC)((u)), D_1 = \d + u\dL\right)
\] 
with a total differential $D_1$ of degree $+1$ satisfying $D_1^2 = 0$. It can be straightforwardly proven\footnote{We have chosen to omit this proof as it would deviate too much from our goals.} that there exists an isomorphism of graded vector spaces:
\[
H(C^\bullet_1) \cong H_{dR}(\breve{X}, \CC)((u)).
\]
 
Moreover, let $A = C^{\infty}(\breve{X}, \CC)$ be the Fréchet algebra of smooth functions on $(\breve{X}, \breve{\omega})$. We can actually check that there is an isomorphism of graded vector spaces:
\begin{equation}\label{eq:ISO} 
H(C^\bullet_1) \cong HP_\bullet(A). 
\end{equation}
Here, $HP_\bullet(A)$ denotes Connes' periodic cyclic homology for topological algebras, i.e., the underlying continuous Hochschild complex is constructed taking into account the Fréchet topology of $A$, utilizing the completed projective tensor product $C^\infty(\breve{X}) \hat{\otimes}_\pi C^\infty(\breve{X}) \cong C^\infty(\breve{X} \times \breve{X})$ \cite{Connes, NestTsyganCyclic}.

We use the isomorphism \eqref{eq:ISO} as inspiration to define the \emph{Tseng-Yau bicomplex} $C^{\bullet} =(\Omega^{\bullet}(\breve{X}, \CC)((u)), \d + u \dL, \d\dL)$:
\begin{center}
	\begin{tikzpicture}
		\tikzset{
			hdiff/.style={->, font=\small}, 
			vdiff/.style={->, font=\small}  
		}
		
		\def\colsep{3.0} 
		\def\rowsep{1.5} 
		
		\node (Cm12) at (0, 2*\rowsep) {$\cdots$};
		\node (C02) at (1*\colsep, 2*\rowsep) {$C^{k}$};
		\node (C12) at (2*\colsep, 2*\rowsep) {$C^{k+1}$};
		\node (C22) at (3*\colsep, 2*\rowsep) {$C^{k+2}$};
		\node (Cp12) at (4*\colsep, 2*\rowsep) {$\cdots$};
		
		\node (Cm11) at (0, 1*\rowsep) {$\cdots$};
		\node (C01) at (1*\colsep, 1*\rowsep) {$C^k$};
		\node (C11) at (2*\colsep, 1*\rowsep) {$C^{k+1}$};
		\node (C21) at (3*\colsep, 1*\rowsep) {$C^{k+2}$};
		\node (Cp11) at (4*\colsep, 1*\rowsep) {$\cdots$};
		
		\node (Cm10) at (0, 0) {$\cdots$};
		\node (C00) at (1*\colsep, 0) {$C^k$};
		\node (C10) at (2*\colsep, 0) {$C^{k+1}$};
		\node (C20) at (3*\colsep, 0) {$C^{k+2}$};
		\node (Cp10) at (4*\colsep, 0) {$\cdots$};
		
		\draw[hdiff] (Cm12) -- (C02) node[midway, above] {$\d+u\dL$};
		\draw[hdiff] (C02) -- (C12) node[midway, above] {$\d+u\dL$};
		\draw[hdiff] (C12) -- (C22) node[midway, above] {$\d+u\dL$};
		\draw[hdiff] (C22) -- (Cp12) node[midway, above] {$\d+u\dL$};
		
		\draw[hdiff] (Cm11) -- (C01) node[midway, above] {$\d+u\dL$};
		\draw[hdiff] (C01) -- (C11) node[midway, above] {$\d+u\dL$};
		\draw[hdiff] (C11) -- (C21) node[midway, above] {$\d+u\dL$};
		\draw[hdiff] (C21) -- (Cp11) node[midway, above] {$\d+u\dL$};
		
		\draw[hdiff] (Cm10) -- (C00) node[midway, above] {$\d+u\dL$};
		\draw[hdiff] (C00) -- (C10) node[midway, above] {$\d+u\dL$};
		\draw[hdiff] (C10) -- (C20) node[midway, above] {$\d+u\dL$};
		\draw[hdiff] (C20) -- (Cp10) node[midway, above] {$\d+u\dL$};
		
		\draw[vdiff] (C02) -- (C01) node[midway, right] {$\d\dL$}; 
		\draw[vdiff] (C01) -- (C00) node[midway, right] {$\d\dL$}; 
		
		\draw[vdiff] (C12) -- (C11) node[midway, right] {$\d\dL$}; 
		\draw[vdiff] (C11) -- (C10) node[midway, right] {$\d\dL$}; 
		
		\draw[vdiff] (C22) -- (C21) node[midway, right] {$\d\dL$};
		\draw[vdiff] (C21) -- (C20) node[midway, right] {$\d\dL$};
		
	\end{tikzpicture}
\end{center}

\vspace{1em}

\subsection{New cohomologies}

The Tseng-Yau bicomplex have to it associated four distinct cohomologies: 
\begin{enumerate}
	\item $\d\dL$ Cohomology:
	\[
    H_{s}^{k}(C^{\bullet}) := \frac{\ker(\d\dL \colon C^{k}\longrightarrow C^{k})}{\im(\d\dL \colon C^{k}\longrightarrow C^{k})}
    \]
	
	\item Periodic Cyclic Cohomology (P.C.):
	\[
    HP^{k}(C^{\bullet}) := \frac{\ker(\d+u\dL \colon C^{k}\longrightarrow C^{k+1})}{\im(\d+u\dL \colon C^{k-1}\longrightarrow C^{k})}
    \]
	
	\item Bott-Chern Cohomology:
	\[
    H_{\d + \dL}^{k}(C^{\bullet}) := \frac{\ker(\d+u\dL \colon C^{k}\longrightarrow C^{k+1})}{\im(\d\dL \colon C^{k}\longrightarrow C^{k})}
    \]
	
	\item Aeppli Cohomology:
	\[
    H_{\d\dL}^{k}(C^{\bullet}) := \frac{\ker(\d \dL \colon C^{k}\longrightarrow C^{k})}{\im(\d+ u\dL \colon C^{k-1}\longrightarrow C^{k})}
    \]
\end{enumerate}

\vspace{1em}

The cohomologies $H_{\d + \dL}^{k}(C^{\bullet})$ and $H_{\d\dL}^{k}(C^{\bullet})$ parallel the classical Tseng-Yau cohomologies $H^{k}_{\d + \dL}(\breve{X})$ and $H^{k}_{\d \dL}(\breve{X})$ developed in \cite{Tseng_2012}. The primary distinction lies in the fact that $H_{\d +\dL}(C^{\bullet})$ captures mixed-degree forms. Specifically, classes in $H_{\d +\dL}^{k}(C^{\bullet})$ take the form $[A_{2i} - A_{2(i-1)}]$ where $A_{i} \in \Omega^{k-2i}(\breve{X}, \CC)$, whereas classes in $H^{k}_{\d + \dL}(\breve{X})$ consist of forms of pure degree $k$. More precisely, we have the following isomorphisms:

\begin{theo}\label{BC_cohomology}
	Let $(\breve{X},\breve{\omega})$ be a symplectic manifold. There exist subspaces $Q_{\mathrm{even}} \subset H_{\d + \dL}^{\mathrm{even}}(C^{\bullet})$ and $Q_{\mathrm{odd}} \subset H_{\d + \dL}^{\mathrm{odd}}(C^{\bullet})$ such that we have the following isomorphisms of graded vector spaces:
	\begin{align}
		H_{\d+\dL}^{\mathrm{even}}(C^{\bullet}) &\cong \left(\bigoplus_{m~\mathrm{even}}H_{\d + \dL}^{m}(\breve{X}, \CC)\right) \oplus Q_{\mathrm{even}}, \\
		H_{\d+\dL}^{\mathrm{odd}}(C^{\bullet}) &\cong \left(\bigoplus_{m~\mathrm{odd}}H_{\d + \dL}^{m}(\breve{X} , \CC)\right) \oplus Q_{\mathrm{odd}}.
	\end{align}
\end{theo}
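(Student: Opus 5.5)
We construct an explicit injective linear map from the classical Tseng-Yau cohomologies into the bicomplex cohomology, and then let $Q$ be a complement of its image. Since $u$ has degree $+2$, an element of $C^k$ is a (formal Laurent) sum $\alpha=\sum_i u^{i}\alpha_{k-2i}$ with $\alpha_{k-2i}\in\Omega^{k-2i}(\breve X,\CC)$; only finitely many negative powers of $u$ occur, and forms vanish above degree $2n$. The cocycle condition $(\d+u\dL)\alpha=0$, sorted by powers of $u$, becomes the zig-zag system
\[
\d\alpha_{k-2i}+\dL\alpha_{k-2i+2}=0\qquad\text{for all } i .
\]
The coboundaries $\d\dL\beta$ act degree-wise, since $\d\dL$ preserves both form degree and the power of $u$.

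\textbf{Step 1 (the map).} Given $m\equiv k \pmod 2$ and a class $[a]\in H^{m}_{\d+\dL}(\breve X,\CC)$, we have $\d a=0$ and $\dL a=0$. Set
\[
\Psi_m[a]:=[u^{(k-m)/2}a]\in H^k_{\d+\dL}(C^\bullet).
\]
This is a cocycle, because $(\d+u\dL)(u^j a)=u^j\d a+u^{j+1}\dL a=0$. The map is well defined, because $a\mapsto a+\d\dL b$ corresponds to adding $\d\dL(u^{j}b)$. Summing over all $m$ of the correct parity gives a linear map
\[
\Psi:\bigoplus_{m\equiv k}H^m_{\d+\dL}(\breve X,\CC)\to H^k_{\d+\dL}(C^\bullet).
\]
The parity is forced because $k-m$ must be even for $u^{(k-m)/2}$ to make sense. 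The direct sum is finite since $0\le m\le 2n$, and the map is compatible with the isomorphism $H^k\cong H^{k+2}$ given by multiplication by $u$. We therefore treat it as a statement about the even and odd parts.

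\textbf{Step 2 (injectivity).} Suppose $\sum_m u^{(k-m)/2}a_m=\d\dL\beta$ with $\beta=\sum_i u^i\beta_{k-2i}$. Comparing the coefficient of each power $u^j$, and noting that $\d\dL$ acts coefficientwise without mixing powers, we get $a_m=\d\dL\beta_m$ for each $m$. Hence every $[a_m]=0$ in $H^m_{\d+\dL}(\breve X,\CC)$.

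\textbf{Step 3 (the complement).} We take $Q_{\mathrm{even}}$ and $Q_{\mathrm{odd}}$ to be linear complements of $\operatorname{im}\Psi$ inside $H^{\mathrm{even}}_{\d+\dL}(C^\bullet)$ and $H^{\mathrm{odd}}_{\d+\dL}(C^\bullet)$, using a Hamel basis, so no finiteness is needed. This yields the stated isomorphisms of graded vector spaces. To make the decomposition less abstract, we can describe $Q$ as classes of genuinely mixed zig-zags, for instance $\alpha=a_{m}+u\,a_{m-2}$ with $\d a_m=-\dL a_{m-2}\ne 0$. One can alternatively realize $Q$ as the quotient of the space of zig-zag cocycles by the span of pure-degree $\d$- and $\dL$-closed ones, modulo $\d\dL$-exact terms.

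\textbf{Main obstacle.} The only delicate point is in Step 2 and in identifying $Q$: the homogeneous components of a cocycle need not be separately closed, so $\operatorname{im}\Psi$ must be shown to meet $\d\dL$-images only degree-wise. This works precisely because the vertical differential $\d\dL$ does not mix powers of $u$, which we check first. Everything else is bookkeeping with parities.
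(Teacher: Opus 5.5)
Your proposal is correct and is essentially the paper's proof: identify $C^k$ with $\bigoplus_i \Omega^{k-2i}(\breve X,\CC)$, embed $\bigoplus_{m\equiv k}H^m_{\d+\dL}(\breve X,\CC)$ by sending pure-degree $\d$- and $\dL$-closed representatives to the corresponding mixed form, check well-definedness and injectivity using that $\d\dL$ acts degree-wise, and take $Q$ to be a linear complement. One small slip in your illustrative aside, which the argument does not use: for $a_m+u\,a_{m-2}$ the cocycle condition is $\d a_m=0$, $\dL a_m=-\d a_{m-2}$, $\dL a_{m-2}=0$, whereas your $\d a_m=-\dL a_{m-2}$ equates forms of degrees $m+1$ and $m-3$.
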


\begin{proof} 
	We detail the proof for the even case; the odd case follows analogously.

	Using the canonical identification between $C^{k}$ and $\bigoplus_{i \in \ZZ} \Omega^{k-2i}(\breve{X}, \CC)$, we observe that:
	\[
	\ker(\d+u\dL \colon C^{k}\to C^{k+1}) \cong \ker\left( \d+\dL \colon \bigoplus_{i\in \mathbb{Z}}\Omega^{k-2i}(\breve{X} ,\CC) \longrightarrow \bigoplus_{i\in \mathbb{Z}}\Omega^{k+1-2i}(\breve{X}, \CC) \right)
	\]
	and
	\[
	\im(\d\dL \colon C^{k}\to C^{k}) \cong \bigoplus_{m~\mathrm{even}}\im(\d\dL \colon \Omega^{m}(\breve{X}, \CC) \to\Omega^{m}(\breve{X}, \CC)).
	\]
	Consequently, we define the quotient space $V$ as:
	\begin{equation}\label{identification}
		V := \frac{\ker\left( \d+\dL \colon \bigoplus_{i\in \mathbb{Z}}\Omega^{k-2i}(\breve{X}, \CC) \longrightarrow \bigoplus_{i\in \mathbb{Z}}\Omega^{k+1-2i}(\breve{X}, \CC) \right)}{\bigoplus_{m~\mathrm{even}}\im(\d\dL \colon \Omega^{m}(\breve{X}, \CC)\to\Omega^{m}(\breve{X}, \CC))}.
	\end{equation}
	
	Let $A = \sum_{k=0}^{n}[A_{2k}] \in \bigoplus_{m~\mathrm{even}}H_{\d + \dL}^{m}(\breve{X}, \CC)$, with $[A_{2k}] \in H_{\d + \dL}^{2k}(\breve{X}, \CC)$ for all $0 \le k \le n$. Our goal is to prove that $\bigoplus_{m~\mathrm{even}}H^{m}_{\d + \dL}(\breve{X}, \CC) \subset V$.
	
	Note that $A_{2k} \in \ker(\d) \cap \ker(\dL) \cap \Omega^{2k}(\breve{X}, \CC)$. Thus, applying the operator yields:
	\begin{align}
		(\d + \dL)\left( \sum_{k=0}^{n}A_{2k} \right) &= \sum^{n}_{k=1} \left( \d A_{2k} + \dL A_{2(k-1)} \right) = 0 \\
		&\implies \sum_{k=0}^{n}A_{2k} \in \ker\left( \d+\dL \colon \bigoplus_{i\in \mathbb{Z}}\Omega^{k-2i}(\breve{X}, \CC) \longrightarrow \bigoplus_{i\in \mathbb{Z}}\Omega^{k+1-2i}(\breve{X}, \CC) \right).
	\end{align}
	This implies that $\sum_{k=0}^{n}A_{2k}$ belongs to the kernel of $\d+\dL$ on the direct sum.
	
	Next, suppose $A,A' \in \bigoplus_{m~\mathrm{even}}H^{m}_{\d +\dL}(\breve{X}, \CC)$ such that $A - A' = \sum^{n}_{k=0} \d \dL B_{2k}$, where $B_{2k} \in \Omega^{2k}(\breve{X}, \CC)$ for all $k$. It immediately follows that:
	\[
	A-A' \in \bigoplus_{m~\mathrm{even}}\im(\d\dL \colon \Omega^{m}(\breve{X}, \CC)\to\Omega^{m}(\breve{X}, \CC))
	\]
	so the inclusion mapping $i \colon \bigoplus_{m~\mathrm{even}}H^{m}_{\d +\dL}(\breve{X}, \CC) \hookrightarrow V$ is well-defined. We now show that $i$ is injective.
	
	Assume $i(A) = \left[ \sum^{n}_{k=0}A_{2k} \right] = 0$. This forces:
	\begin{align}
		\sum^{n}_{k=0}A_{2k} &\in \bigoplus_{m~\mathrm{even}} \im(\d\dL \colon \Omega^{m}(\breve{X}, \CC) \to \Omega^{m}(\breve{X}, \CC)) \\
		&\implies A_{2k} \in \im(\d\dL \colon \Omega^{2k}(\breve{X}, \CC) \to \Omega^{2k}(\breve{X}, \CC))
	\end{align}
	for each $k$. Thus, $A=0$, which implies that $\bigoplus_{m~\mathrm{even}} H^{m}_{\d + \dL}(\breve{X}, \CC) \subset V$.
	
    By completing the basis of $V$, we can find a complementary subspace $Q_{\mathrm{even}} \subset V$ such that:
	\[
	V = \left( \bigoplus_{m~\mathrm{even}}H^{m}_{\d + \dL}(\breve{X}, \CC) \right) \oplus  Q_{\mathrm{even}}.
	\]
	Equation \eqref{identification} then yields the desired isomorphism:
	\[
	H^{\mathrm{even}}_{\d+\dL}(C^{\bullet}) \cong \left( \bigoplus_{m~\mathrm{even}}H^{m}_{\d + \dL}(\breve{X}, \CC) \right) \oplus  Q_{\mathrm{even}}.
	\]
\end{proof}

A drawback of the cohomology $H_{\d+\dL}(C^\bullet)$ is that the $Q$-term is large, making the space infinite-dimensional even on compact manifolds, as the following example shows.

\begin{example}
	Let $(\breve{X}, \breve{\omega})$ be a Kähler manifold of real dimension $2$. Then
	\begin{align}
	H^{\mathrm{odd}}_{\d+\dL}(C^{\bullet}) &= H^{1}_{\d+\dL}(\breve{X}, \CC)\\
	H^{\mathrm{even}}_{\d+\dL}(C^{\bullet}) &\simeq \left\{ f + A_{2} \;\middle|\; \d(f + \Lambda A_{2}) =0, \, f \in C^{\infty}(\breve{X}, \CC), \, A_{2} \in \Omega^{2}(\breve{X}, \CC) \right\}
	\end{align}
	where $\Lambda = \iota(\breve{\omega}^{-1})$.
\end{example}

\vspace{1em}

However, it is notable that if we restrict this cohomology to \emph{primitive forms}, we recover the primitive version of the Tseng-Yau cohomology; this is the content of Theorem \ref{prim_BC}. Let us proceed by recalling some needed concepts.

\vspace{1em}

Recall the definition of the \emph{dual Lefschetz operator}:
\begin{align}
	\Lambda \colon \Omega^{k}(\breve{X}, \CC) & \longrightarrow \Omega^{k-2}(\breve{X}, \CC) \\
	A & \longmapsto \Lambda A := \iota(\boldsymbol{\alpha})A.
\end{align}

\begin{defin}\label{def:primitive-forms}
	A \emph{primitive form} $B \in \Omega^{m}(\breve{X}, \CC)$ is an $m$-form such that $\Lambda(B) = 0$. In what follows, we denote by $\mathcal{P}^{m}(\breve{X}, \CC)$ the space of primitive $m$-forms, and by $\mathcal{P}(\breve{X}, \CC) = \bigoplus_{i=0}^{2n} \mathcal{P}^{i}(\breve{X}, \CC)$ the space of all primitive forms.
\end{defin}

\vspace{1em}

Tseng and Yau defined in \cite{Tseng_2012} a primitive version of $H_{\d +\dL}(\breve{X})$:
\[
PH_{\d +\dL}^{k}(\breve{X}) := \frac{\ker\left( \d + \dL \colon \Omega^{k}(\breve{X}, \CC) \longrightarrow \Omega^{k+1}(\breve{X}, \CC) \oplus \Omega^{k-1}(\breve{X}, \CC) \right) \cap \mathcal{P}^{k}(\breve{X}, \CC)}{\im\left( \d\dL \colon \Omega^{k}(\breve{X}, \CC) \longrightarrow \Omega^{k}(\breve{X}, \CC) \right) \cap \mathcal{P}^{k}(\breve{X}, \CC)}.
\]
Similarly, we can define the notion of primitive cohomology for $HP(C^{\bullet})$:

\begin{defin}\label{def:primitive-complex}
	For the complex $C^{\bullet}=(\Omega(\breve{X}, \CC)((u)), \d + u\dL, \d\dL)$, using the identification $C^{k} \simeq \bigoplus_{i \in \ZZ}\Omega^{k-2i}(\breve{X}, \CC)$, one can define the \emph{primitive Bott-Chern cohomology} as:
	\[
	PH_{\d+\dL}^{k}(C^{\bullet}) := \frac{\ker\left( \d+\dL \colon \bigoplus_{m \in \ZZ}\Omega^{k-2m}(\breve{X}, \CC)\longrightarrow \bigoplus_{m \in \ZZ}\Omega^{k-2m}(\breve{X}, \CC) \right)\cap \mathcal{P}(\breve{X}, \CC) }{\left(  \bigoplus_{m \in \ZZ}\im\left(\d\dL \colon \Omega^{k-2m}(\breve{X}, \CC)\longrightarrow\Omega^{k-2m}(\breve{X}, \CC) \right) \right) \cap \mathcal{P}(\breve{X}, \CC) }.
	\]
\end{defin}

\begin{prop}\label{cor:better-recognized}
For the primitive Bott-Chern cohomology (Definition \ref{def:primitive-complex}), we have for each $k\geq 0$:
	\[
	PH_{\d+\dL}^{k}(C^{\bullet}) = \frac{\ker\left( \d+\dL \colon \bigoplus_{m \in \ZZ}\Omega^{k-2m}(\breve{X}, \CC)\longrightarrow \bigoplus_{m \in \ZZ}\Omega^{k-2m}(\breve{X}, \CC) \right)\cap \mathcal{P}(\breve{X}, \CC) }{\bigoplus_{m \in \ZZ}\d\dL \left(\mathcal{P}^{k-2m}(\breve{X}, \CC) \right)}.
	\] 
\end{prop}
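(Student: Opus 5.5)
The plan is to show that the two denominators coincide, that is
\[
\Bigl(\bigoplus_{m\in\ZZ}\im\bigl(\d\dL\colon\Omega^{k-2m}\to\Omega^{k-2m}\bigr)\Bigr)\cap\mathcal{P}(\breve{X},\CC)=\bigoplus_{m\in\ZZ}\d\dL\bigl(\mathcal{P}^{k-2m}(\breve{X},\CC)\bigr).
\]
The numerators in Definition \ref{def:primitive-complex} and in the statement are literally the same, so this identity proves the proposition.

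\textbf{Reduction to pure degree.} The first step is to pass to a single degree. The operator $\Lambda$ is homogeneous of degree $-2$, so it preserves the decomposition $\bigoplus_m\Omega^{k-2m}$. Hence a mixed-degree form $\sum_m\eta_{k-2m}$ is primitive if and only if each component $\eta_{k-2m}$ is primitive. Likewise, $\d\dL$ preserves degree, so the left-hand side splits componentwise. It therefore suffices to prove, for each degree $j$,
\[
\im\bigl(\d\dL\colon\Omega^j\to\Omega^j\bigr)\cap\mathcal{P}^j=\d\dL(\mathcal{P}^j).
\]

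\textbf{Key commutation relations.} Let $L=\breve\omega\wedge\cdot$ denote the Lefschetz operator. I will use the standard symplectic identities (Tseng--Yau \cite{Tseng_2012}):
\[
[\d,L]=0,\qquad [L,\dL]=\pm\d,\qquad [\Lambda,\d\dL]=0.
\]
From these, $\d\dL L\beta=\d\bigl(L\dL\beta\mp\d\beta\bigr)=L\,\d\dL\beta$, so $\d\dL$ commutes with $L$. Together with $[\Lambda,\d\dL]=0$, this shows that $\d\dL$ commutes with the whole $\mathfrak{sl}_2$-action.

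\textbf{The inclusion $\supseteq$.} If $\beta\in\mathcal{P}^j$, then $\Lambda\,\d\dL\beta=\d\dL\,\Lambda\beta=0$. Hence $\d\dL(\mathcal{P}^j)\subset\im(\d\dL)\cap\mathcal{P}^j$.

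\textbf{The inclusion $\subseteq$.} Take $\eta=\d\dL\beta$ primitive, with $\beta\in\Omega^j$ arbitrary. Write the Lefschetz decomposition $\beta=\sum_{r\ge0}L^r\beta_r$, where each $\beta_r\in\mathcal{P}^{j-2r}$. This decomposition holds pointwise for all degrees, since primitive forms of degree greater than $n$ vanish. Because $\d\dL$ commutes with $L$,
\[
\eta=\sum_r L^r\,\d\dL\beta_r,
\]
and by the previous step each $\d\dL\beta_r$ is primitive of degree $j-2r$. So this expression is again a Lefschetz decomposition of $\eta$. Since $\eta$ is primitive, uniqueness of the Lefschetz decomposition forces $L^r\d\dL\beta_r=0$ for all $r\ge1$. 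Therefore $\eta=\d\dL\beta_0$ with $\beta_0\in\mathcal{P}^j$.

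\textbf{Main obstacle.} The substantive point is justifying the commutation identities with the paper's sign conventions for $\dL=[\d,\Lambda]$, in particular $[L,\dL]=\pm\d$. I would verify this from $[L,\Lambda]=H$ and $[\d,L]=0$ via the Jacobi identity. I would also make sure that the uniqueness of the Lefschetz decomposition is applied degreewise and pointwise, which is legitimate because it is a statement of linear algebra on each fibre.
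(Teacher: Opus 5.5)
Your proof is correct and follows the paper's route. The paper's proof only cites Lemma 3.9 of \cite{Tseng_2012}. Your argument establishes exactly the content of that lemma: after the degreewise reduction, you use that $\d\dL$ commutes with $L$ and $\Lambda$, together with uniqueness of the Lefschetz decomposition, to obtain $\im(\d\dL)\cap\mathcal{P}^j=\d\dL(\mathcal{P}^j)$. The sign ambiguity in $[L,\dL]=\pm\d$ is harmless, since the extra term is killed by $\d^2=0$ either way.
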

\begin{proof}
    This is a direct consequence of \cite[Lemma 3.9]{Tseng_2012}.
\end{proof}

\begin{theo}\label{prim_BC}
	Let $(\breve{X}, \breve{\omega})$ be a symplectic manifold of dimension $2n$. Then:
	\[
	PH^{k}_{\d+\dL}(C^{\bullet}) \cong \begin{cases}
		\bigoplus_{m ~\mathrm{even}} PH_{\d+\dL}^{m}(\breve{X}, \CC), & \text{if } k \text{ is even,} \\[2ex]
		\bigoplus_{m ~\mathrm{odd}} PH_{\d+\dL}^{m}(\breve{X}, \CC), & \text{if } k \text{ is odd.}
	\end{cases}
	\]
\end{theo}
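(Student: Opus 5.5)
The plan is to exploit one basic fact about primitive forms: on primitive forms, $\dL$ raises no obstruction. Indeed, for $B\in\mathcal{P}^m(\breve X,\CC)$ we have $\Lambda B=0$. Hence $\dL B=[\d,\Lambda]B=\d\Lambda B-\Lambda\d B=-\Lambda \d B$. Moreover, by the Lefschetz decomposition, $\d B = B^0_{m+1}+L B^1_{m-1}$ with $B^0,B^1$ primitive (a standard fact from \cite{Tseng_2012}). Consequently $\dL B$ is, up to a nonzero constant depending on $m,n$, the primitive form $B^1_{m-1}$. In particular $\dL B=0$ iff $\d B$ is primitive.

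We work with the identification $C^k\simeq\bigoplus_{m\equiv k\ (2)}\Omega^m$. Take a primitive mixed-degree element $A=\sum_m A_m$, $A_m\in\mathcal{P}^m$, with $(\d+\dL)A=0$. Sorting by degree, the condition reads $\d A_{m-1}+\dL A_{m+1}=0$ in each degree $m$. The key step is to show that this forces $\d A_{m-1}=0$ and $\dL A_{m+1}=0$ separately. By the remark above, $\dL A_{m+1}$ is primitive, since it is a multiple of the primitive component $B^1$ of $\d A_{m+1}$. Write $\d A_{m-1}=P+LQ$ with $P,Q$ primitive, so $\dL A_{m-1}$ is proportional to $Q$. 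Then the equation gives $P+LQ=-\dL A_{m+1}$, which is primitive. By uniqueness of the Lefschetz decomposition, $LQ=0$, i.e.\ $\dL A_{m-1}=0$. Running this over all degrees, every $A_m$ satisfies $\dL A_m=0$. The degree-$m$ equation then reduces to $\d A_{m-1}=0$, so each $A_m$ is also $\d$-closed. One must check the degree range where $L$ fails to be injective, i.e.\ $Q$ of degree $m-2$ with $L Q$ of degree $m$. Here $L$ is injective on primitive $(m-2)$-forms whenever the relevant $m-2\le n-1$, and primitive forms of degree greater than $n$ vanish. I expect this degree bookkeeping, together with the sign and constant conventions for $\dL$ on primitive forms, to be the main technical obstacle. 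I would handle it using Tseng--Yau's explicit formulas for $\d$ and $\dL$ on Lefschetz components.

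It follows that the kernel in the numerator of Definition~\ref{def:primitive-complex} equals $\bigoplus_{m\equiv k}\big(\ker\d\cap\ker\dL\cap\mathcal P^m\big)$. By Proposition~\ref{cor:better-recognized}, the denominator is $\bigoplus_{m\equiv k}\d\dL(\mathcal P^m)$. That same proposition (i.e.\ \cite[Lemma 3.9]{Tseng_2012}) shows $\operatorname{im}(\d\dL)\cap\mathcal P^m=\d\dL(\mathcal P^m)$, which is exactly the denominator of $PH^m_{\d+\dL}(\breve X,\CC)$. Both numerator and denominator therefore split as direct sums over degrees of the correct parity. The quotient is then the direct sum $\bigoplus_{m\equiv k\ (2)}PH^m_{\d+\dL}(\breve X,\CC)$, and the isomorphism is induced by the inclusion of pure-degree classes, as in the proof of Theorem~\ref{BC_cohomology}. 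Injectivity is immediate from the degreewise splitting of the image. Surjectivity follows from the kernel computation above: it shows no ``$Q$-term'' survives in the primitive setting.
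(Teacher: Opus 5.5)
Your proposal is correct, and its skeleton is the paper's. Both show that the kernel of $\d+\dL$ on mixed-degree primitive forms splits degreewise as $\bigoplus_m \ker\d\cap\mathcal{P}^m$ (on primitive forms $\ker\d\subset\ker\dL$). Both then quotient by $\bigoplus_m \d\dL(\mathcal{P}^m)$ using Proposition~\ref{cor:better-recognized}.

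The difference is how you decouple the equations $\d A_{m-1}+\dL A_{m+1}=0$. The paper runs a descending induction on degree. Since $B_{2(n+1)}=0$, the top equation gives $\d B_{2n}=0$. Whenever $\d B_{2(i+1)}=0$, one has $\dL B_{2(i+1)}=\pm\Lambda\d B_{2(i+1)}=0$, so the next equation gives $\d B_{2i}=0$. This needs only the trivial implication $\d B=0\Rightarrow\dL B=0$ for primitive $B$, plus boundedness of degrees.

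You instead use the finer fact that $\dL$ maps $\mathcal{P}^{m+1}$ into $\mathcal{P}^m$, which follows from $\d B=B^0+LB^1$. Then each equation on its own kills the $L\mathcal{P}^{m-2}$-component of $\d A_{m-1}$, giving $\dL A_{m-1}=0$ for every $m$ with no induction order. A second pass yields $\d A_{m-1}=0$. Your route costs the Lefschetz/$\mathfrak{sl}_2$ input, but it is local in degree and shows why no $Q$-term survives on primitive forms. The paper's route is more elementary.

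The degree bookkeeping you expected to be the main obstacle does not arise. Once $LQ=0$, you get $\dL A_{m-1}=-\Lambda\d A_{m-1}=-\Lambda(P+LQ)=0$ directly, so you never need $L$ to be injective on $\mathcal{P}^{m-2}$. Equivalently, the equation says $\d A_{m-1}=-\dL A_{m+1}$ is primitive, hence $\Lambda\d A_{m-1}=0$. Your sign $\dL B=-\Lambda\d B$ is the correct one for $\dL=[\d,\Lambda]$. The paper writes $+\Lambda\d B$, which is a harmless slip.
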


\begin{proof}
	We present the proof for the even case, as the odd case is analogous.
	
	Let $\alpha = \sum^{n}_{i=0} B_{2i} \in \bigoplus_{m~\mathrm{even}} \mathcal{P}^{m}(\breve{X}, \CC)$, where $B_{2i} \in \mathcal{P}^{2i}(\breve{X}, \CC)$ for $0 \le i \le n$. Observe that applying the operator $(\d + \dL)$ yields:
	\[
	(\d + \dL)(\alpha) = \sum^{n-1}_{i=1} \left( \d B_{2i} + \dL B_{2(i+1)} \right).
	\]
	For primitive forms, we have the identity $\dL B_{2k} = \Lambda \d B_{2k}$. Substituting this into the sum gives:
	\[
	(\d+\dL)(\alpha) = \sum^{n-1}_{i=1} \left( \d B_{2i} + \Lambda \d B_{2(i+1)} \right).
	\]
	Therefore, $(\d + \dL)(\alpha) = 0$ if and only if:
	\[
	\d B_{2i} + \Lambda \d B_{2(i+1)} = 0 \quad \text{for all } 1 \le i \le n-1.
	\]
	We can show that this system of equations holds if and only if $\d B_{2i} = 0$ for all $0 \le i \le n$. We proceed by descending induction on the index $i$. Because $\breve{X}$ has dimension $2n$, any form of degree strictly greater than $2n$ must vanish; thus, $B_{2(n+1)} = 0$. Consequently, for $i = n$, the equation reduces to $\d B_{2n} = 0$. Assuming $\d B_{2(i+1)} = 0$ for some $i$, the relation $\d B_{2i} + \Lambda \d B_{2(i+1)} = 0$ implies $\d B_{2i} = 0$. By induction, $\d B_{2i} = 0$ for all $0 \le i \le n$.
	
	It follows that $(\d + \dL)(\alpha) = 0$ if and only if $\d B_{2i} = 0$ for all $0 \le i \le n$. This implies that the kernel of $(\d + \dL)$ on even-degree primitive forms decomposes completely:
	\begin{align}
		\ker&\left( \d+\dL \colon \bigoplus_{m~\mathrm{even}}\Omega^{m}(\breve{X}, \CC) \longrightarrow \bigoplus_{m~\mathrm{odd}}\Omega^{m}(\breve{X}, \CC) \right) \cap \mathcal{P}(\breve{X}, \CC) \\
		&= \bigoplus_{m~\mathrm{even}} \ker\left( \d \colon \Omega^{m}(\breve{X}, \CC) \to \Omega^{m+1}(\breve{X}, \CC) \right) \cap \mathcal{P}(\breve{X},\CC).
	\end{align}
	Taking the quotient by the image of $\d\dL$, we obtain the desired isomorphism:
	\[
	PH^{\mathrm{even}}_{\d+\dL}(C^{\bullet}) \cong \frac{\displaystyle \bigoplus_{m~\mathrm{even}} \ker\left( \d \colon \Omega^{m}(\breve{X}, \CC) \to \Omega^{m+1}(\breve{X}, \CC) \right) \cap \mathcal{P}(\breve{X}, \CC)}{\displaystyle \bigoplus_{m~\mathrm{even}} \d\dL\left( \mathcal{P}^{m}(\breve{X}, \CC) \right)} = \bigoplus_{m~\mathrm{even}} PH_{\d+\dL}^{m}(\breve{X}, \CC).
	\]
\end{proof}

    \ 

\subsection{Complex cohomologies}

Here we build on the discussion established in the previous section analogously aiming at obtaining a ``mirror cohomology'' to $H_{\d+\dL}(C^{\bullet})$ for a given non-K\"ahler SYZ mirror pair.

Let $X$ be a complex manifold of dimension $2n$, and consider the complex:
\[
G^{\bullet} \colon \quad \dots \xrightarrow{\partial} \Omega^{0, n}(X) \xrightarrow{\partial} \Omega^{0,n-1}(X) \oplus\Omega^{1,n}(X) \xrightarrow{\partial} \Omega^{0,n-2}(X) \oplus \Omega^{1,n-1}(X)\oplus \Omega^{2,n}(X) \xrightarrow{\partial} \dots
\]
Defining $G^{k} = \bigoplus_{p = 0}^{k} \Omega^{p, n-k+p}(X)$, we equip $G^{\bullet}$ with the bicomplex structure $G^{\bullet} = (G,\partial, \overline{\partial})$. Observe that $\partial(G^{k}) \subset G^{k+1}$ and $\overline{\partial}(G^{k}) \subset G^{k-1}$. Hence, the operator $\partial$ carries a degree of $+1$, while $\overline{\partial}$ carries a degree of $-1$.

Next, let $u$ be a formal variable of degree $+2$. We define the formal complex $\hat{C}^{\bullet} := (G^\bullet((u)) , \partial + u \overline{\partial})$. The total differential $\partial + u \overline{\partial}$ has degree $+1$, and the $k$-th term of the complex is given by:
\[
\hat{C}^{k} = \left\{ \sum_{i \in \mathbb{Z}} u^{i} g_{i} \;\middle|\; g_i \in G^{k-2i} \text{ for all } i \in \mathbb{Z} \right\}.
\]
Furthermore, we can naturally identify $\hat{C}^{k}$ with the direct sum:
\[
\hat{C}^{k} \cong \bigoplus_{i \in \mathbb{Z}} \bigoplus_{p = 0}^{k-2i} \Omega^{p, n-k+2i+p}(X).
\]
With this structure in place, we define the following four cohomologies:

\begin{enumerate}
	\item $\partial \overline{\partial}$ Cohomology:
	\[
	H_{s}^{k}(\hat{C}^{\bullet}):=\frac{\mathrm{Ker}(\partial \overline{\partial}: \hat{C}^{k}\longrightarrow \hat{C}^{k})}{\mathrm{Im}(\partial \overline{\partial}: \hat{C}^{k}\longrightarrow \hat{C}^{k})}
	\]
	
	\item Periodic Cyclic Cohomology (P.C.):
	\[
	H P^{k}(\hat{C}^{\bullet}):=\frac{\mathrm{Ker}(\partial +u\overline{\partial}:\hat{C}^{k}\longrightarrow \hat{C}^{k+1})}{\mathrm{Im}(\partial +u\overline{\partial}: \hat{C}^{k-1}\longrightarrow \hat{C}^{k})}
	\]
	
	\item Bott-Chern Cohomology:
	\[
	H_{\partial +\overline{\partial}}^{k}(\hat{C}^{\bullet}):=\frac{\mathrm{Ker}(\partial +u\overline{\partial}:\hat{C}^{k}\longrightarrow \hat{C}^{k+1})}{\mathrm{Im}(\partial \overline{\partial}:\hat{C}^{k}\longrightarrow \hat{C}^{k})}
	\]
	
	\item Aeppli Cohomology:
	\[
	H_{\partial + \overline{\partial}}^{k}(\hat{C}^{\bullet}):=\frac{\mathrm{Ker}(\partial \overline{\partial}: \hat{C}^{k}\longrightarrow \hat{C}^{k})}{\mathrm{Im}(\partial+ u\overline{\partial}: \hat{C}^{k-1}\longrightarrow \hat{C}^{k})}
	\]
\end{enumerate}

\vspace{1em}

The proof of Theorem \ref{complex_BC} is analogous to that of Theorem \ref{BC_cohomology} and is therefore omitted.

\begin{theo}\label{complex_BC}
	Let $X$ be a complex manifold of dimension $2n$. There exist subspaces $\hat{Q}_{\mathrm{even}}$ and $\hat{Q}_{\mathrm{odd}}$ in $H_{\partial +\overline{\partial}}^{k}(\hat{C}^{\bullet})$ such that, when $n \equiv 0 \pmod 2$,
	\[
	H_{\partial +\overline{\partial}}^{k}(\hat{C}^{\bullet}) \cong \begin{cases}
		\left( \bigoplus_{m \text{ even}} \bigoplus_{p+q=m} H_{\partial + \overline{\partial}}^{p,q}(X) \right) \oplus \hat{Q}_{\mathrm{even}}, & k \text{ even} \\[2ex]
		\left( \bigoplus_{m \text{ odd}} \bigoplus_{p+q=m} H_{\partial + \overline{\partial}}^{p,q}(X) \right) \oplus \hat{Q}_{\mathrm{odd}}, & k \text{ odd}
	\end{cases}
	\]
	and when $n \equiv 1 \pmod 2$,
	\[
	H_{\partial + \overline{\partial}}^{k}(\hat{C}^{\bullet}) \cong \begin{cases}
		\left( \bigoplus_{m \text{ odd}} \bigoplus_{p+q=m} H_{\partial +\overline{\partial}}^{p,q}(X) \right) \oplus \hat{Q}_{\mathrm{even}}, &  k \text{ even} \\[2ex]
		\left( \bigoplus_{m \text{ even}} \bigoplus_{p+q=m} H_{\partial + \overline{\partial}}^{p,q}(X) \right) \oplus \hat{Q}_{\mathrm{odd}}, & k \text{ odd}
	\end{cases}
	\]
\end{theo}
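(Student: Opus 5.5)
The plan is to mirror the proof of Theorem \ref{BC_cohomology}, now working with the bigraded pieces of $G^\bullet$ and tracking parity.

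\textbf{Step 1: unfold $\hat{C}^k$.} Using the identification
\[
\hat{C}^k \cong \bigoplus_{i\in\ZZ}\bigoplus_{p=0}^{k-2i}\Omega^{p,n-k+2i+p}(X),
\]
a summand $\Omega^{p,q}$ with $q = n-k+2i+p$ has total degree $m = p+q = 2p+n-k+2i$. Hence $m \equiv n-k \pmod 2$. So for $k$ even, $\hat{C}^k$ collects exactly the $(p,q)$-forms with $p+q \equiv n \pmod 2$, and for $k$ odd those with $p+q\equiv n+1$. This explains the dependence on the parity of $n$ in the statement. For $n$ even and $k$ even one gets the even total degrees, and the other three cases follow in the same way.

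\textbf{Step 2: identify the quotient.} Under this identification:
\begin{itemize}
\item the kernel of $\partial+u\overline\partial$ becomes the kernel of $\partial+\overline\partial$ acting on the mixed-bidegree sum;
\item the image of $\partial\overline\partial$ becomes $\bigoplus_{(p,q)}\im(\partial\overline\partial\colon\Omega^{p-1,q-1}\to\Omega^{p,q})$, because $\partial\overline\partial$ preserves each bidegree summand up to the shift.
\end{itemize}
This gives a quotient $\hat V$, the analogue of the space $V$ in \eqref{identification}.

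\textbf{Step 3: embed the classical groups.} Take classes $[A_{p,q}]\in H^{p,q}_{\partial+\overline\partial}(X)$ with $p+q$ of the correct parity. Each representative satisfies $\partial A_{p,q}=0=\overline\partial A_{p,q}$, so $\sum A_{p,q}$ lies in $\ker(\partial+\overline\partial)$. Adding $\partial\overline\partial$-exact terms stays inside the image, so the map into $\hat V$ is well defined. It is injective because the image of $\partial\overline\partial$ splits by bidegree: if $\sum A_{p,q}$ is $\partial\overline\partial$-exact, then each component $A_{p,q}$ is $\partial\overline\partial$-exact. Choosing a complement $\hat Q_{\mathrm{even}}$ or $\hat Q_{\mathrm{odd}}$ then gives the stated splittings.

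The main obstacle is the bookkeeping in Steps 1–2. One has to check which bidegrees $(p,q)$ actually occur in $\hat{C}^k$. The constraints $0\le p\le k-2i$ together with the $u$-periodicity should let every $(p,q)$ with $0\le p,q\le n$ of the right parity appear, and I would verify this by solving for $i$ given $(p,q,k)$. One also has to check that the sum over $i\in\ZZ$ causes no duplication. Because the index $i$ is unrestricted, the same $(p,q)$ summand appears for several values of $i$, tagged by different powers of $u$. I would handle this exactly as Theorem \ref{BC_cohomology} does with $\bigoplus_{i}\Omega^{k-2i}$: collapse the copies using $u$-periodicity, and absorb any remaining redundancy into $\hat Q$.
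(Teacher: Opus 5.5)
Your argument is correct and is essentially the one the paper has in mind; the paper omits the proof as ``analogous to Theorem \ref{BC_cohomology}.'' You identify the kernel of $\partial+u\overline\partial$ and the image of $\partial\overline\partial$ bidegree-wise, use the bidegree splitting of $\operatorname{Im}(\partial\overline\partial)$ to get an injection of $\bigoplus H^{p,q}_{\partial+\overline\partial}(X)$, and take a complement.

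The duplication you worry about in the last paragraph does not occur, and your claim that the same $(p,q)$ appears for several values of $i$ is false. For fixed $k$, the summand $\Omega^{p,q}$ can only sit in the slot $u^{i}G^{k-2i}$ with $i=(q-p-n+k)/2$, which is uniquely determined. Every $(p,q)$ with $0\le p,q\le n$ and $p+q\equiv n-k \pmod 2$ appears exactly once, so no collapsing by $u$-periodicity and no absorption of redundancy into $\hat Q$ is needed.
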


Not surprisingly, akin to $H_{\d+\dL}(C^{\bullet})$, the cohomology $H_{\partial +\overline{\partial}}(\hat{C}^{\bullet})$ is generically infinite-dimensional:

\begin{example}
	Let $X$ be a complex manifold of complex dimension $1$. Then:
	\begin{align}
	H_{\partial +\overline{\partial}}^{\mathrm{odd}}(\hat{C}^{\bullet}) &\simeq H^{1,1}_{\partial + \overline{\partial}}(X) \oplus H^{0,0}_{\partial + \overline{\partial}}(X)\\
	H^{\mathrm{even}}_{\partial + \overline{\partial}}(\hat{C}^{\bullet}) &\simeq \mathrm{Ker}(\d \colon \Omega^{1}(X) \to \Omega^{2}(X)).
	\end{align}
\end{example}

\vspace{1em}

\subsection{Cohomologies and Non-K\"ahler SYZ}

Given a non-K\"ahler SYZ mirror pair $(X,\mathsf{\Omega}, \omega)$ and $(\breve{X},\breve{\omega},\breve{\mathsf{\Omega}})$, the cohomology $H_{\d + \dL}(\breve{X})$ measures well-behaved supersymmetric deformations of type IIA $\mathrm{SU}(n)$-structures \cite{Gen_cohom}. The Fourier-Mukai transform of differential forms yields an isomorphism:
\[
FT \colon \bigoplus^{k}_{p=0} \Omega_{B}^{p,n-k+p}(X) \overset{\simeq}{\longrightarrow} \Omega^{k}_{B}(\breve{X}, \CC).
\]
Lau, Tseng, and Yau established the following equivalence:

\begin{theo_with_name}{Lau, Tseng, Yau~\cite[Theorem 4.5]{Lau_2015}}\label{Fourier-Mukai_iso}
	The Fourier-Mukai transform gives an isomorphism of bicomplexes:
	\[
	\left( \Omega_{B}(\breve{X}, \CC), \frac{(-1)^{n}\mathbf{i}}{2} \d , \frac{(-1)^{n}\mathbf{i}}{2} \dL  \right) \cong \left( \Omega_{B}(X), \partial , \overline{\partial}  \right).
	\]
\end{theo_with_name}

Moreover, they prove:

\begin{theo_with_name}{Lau, Tseng, Yau~\cite[Theorem 6.7]{Lau_2015}}\label{mirror_Yau_cohomologies}
	Given a non-K\"ahler SYZ mirror pair $(X,\mathsf{\Omega}, \omega)$ and $(\breve{X},\breve{\omega},\breve{\mathsf{\Omega}})$, the Fourier-Mukai transform yields an isomorphism:
	\[
	H^{(n-p,q)^{\Delta}}_{B,\d + \dL}(\breve{X},\CC) \simeq H^{p,q}_{B, \partial + \overline{\partial}} (X),
	\]
	where $\Delta$ is the real polarization on $\breve{X}$ induced by the bundle structure and
    \begin{itemize}
    \item 
\[
H^{p,q}_{B,\partial + \overline{\partial}}(X) := \frac{\mathrm{Ker} (\partial +\overline{\partial}) \cap \Omega^{p,q}_B (X)}{\mathrm{Im}(\partial \overline{\partial}) \cap \Omega^{p,q}_{B}(X)}.
\]
is the basic Bott-Chern cohomology, and
\item 
\[
H_{B,\d +\dL}^{(p,q)^{\Delta}}(\breve{X}) := \frac{\mathrm{Ker}(\d+\dL) \cap \Omega^{(p,q)^{\Delta}}_{B}(\breve{X})}{\mathrm{Im}(\d\dL) \cap \Omega^{(p,q)^{\Delta}}_{B}(\breve{X})}.
\]
is the $\Delta$-refined Tseng-Yau cohomology.
\end{itemize}
\end{theo_with_name}

Although cohomologies $H_{\d+\dL}(C^{\bullet}_{\breve{X}})$ and $H_{\partial +\overline{\partial}} (\hat{C}^{\bullet}_{X})$ are generally infinite-dimensional, as a striaghtforward consequence of Theorem \ref{Fourier-Mukai_iso}, we have:

\begin{theo}\label{Fourier_Mukai_sym}
	The Fourier-Mukai transform induces the following isomorphism for all $k$:
	\[
	H_{B,\partial + \overline{\partial}}^{k}(\hat{C}_{X}^{\bullet}) \cong H^{k}_{B,\d + \dL}(C_{\breve{X}}^{\bullet}).
	\]
\end{theo}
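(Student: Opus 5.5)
The plan is to promote the bicomplex isomorphism of Theorem \ref{Fourier-Mukai_iso} to an isomorphism of the formal complexes $\hat{C}^{\bullet}_X$ and $C^{\bullet}_{\breve{X}}$ restricted to basic forms. Then the kernels of the total differentials and the images of the $\d\dL$ (resp. $\partial\overline{\partial}$) operators correspond degree by degree, so the quotients defining the Bott-Chern-type cohomologies are isomorphic.

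\textbf{Step 1: degrees.} Using the isomorphism
\[
FT \colon \bigoplus_{p=0}^{k}\Omega_B^{p,n-k+p}(X)\to \Omega^k_B(\breve{X},\CC),
\]
I will check that $FT$ sends $G^k_B$ isomorphically onto $\Omega^k_B(\breve X,\CC)$. I then extend $FT$ $u$-linearly to
\[
\widehat{FT}\Bigl(\sum_i u^i g_i\Bigr):=\sum_i u^i FT(g_i),
\]
which maps $\hat{C}^k_{X,B}$ bijectively onto $C^k_{\breve X,B}$, since $g_i\in G^{k-2i}_B$ and $FT(g_i)\in\Omega^{k-2i}_B$. This matches the grading conventions: $\partial$ and $\d$ both have degree $+1$, while $\overline\partial$ and $\dL$ both have degree $-1$.

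\textbf{Step 2: intertwining.} By Theorem \ref{Fourier-Mukai_iso}, with $c:=\frac{(-1)^n\mathbf{i}}{2}$, we have $FT\circ\partial = c\,\d\circ FT$ and $FT\circ\overline\partial = c\,\dL\circ FT$, up to fixing the direction of $FT$. Hence
\[
\widehat{FT}\circ(\partial+u\overline\partial)=c\,(\d+u\dL)\circ\widehat{FT},\qquad \widehat{FT}\circ\partial\overline\partial=c^2\,\d\dL\circ\widehat{FT}.
\]
The nonzero constants $c$ and $c^2$ do not change kernels or images. Therefore $\widehat{FT}$ maps $\ker(\partial+u\overline\partial)\cap\hat{C}^k_B$ onto $\ker(\d+u\dL)\cap C^k_B$, and $\partial\overline\partial(\hat{C}^k_B)$ onto $\d\dL(C^k_B)$. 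The induced map on quotients is then a well-defined isomorphism.

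\textbf{Main obstacle: basic complexes and bookkeeping.} I must make sure the basic cohomologies are defined using the subcomplexes of basic forms. That is, the images are taken as $\partial\overline\partial$ of basic forms, not as the intersection of the full image with basic forms. This is legitimate because the operators preserve basic forms, as is implicit in Theorem \ref{Fourier-Mukai_iso}, and I will state this convention explicitly. The other delicate point is the degree bookkeeping in Step 1. I need to verify that the index range $p=0,\dots,k$ in $G^k$ matches the direct-sum identification of $FT$ on all components, including those that vanish for degree reasons. I also need to check that $\overline\partial$ maps $\Omega^{p,n-k+p}$ into the $G^{k-1}$ summand $\Omega^{p,n-(k-1)+p}$, so that the shift by $u$ is consistent. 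Once this is checked on the $(p,q)$-components, the rest is formal.
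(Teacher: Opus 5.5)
Your route is the paper's own: the paper simply calls the statement a straightforward consequence of Theorem~\ref{Fourier-Mukai_iso}. However, the degree bookkeeping you postponed fails as you state it, and the $u$-linear extension $\widehat{FT}$ is not the right map.

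With $FT$ as defined, fiber integration against $\exp(\sum_j \d\theta_j\wedge\d\breve{\theta}_j)$ trades $\d\theta_I$ for $\d\breve{\theta}_{I^c}$. Hence $f\,\d z_I\wedge\d\overline{z}_J\mapsto \pm f\,\d r_J\wedge\d\breve{\theta}_{I^c}$, so a $(p,q)$-form goes to a form of degree $n-p+q$. This is the $(n-p,q)^{\Delta}$ labelling of Theorem~\ref{mirror_Yau_cohomologies}. Consequently $FT(G^k_B)=\Omega^{2n-k}_B(\breve{X},\CC)$, not $\Omega^k_B$.

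The roles of the operators are also swapped. Since $\partial$ raises $p$, it removes a $\d\breve{\theta}$ and corresponds to the degree $-1$ operator $\dL$. Since $\overline{\partial}$ adds a $\d r$, it corresponds to $\d$. Already for $n=1$ one gets $FT(f)=\pm f\,\d\breve{\theta}$, $FT(\partial f)=\pm\tfrac{\mathbf{i}}{2}f'$ and $FT(\overline{\partial} f)=\pm\tfrac{\mathbf{i}}{2}f'\,\d r\wedge\d\breve{\theta}$.

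So the identities $FT\circ\partial=c\,\d\circ FT$ and $FT\circ\overline{\partial}=c\,\dL\circ FT$ in your Step 2 are false. Theorem~\ref{Fourier-Mukai_iso} has to be read with $\partial\leftrightarrow\dL$ and $\overline{\partial}\leftrightarrow\d$. As a result, $\widehat{FT}$ sends $u^iG^{k-2i}_B$ to $u^i\Omega^{2n-k+2i}_B$, which sits in degree $2n-k+4i$. It therefore does not map $\hat{C}^k_B$ into any single $C^m_B$, and it intertwines $\partial+u\overline{\partial}$ with a multiple of $\dL\pm u\,\d$, not of $\d+u\dL$. The paper's displayed identification $\bigoplus_p\Omega^{p,n-k+p}_B(X)\cong\Omega^k_B(\breve{X},\CC)$ contains the same slip, which is what you inherited.

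The fix is to invert $u$ and shift. Since $G^m=0$ unless $0\le m\le 2n$, each $\hat{C}^k_B$ is a finite direct sum. Define
\[
\Psi_k\Bigl(\sum_i u^i g_i\Bigr):=\sum_i u^{\,k-n-i}\,FT(g_i),\qquad g_i\in G^{k-2i}_B .
\]
This is a linear bijection $\hat{C}^k_B\to C^k_B$, because $u^{k-n-i}\Omega^{2n-k+2i}$ has degree exactly $k$; this uses $2n-k\equiv k \pmod 2$.

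Write $FT\circ\overline{\partial}=c\,\d\circ FT$ and $FT\circ\partial=\epsilon c\,\dL\circ FT$, where $\epsilon=\pm1$ depends on the sign convention for $\Lambda$. A direct check then gives
\[
\Psi_{k+1}\circ(\partial+u\overline{\partial})=c\,(\d+\epsilon u\dL)\circ\Psi_k,\qquad \Psi_k\circ\partial\overline{\partial}=-\epsilon c^2\,\d\dL\circ\Psi_k .
\]
The sign $\epsilon$ is removed by the substitution $u\mapsto\epsilon u$, which commutes with $\d\dL$. With $\Psi_k$ in place of $\widehat{FT}$, your kernel-and-image argument goes through unchanged.
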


\ 
	
	\section{The Tseng-Yau cohomology of Solvmanifolds}
	
Consider the semi-direct product $G \ltimes_{\rho^{*}} \RR^{n}$, and let $\{e_1, \ldots, e_{2n}\}$ be a basis of left-invariant $1$-forms for the Lie algebra $\mathfrak{g} \ltimes_{\d_e \rho^*} \RR^{n}$. Then the exterior derivatives are given by:
\begin{align}
	\d e_i &= -\sum_{1\le j<k\le n} C^{i}_{j,k}\, e_j\wedge e_k, &(1\le i\le n), \\
	\d e_{i+n} &= -\sum_{j,k=1}^{n} (\d_{e}\rho^{\ast}(X_j))_{i,k}\, e_j\wedge e_{k+n}, &(1\le i\le n).
\end{align}

The construction of the non-Kähler SYZ mirror pair in Section~\ref{affine} yields a symplectic manifold $(\breve{X}, \breve{\omega}, \breve{\mathsf{\Omega}})$, which is a solvmanifold obtained as the quotient of $G \ltimes_{\rho^{*}} \RR^{n}$ by the lattice $\Gamma \ltimes_{\rho^{*}} \Xi^{*}$. From the definition of the product in $G \ltimes_{\rho^{*}} \RR^{n}$, the cohomology classes in $H_{\d+\dL}(\breve{X})$ can be represented by left-invariant forms, and the $(p,q)$-decomposition of $H_{\d+\dL}(\breve{X})$ can be computed explicitly from this basis of left-invariant $1$-forms. Theorem \ref{thm:computing-cohomology} below provides a method to compute the cohomology $H_{\d+\dL}(\breve{X})$. Applying the Fourier-Mukai transform, one can then compute the corresponding cohomology on the mirror manifold $X$.

\begin{theo_with_name}{Angella, Kasuya~\cite[Theorem 3.2]{AngellaKasuya2019}} \label{thm:computing-cohomology}
	Let $\breve{X} = \Gamma \backslash G$ be a solvmanifold endowed with a $G$-left-invariant symplectic structure $\breve{\omega}$. Suppose that the natural inclusion map 
	\[
	H^\bullet_{\mathrm{dR}}(\mathfrak{g}; \RR) \longrightarrow H^\bullet_{\mathrm{dR}}(\breve{X}; \RR)
	\]
	is an isomorphism. Then, the natural inclusion map
	\[
	\iota \colon H^\bullet_{\d+ \dL}(\mathfrak{g}; \RR) \longrightarrow H^\bullet_{\d + \dL}(\breve{X}; \RR)
	\]
	is also an isomorphism.
\end{theo_with_name}

In this section, we illustrate this result in two distinct cases. For almost abelian manifolds, we provide an explicit description of their Tseng-Yau cohomology. In the case of the generalized Heisenberg group, the computation is more involved and we only present the necessary components for the complete calculation.

\

	\subsection{Almost abelian manifolds}

Let $G = \RR \ltimes_{e^{x_1 A}} \RR^{n-1}$ be an almost abelian Lie group, where $\mathrm{tr}(A) = 0$. Assume that there exists $t_0 \in \RR \setminus \{0\}$ such that $e^{t_0 A}$ is conjugate to an integer matrix. Consider $G$ equipped with a complete left-invariant affine structure, given by the affine representation $\alpha \colon G \to \mathrm{Aff}(\RR^{n})$ with linear part $\rho(x_{1},\ldots , x_{n}) = \mathrm{diag}(1 , e^{x_1A})$. As shown in Appendix~\ref{ap:lattice}, there exists a lattice $\Gamma$ such that the quotient $B = G / \Gamma$ has an integral affine structure.

\vspace{1em}

Let $(X,\mathsf{\Omega} ,\omega) \to B$ and $(\breve{X}, \breve{\omega},\breve{\mathsf{\Omega}}) \to B$ be a supersymmetric mirror pair of type IIB and IIA, respectively, as provided in Theorem~\ref{mirrorsolvmanifolds}. By Theorem~\ref{thm:dedulvan}, the group $G \ltimes_{\rho^{*}} \RR^{n}$ acts transitively on $\breve{X}$ with stabilizer $\Gamma \ltimes_{\rho^{*}} \Xi$. Note that, by the definition of $\rho^{*}$, the quotient $(G \ltimes_{\rho^{*}} \RR^{n}) / (\Gamma \ltimes_{\rho^{*}} \Xi)$ is a solvmanifold. We now proceed to compute the cohomology $H_{\d + \dL}(\mathfrak{g} \ltimes_{\d _e \rho^{*}} \RR^{n}; \RR)$.

Let $\{\breve{e}_1, \ldots , \breve{e}_{2n}\}$ be a basis of left-invariant $1$-forms. The differentials are determined by:
\begin{align}
	\d \breve{e}_i &= -\sum_{k=2}^{n} a_{ki} \breve{e}_1 \wedge \breve{e}_k, &(1 \le i \le n), \\
	\d \breve{e}_{i+n} &= \sum_{k=2}^{n} a_{ik} \breve{e}_1 \wedge \breve{e}_{i+n}, &(1\le i \le n),
\end{align}
where $\d \breve{e}_1 = \d \breve{e}_{n+1} = 0$. Consider the form $\breve{e}_{I} \wedge \breve{e}_{J+n} \in \bigwedge^{s} \left( \mathfrak{g} \ltimes_{\d_e \rho^{*}} \RR^{n} \right)$, where $I = (i_1, \ldots, i_m)$ and $J = (j_1 ,\ldots, j_r)$ are index sets such that $m + r = s$. Expanding the differential, we find:
\begin{align}
	\d (\breve{e}_I \wedge \breve{e}_{J+n}) 
	&= -\sum_{l=1}^{m}\sum^{n}_{k=2} a_{i_l k} \breve{e}_1 \wedge \breve{e}_k \wedge \breve{e}_{i_1} \wedge \dots \wedge \widehat{\breve{e}_{i_l}} \wedge \dots \wedge \breve{e}_{i_m} \wedge \breve{e}_{j_1 +n} \wedge \dots \wedge \breve{e}_{j_r +n} \\
	&\quad + \sum^{r}_{l=1} \sum^{n}_{k=2} a_{k j_l} \breve{e}_1 \wedge  \breve{e}_{i_1} \wedge \dots \wedge \breve{e}_{i_{m}} \wedge \breve{e}_{k+n} \wedge \breve{e}_{j_1 + n} \wedge \dots \wedge \widehat{\breve{e}_{j_l + n}} \wedge \dots \wedge \breve{e}_{j_r + n} \\
	&= - \sum^{m}_{l=1}\sum_{k \in (I\cup \{1\})^{c}} a_{i_l k} \breve{e}_1 \wedge \breve{e}_k \wedge \breve{e}_{I\setminus\{i_l\}} \wedge \breve{e}_{J+n} \\
	&\quad + \sum^{r}_{l=1} \sum_{k \in (J\cup \{1\})^{c}} a_{k j_l} \breve{e}_1 \wedge \breve{e}_{I} \wedge \breve{e}_{k+n} \wedge \breve{e}_{J\setminus\{j_l\}+n} \\
	&\quad + \left( -\sum_{i \in I\setminus\{1\}} a_{ii} + \sum_{j \in J\setminus\{1\}} a_{jj} \right) \breve{e}_1 \wedge \breve{e}_I \wedge \breve{e}_{J+n}.
\end{align}
Consequently, if $1 \notin I$, the condition $\d (\breve{e}_{I} \wedge \breve{e}_{J+n}) = 0$ is satisfied if and only if:
\begin{align}
	a_{i k} &= 0 \quad \text{for all } i \in I, \; k \in (I\cup\{1\})^{c}, \\
	a_{k j} &= 0 \quad \text{for all } j \in J, \; k \in (J\cup\{1\})^{c}, \\
	\sum_{i \in I} a_{ii} &= \sum_{j \in J\setminus\{1\}} a_{jj}.
\end{align}
If $1 \in I$, the differential vanishes identically, i.e., $\d ( \breve{e}_{I}\wedge \breve{e}_{J+n} ) = 0$. 

Recalling that $\breve{\omega} = \sum_{k=1}^{n} \breve{e}_{k} \wedge \breve{e}_{k+n}$, the dual Lefschetz operator acts as:
\[
\Lambda(\breve{e}_{I}\wedge \breve{e}_{J+n}) = \sum_{\sigma \in I \cap J} (-1)^{\xi_\sigma}\breve{e}_{I\setminus \{\sigma\}} \wedge \breve{e}_{J\setminus \{\sigma\} + n}
\]
for appropriate signs $(-1)^{\xi_\sigma}$. Differentiating this yields:
\begin{align}
	\d \Lambda(\breve{e}_{I}\wedge \breve{e}_{J+n}) 
	&= \sum_{\sigma \in I \cap J} (-1)^{\xi_\sigma} \d \left( \breve{e}_{I\setminus \{\sigma\}} \wedge \breve{e}_{J\setminus \{\sigma \} +n } \right)  \\
	&= -\sum_{\sigma \in I \cap J} \sum_{i \in I\setminus\{\sigma\}}\sum_{k \in (I\setminus \{\sigma ,1 \})^{c}} (-1)^{\xi_{\sigma}} a_{i k} \breve{e}_1 \wedge \breve{e}_k \wedge \breve{e}_{I\setminus \{ i, \sigma\}} \wedge \breve{e}_{J\setminus \{ \sigma\}+n} \\
	&\quad + \sum_{\sigma \in I \cap J} \sum_{j \in J\setminus \{\sigma\}} \sum_{k \in (J\setminus \{\sigma,1 \})^{c}} (-1)^{\xi_{\sigma}}a_{k j} \breve{e}_1 \wedge \breve{e}_{I\setminus \{\sigma\}} \wedge \breve{e}_{k+n} \wedge \breve{e}_{J\setminus \{j,\sigma\}+n} \\
	&\quad + \sum_{\sigma \in I \cap J} (-1)^{\xi_{\sigma}}\left( -\sum_{i \in I \setminus \{\sigma,1\}} a_{ii} + \sum_{j \in J\setminus \{\sigma,1 \}} a_{jj} \right) \breve{e}_1 \wedge \breve{e}_{I\setminus \{\sigma\}} \wedge \breve{e}_{J\setminus \{\sigma\}+n}.
\end{align}
For $1 \in I$ and $1 \in J$, the vanishing condition $\d \Lambda(\breve{e}_I \wedge \breve{e}_{J+n}) = 0$ simplifies to:
\begin{align}
	a_{ik} &= 0 \quad \text{for all } i \in I \setminus\{1\}, \; k \in (I \setminus \{1\})^{c}, \\
	a_{kj} &= 0 \quad \text{for all } j \in J \setminus\{1\}, \; k \in (J \setminus \{1\})^{c}, \\
	\sum_{i \in I\setminus\{1\}}a_{ii} &= \sum_{j \in J\setminus\{1\}} a_{jj}.
\end{align}

We explicitly characterize the kernel of the differential as follows:
\begin{align}
	\ker(\d) &= \left\{ \breve{e}_I \wedge \breve{e}_{J+n} \;\middle|\; 
	\begin{aligned} 
		&I \subset \{2, \ldots, n\}, \, J \subset \{1, \ldots , n\}, \, |I|+|J| = k, \\
		&a_{i k} = 0 \text{ for } i \in I, \; k \in (I\cup \{1\})^{c}, \\ 
		&a_{k j} = 0 \text{ for } j \in J, \; k \in (J\cup\{1\})^{c}, \\ 
		&\sum_{i \in I} a_{ii} = \sum_{j \in J} a_{jj}
	\end{aligned} 
	\right\} \\
	&\quad \cup \left\langle \breve{e}_1 \wedge \breve{e}_I \wedge \breve{e}_{J+n} \;\middle|\;
	\begin{aligned} 
		&I \subset \{2, \ldots , n\}, \, J \subset \{1,\ldots, n\}, \\ 
		&|I|+|J| = k-1
	\end{aligned} 
	\right\rangle.
\end{align}

To compute the cohomology, we assume that $A$ is a diagonal matrix. Under this assumption, for $1 \in J$, the operator acts as:
\begin{align}
	\d \Lambda \d(\breve{e}_I \wedge \breve{e}_{J+n}) 
	&= \left( -\sum_{i \in I} a_{ii} + \sum_{j \in J} a_{jj} \right) \d \left( \breve{e}_I \wedge \breve{e}_{J\setminus\{1\}+n} \right)   \\
	&= \left( -\sum_{i \in I} a_{ii} + \sum_{j \in J} a_{jj} \right)^{2} \breve{e}_1 \wedge \breve{e}_I \wedge \breve{e}_{J\setminus\{1\}+n}.
\end{align}
Recalling that $H_{\d+\dL}^{k}(\breve{X}) = \frac{\ker(\d)\cap \ker(\d\Lambda)}{\im(\d \Lambda \d)}$, we arrive at the following explicit characterization of the cohomology:
\[
H_{\d +\dL }^{k}(\breve{X}) \cong \left\langle \breve{e}_I \wedge \breve{e}_{J+n} \;\middle|\; 
\begin{aligned} 
	&I,J \subset \{1,\ldots, n\}, \, |I|+|J| = k \\
	&\sum_{i \in I \setminus \{1\} } a_{ii} = \sum_{j \in J \setminus\{1\}} a_{jj} 
\end{aligned} \right\rangle.
\]
The corresponding $(p,q)$-decomposition is then given by:
\[
H^{p,q}_{\d +\dL,B}(\breve{X}) \cong \left\langle \breve{e}_I \wedge \breve{e}_{J+n} \;\middle|\; 
\begin{aligned} 
	&I,J \subset \{1,\ldots, n\}, \, |I| = p, \, |J| = q \\
	&\sum_{i \in I \setminus \{1\} } a_{ii} = \sum_{j \in J \setminus\{1\}} a_{jj} 
\end{aligned} \right\rangle.
\]
Finally, by Theorem~\ref{mirror_Yau_cohomologies}, the Bott-Chern cohomology on the complex mirror side $X$ is:
\[
H^{p,q}_{\partial +\overline{\partial},B}(X) \cong \left\langle e_I \wedge \overline{e}_{J} \;\middle|\; 
\begin{aligned} 
	&I,J \subset \{1,\ldots, n\}, \, |I| = n - p, \, |J| = q \\
	&\sum_{i \in I \setminus \{1\} } a_{ii} = \sum_{j \in J \setminus\{1\}} a_{jj} 
\end{aligned} \right\rangle_{\CC}.
\]

\vspace{1em}

\begin{remark}
	It is possible to compute the cohomology on the complex side explicitly. This is the content of Theorem \ref{thm:complex-side}, whose proof is omitted:
	
	\begin{theo}\label{thm:complex-side}
		Let $G = \RR \ltimes_{e^{x_1 A}} \RR^{n-1}$ be an almost abelian Lie group, where $A$ is a diagonalizable matrix with $\mathrm{tr}(A)=0$. Let $\rho \colon G \to \mathrm{GL}(n,\RR)$ be such that $\rho(x_1, \ldots, x_n) = \mathrm{diag}(1,e^{x_1A})$. Assume the existence of a lattice $\Gamma \ltimes_\rho \Xi$ such that $(G \ltimes_\rho \RR^{n}) / (\Gamma \ltimes_\rho \Xi)$ is a solvmanifold equipped with the canonical left-invariant complex structure. Then
		\[
		H^{p,q}_{\partial + \overline{\partial}}(X) \cong \mathrm{span}_{\CC} \left\{ Y^{*}_{I} \wedge \overline{Y^{*}_{J}} \;\middle|\; 
		\begin{aligned} 
			&|I| = p, \, |J| = q \\ 
			&\sum_{k \in I\setminus \{1\}} \lambda_{k} = -\sum_{j \in J \setminus \{1\}} \lambda_j 
		\end{aligned} \right\}
		\]
		where $\lambda_{2}, \ldots, \lambda_{n}$ are the eigenvalues of $A$; $Y_{2}, \ldots, Y_{n}$ are the associated eigenvectors and $\d z_1 = Y^{*}_1$.
	\end{theo}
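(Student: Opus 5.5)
We compute the Bott--Chern cohomology at the level of left-invariant forms and then pass to the solvmanifold. Concretely, we work on the Lie algebra $\mathfrak{h}=\mathfrak{g}\ltimes_{\rho_*}\RR^n$ with the left-invariant complex structure whose $(1,0)$-forms are $e_j+\mathbf{i}e_{j+n}$. Since $A$ is diagonalizable, we change basis in $\RR^{n-1}$ to the eigenvectors $Y_2,\dots,Y_n$ of $A$. If some eigenvalues are complex, we work over $\CC$ with $Y_1$ the $\mathfrak{g}$-direction of $x_1$. Let $Y_k^*$ denote the corresponding complex $(1,0)$-coframe, normalized so that $Y_1^*=\d z_1$ and $Y_k^*$ is built from $e_k+\mathbf{i}e_{k+n}$ in the eigenbasis.

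The first step is to compute the structure equations. By Lemma \ref{lemmaliealgebra} with $\rho_*(X_1)=\mathrm{diag}(0,A)$ and $\rho_*(X_i)=0$ for $i\ge 2$, the only nontrivial differentials are of the form $\d e_k=-\lambda_k e_1\wedge e_k$ and $\d e_{k+n}=-\lambda_k e_1\wedge e_{k+n}$ (signs to be fixed carefully). Here $e_1=\frac12(Y_1^*+\overline{Y_1^*})$, so we expect
\[
\partial Y_k^*=-\tfrac{\lambda_k}{2}Y_1^*\wedge Y_k^*,\qquad \overline\partial Y_k^*=-\tfrac{\lambda_k}{2}\overline{Y_1^*}\wedge Y_k^*,
\]
with the analogous formulas for $\overline{Y_k^*}$. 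Consequently each monomial $Y_I^*\wedge\overline{Y_J^*}$ is an eigenvector-type object. Writing $\Lambda_{I,J}:=\sum_{k\in I\setminus\{1\}}\lambda_k+\sum_{j\in J\setminus\{1\}}\lambda_j$, we get $\partial(Y_I^*\wedge\overline{Y_J^*})=c\,\Lambda_{I,J}\,Y_1^*\wedge Y_I^*\wedge\overline{Y_J^*}$, which vanishes automatically if $1\in I$, and similarly for $\overline\partial$ with $\overline{Y_1^*}$.

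The second step is to compute $H^{p,q}_{\partial+\overline\partial}(\mathfrak h)$ monomial by monomial. The complex splits as a direct sum over the "weight" $\Lambda_{I,J}$ and over the four cases $1\in I$ or not, $1\in J$ or not. On weight-zero monomials both $\partial$ and $\overline\partial$ vanish, so these survive provided they are not $\partial\overline\partial$-exact. But $\partial\overline\partial$ multiplies by $\Lambda^2$ and preserves weight, so its image lies in nonzero weights. For nonzero weight, a monomial with $1\notin I$ or $1\notin J$ is not closed, and one with $1\in I\cap J$ equals $\partial\overline\partial$ (up to a nonzero constant $\Lambda^{-2}$) of the monomial obtained by removing the index $1$ from both $I$ and $J$. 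Linear combinations cause no extra trouble because everything is diagonal in the monomial basis. This gives exactly the span stated in the theorem, matching the condition $\sum_{I\setminus\{1\}}\lambda_k=-\sum_{J\setminus\{1\}}\lambda_j$.

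The final step is to pass from $\mathfrak h$ to $X$, which is the main obstacle. For this I would invoke the complex analogue of Theorem \ref{thm:computing-cohomology} due to Angella--Kasuya, namely that for solvmanifolds where the de Rham (equivalently Dolbeault) inclusion of Lie algebra cohomology is an isomorphism, so is the Bott--Chern one. Verifying the hypothesis requires a Mostow/Hattori-type condition, e.g.\ that $\mathrm{Ad}$ has real eigenvalues (completely solvable). This holds when $A$ has real spectrum; otherwise it requires a Kasuya-type modification using the lattice $\Gamma$, and I would restrict to that case or note the needed condition on $\Gamma$.
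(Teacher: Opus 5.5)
Your Lie-algebra computation (steps 1--2) is correct when $A$ has real spectrum. Set $w_{I,J}=\sum_{k\in I\setminus\{1\}}\lambda_k+\sum_{j\in J\setminus\{1\}}\lambda_j$. Each monomial then satisfies $\d(Y_I^*\wedge\overline{Y_J^*})=-w_{I,J}\,\d r_1\wedge Y_I^*\wedge\overline{Y_J^*}$. So $\partial$ kills it iff $1\in I$ or $w_{I,J}=0$, and $\overline\partial$ kills it iff $1\in J$ or $w_{I,J}=0$. For $w_{I,J}\neq 0$ and $1\in I\cap J$, it is a nonzero multiple of $\partial\overline\partial$ of the monomial with $1$ removed. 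This proves that the \emph{invariant} Bott--Chern cohomology $H^{p,q}_{\partial+\overline\partial}(\mathfrak h)$ is the stated span. Your $e_1$ should be the base direction $\d r_1=\mathrm{Im}\,Y_1^*$ rather than $\mathrm{Re}\,Y_1^*$, but that only changes constants. For non-real eigenvalues, $\overline{Y_j^*}$ has weight $\overline{\lambda_j}$, so restricting to real spectrum is needed already at this stage.

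The gap is the last step, and it cannot be closed as you propose. Theorem~\ref{thm:computing-cohomology} is the \emph{symplectic} statement, where a de Rham isomorphism suffices. Its complex counterpart (Angella) needs the inclusion of invariant forms to be an isomorphism on Dolbeault cohomology as well. That condition is neither equivalent to the de Rham isomorphism nor implied by complete solvability, and here it genuinely fails.

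Counterexample: take $n=3$, $A=\mathrm{diag}(\lambda,-\lambda)$ with $\lambda\neq 0$, and $\Gamma=t_0\ZZ\ltimes P^{-1}\ZZ^2$ as in Appendix~\ref{ap:lattice}. Take $\Xi=\tfrac{2\pi}{\lambda}\ZZ\oplus P^{-1}\ZZ^2$, which is $\rho(\Gamma)$-invariant because $\rho$ fixes the first coordinate. In the coordinates $z_j=\theta_j+\mathbf{i}r_j$, an element $(g,h)$ acts by $z_1\mapsto z_1+h_1+\mathbf{i}g_1$, $z_2\mapsto e^{\lambda g_1}z_2+h_2+\mathbf{i}g_2$, and $z_3\mapsto e^{-\lambda g_1}z_3+h_3+\mathbf{i}g_3$. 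Hence $e^{\mathbf{i}\lambda z_1}\d z_1\wedge \d z_2$ and $e^{-\mathbf{i}\lambda z_1}\d z_1\wedge\d z_3$ are multiplied by $e^{\pm\mathbf{i}\lambda h_1}=1$, so both descend to $X$. Both forms are holomorphic and $\d$-closed, and nothing is $\partial\overline\partial$-exact in bidegree $(2,0)$. Together with $\d z_2\wedge\d z_3=Y_2^*\wedge Y_3^*$ they give $\dim H^{2,0}_{\partial+\overline\partial}(X)\ge 3$. The claimed span in bidegree $(2,0)$ is one-dimensional.

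De Rham cohomology is still computed by $\mathfrak h$: for instance $e^{\mathbf{i}\lambda z_1}\d z_1\wedge\d z_2=\d\bigl(\tfrac{1}{\mathbf{i}\lambda}e^{\mathbf{i}\lambda z_1}\d z_2\bigr)$ on $X$. But $e^{\mathbf{i}\lambda z_1}\d z_2$ is an extra holomorphic $1$-form, i.e.\ an extra Dolbeault class. So the statement as written needs an additional hypothesis on the lattice.

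For a correct version, note that $G\ltimes_\rho\RR^n\cong\CC\ltimes_\psi\CC^{n-1}$ with $\psi(z_1)=e^{\mathrm{Im}(z_1)A}$, a solvmanifold of splitting type. Kasuya's technique, and its Bott--Chern version by Angella--Kasuya, factors $e^{\lambda_k\mathrm{Im}\,z_1}$ as the holomorphic character $e^{-\mathbf{i}\lambda_k z_1}$ times the unitary character $e^{\mathbf{i}\lambda_k\theta_1}$. The cohomology of $X$ is then computed by invariant monomials twisted by those unitary characters $e^{\mathbf{i}\mu\theta_1}$, with $\mu=\pm w_{I,J}$, that are trivial on the lattice. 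The extra forms above are exactly $e^{\pm\mathbf{i}\lambda\theta_1}Y_1^*\wedge Y_{2}^*$ and $Y_1^*\wedge Y_3^*$ twisted this way.

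Under a non-resonance hypothesis only invariant forms survive. One such hypothesis is $w_{I,J}\,c\notin 2\pi\ZZ$ for every nonzero weight, where $c\ZZ$ is the projection of $\Xi$ to the $\theta_1$-axis. Dolbeault cohomology, and hence by Angella's theorem Bott--Chern cohomology, is then computed by $\mathfrak h$, and your steps 1--2 finish the proof. Alternatively, if the theorem is read as a statement about invariant forms (like the basic cohomology computed just before it), steps 1--2 already suffice.
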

\end{remark}
	
	\subsection{Generalized Heisenberg group}
	
Let $\mathcal{H}_{2n+1}(\RR)$ be the generalized Heisenberg group. It admits a left-invariant affine structure given by the developing map:
\begin{align}
	D \colon \mathcal{H}_{2n+1}(\RR) & \longrightarrow \RR^{2n+1} \\
	\begin{bmatrix}
		1 & \overline{a} & c \\
		0 & I_{n} & \overline{b} \\
		0 & 0 & 1
	\end{bmatrix} & \longmapsto (\overline{a}, c , \overline{b}).
\end{align}
Furthermore, $\mathcal{H}_{2n+1}(\RR)$ contains a lattice $\Gamma$ such that the affine structure induced on the quotient $\mathcal{H}_{2n+1}(\RR)/\Gamma$ is integral. The linear part of the affine representation is explicitly given by:
\begin{align}
	\rho \colon \mathcal{H}_{2n+1}(\RR) & \longrightarrow \mathrm{GL}(\RR^{2n+1}) \\
	\begin{bmatrix}
		1 & \overline{a} & c \\
		0 & I_{n} & \overline{b} \\
		0 & 0 & 1
	\end{bmatrix} & \longmapsto \begin{bmatrix}
		I_n & 0 & 0 \\
		0 & 1 & \overline{a} \\
		0 & 0 & I_n
	\end{bmatrix}.
\end{align}

Let $\{X_1, \ldots, X_n , T , Y_1, \ldots , Y_n\}$ be a basis for the Lie algebra of $\mathcal{H}_{2n+1}(\RR)$. We then have:
\[
\rho_{*}(X_i) = \begin{bmatrix}
	0_{n} & 0 & 0 \\
	0 & 0 & E_i \\
	0 & 0 & 0_{n}
\end{bmatrix} = E_{n+1, n+1+i}
\]
where $E_i$ is a row vector of zeros with a $1$ in the $i$-th entry, and $E_{i,j}$ denotes a matrix with a $1$ in the $(i,j)$-th entry and zeros everywhere else. Additionally, $\rho_{*}(Y_i) = \rho_{*}(T) = 0$. 

Denoting by $\rho^{*}$ the dual representation of $\rho$, we find:
\[
\d_{e} \rho^{*}(X_i) = -(\d_e \rho(X_i))^{T}
\]
and $\d_{e} \rho^{*}(Y_i) = \d_e \rho^{*}(T) = 0$ for all $1\le i \le n$. By Lemma~\ref{lemmaliealgebra}, the structure constants of $G \ltimes_{\rho^{*}}\RR^{n}$ are given by:
\[
c^{p}_{qr} =
\begin{cases}
	C^{p}_{qr}, & 1\le p,q,r\le 2n+1, \\[4pt]
	-(\rho_{\ast}(S_q))_{r-n,\, p-n}, & 1\le q\le 2n+1,\; 2n+2\le p,r\le 4n+2, \\[4pt]
	(\rho_{\ast}(S_r))_{q-n, \, p-n}, & 2n+2\le q\le 4n+2,\; 1\le r\le 2n+1,\; 2n+2\le p\le 4n+2, \\[4pt]
	0, & \text{otherwise},
\end{cases}
\]
where $\{S_1,\ldots, S_{2n+1}\}$ is a basis for $\mathfrak{h}$, defined as $S_{i} = X_i$ for $1 \le i \le n$, $S_{n+1} = T$, and $S_{n+1+i} = Y_i$ for $1 \le i \le n$. Because $\mathrm{ad}_{X}$ has real eigenvalues for all $X$, the semi-direct product $G \ltimes_{\rho^{*}} \RR^{n}$ is completely solvable.

Let $(X,\mathsf\Omega ,\omega) \to B$ and $(\breve{X}, \breve{\omega},\breve{\mathsf\Omega}) \to B$ be a supersymmetric mirror pair of type IIB and IIA, respectively, as provided in Theorem~\ref{mirrorsolvmanifolds}. By Theorem~\ref{thm:dedulvan}, $G \ltimes_{\rho^{*}} \RR^{n}$ acts transitively on $\breve{X}$ with stabilizer $\Gamma \ltimes_{\rho^{*}} \Xi$. Note that, by the definition of $\rho^{*}$, the quotient $(G \ltimes_{\rho^{*}} \RR^{n}) / (\Gamma \ltimes_{\rho^{*}} \Xi)$ is a solvmanifold.

Our goal is to compute $H_{\d + \dL}(\mathfrak{g} \ltimes_{\d _e \rho^{*}} \RR^{n}; \RR)$. Let $\{\breve{e}_1, \ldots , \breve{e}_{4n+2}\}$ be a basis of left-invariant $1$-forms. The structure constants of $\mathcal{H}_{2n+1}(\RR)$ dictate:
\[
C^{i}_{j,k} = \begin{cases}
	1, & i = n+1, \, 1\le j \le n ,\, k = j + n+1 , \\
	-1, & i = n+1, \, 1 \le k \le n, \, j = k+n+1, \\
	0, & \text{otherwise}.
\end{cases}
\]
Thus, $\d \breve{e}_i = \d \breve{e}_{i+n+1} = 0$ for all $1 \le i \le n$, and:
\[
\d \breve{e}_{n+1} = - \sum_{i=1}^{n} \breve{e}_i \wedge \breve{e}_{i+n+1}.
\]
Recall that:
\[
\d \breve{e}_{i+2n+1} = \sum_{j,k=1}^{2n+1} (\rho_{\ast}(S_j))_{k,i}\, \breve{e}_j\wedge \breve{e}_{k+2n+1} \qquad (1\le i\le 2n+1).
\]
This simplifies to:
\[
\d \breve{e}_{i+2n+1} = \sum^{n}_{j=1}\sum_{k=1}^{2n+1}(\rho_{\ast}(X_j))_{k,i}\, \breve{e}_j\wedge \breve{e}_{k+2n+1} \qquad (1\le i\le 2n+1).
\]
Noting that $(\rho_{*}(X_j))_{k,i} = \delta_{k,n+1}\delta_{i,j+n+1}$, we obtain:
\[
\d \breve{e}_{i+2n+1} = \sum^{n}_{j=1}\sum_{k=1}^{2n+1} \delta_{k,n+1}\delta_{i,j+n+1} \, \breve{e}_j\wedge \breve{e}_{k+2n+1} = \sum^{n}_{j=1} \delta_{i,j+n+1} \, \breve{e}_{j} \wedge \breve{e}_{3n+2} \qquad (1\le i\le 2n+1).
\]
Consequently, we have:
\begin{align}
	\d \breve{e}_{i+2n+1} &= 0, &(1\le i \le n+1), \\
	\d \breve{e}_{i+3n+2} &= \breve{e}_{i} \wedge \breve{e}_{3n+2}, &(1\le i \le n).
\end{align}

Let $\breve{\omega} = \sum_{i=1}^{2n+1}\breve{e}_{i} \wedge \breve{e}_{i+2n+1}$. Consider index sets $I = (i_1 < \ldots< i_m)$ and $J = (j_1 < \ldots < j_{r})$ with $I,J \subset \{1, \ldots , 2n+1\}$. Suppose first that $n+1 \notin I$. Then:
\[
\d(\breve{e}_{I} \wedge \breve{e}_{J+2n+1}) = (-1)^{|I|} \breve{e}_I \wedge \d \breve{e}_{J+2n+1}.
\]
Denoting $W = \{1,\ldots, n+1\}$, we can expand this as:
\begin{align}
	\d(\breve{e}_{I} \wedge \breve{e}_{J+2n+1}) 
	&= (-1)^{|I|+|J \cap W|} \breve{e}_{I} \wedge \breve{e}_{(J\cap W) +2n+1} \wedge \d \breve{e}_{(J \setminus W) +2n+1} \\
	&= \sum_{j \in J \setminus W} (-1)^{|I|+|J \cap W|} \breve{e}_{I} \wedge \breve{e}_{(J\cap W) +2n+1} \wedge \breve{e}_j \wedge \breve{e}_{3n+2} \wedge \breve{e}_{(J\setminus (W\cup \{j\})) + 2n+1} \\
	&= \sum_{j \in J \setminus W}  \breve{e}_j\wedge\breve{e}_{I} \wedge \breve{e}_{(J\cap W) +2n+1} \wedge \breve{e}_{3n+2} \wedge \breve{e}_{(J\setminus (W\cup \{j\})) + 2n+1} \\
	&= \sum_{j \in J \setminus W \cap I^{c}}  \breve{e}_j\wedge\breve{e}_{I} \wedge \breve{e}_{(J\cap W) +2n+1} \wedge \breve{e}_{3n+2} \wedge \breve{e}_{(J\setminus (W\cup \{j\})) + 2n+1} \\
	&= \sum_{j \in J \cap I^{c} \cap W^{c} }\breve{e}_j\wedge\breve{e}_{I} \wedge \breve{e}_{(J\cap W) +2n+1} \wedge \breve{e}_{3n+2} \wedge \breve{e}_{(J\setminus (W\cup \{j\})) + 2n+1}.
\end{align}

Suppose instead that $n+1 \in I$. We can decompose $I = I_1 \cup \{n+1\} \cup (I_2 + n+1)$, where $I_1, I_2 \subset \{1,\ldots , n\}$. This yields:
\begin{align}
	\d(\breve{e}_{I} \wedge \breve{e}_{J+2n+1}) 
	&= \sum_{i=1}^{n}  \breve{e}_{i}\wedge \breve{e}_{I_1} \wedge \breve{e}_{i+n+1}\wedge \breve{e}_{I_2+n+1}\wedge \breve{e}_{J+2n+1} \\
	&\quad + \sum_{j \in J \setminus W}  \breve{e}_j\wedge\breve{e}_{I} \wedge \breve{e}_{(J\cap W) +2n+1} \wedge \breve{e}_{3n+2} \wedge \breve{e}_{(J\setminus (W\cup \{j\})) + 2n+1} \\
	&= \sum_{i \in (I_1 \cup I_2)^{c}} \breve{e}_i \wedge \breve{e}_{I_1} \wedge \breve{e}_{i+n+1} \wedge \breve{e}_{I_2 +n+1} \wedge \breve{e}_{J+2n+1} \\
	&\quad + \sum_{j \in J \cap I^{c} \cap W^{c}}  \breve{e}_j\wedge\breve{e}_{I} \wedge \breve{e}_{(J\cap W) +2n+1} \wedge \breve{e}_{3n+2} \wedge \breve{e}_{(J\setminus (W\cup \{j\})) + 2n+1}.
\end{align}

Applying the dual Lefschetz operator $\Lambda$ and taking the exterior derivative yields:
\[
\d \Lambda(\breve{e}_{I} \wedge \breve{e}_{J+2n+1}) = \sum_{\sigma \in I \cap J} \d \left( \breve{e}_{I\setminus \{\sigma\}} \wedge \breve{e}_{(J\setminus \{\sigma\}) +2n+1} \right).
\]
If $n+1 \notin I$, this expands to:
\begin{align}
	\d \Lambda(\breve{e}_{I} \wedge \breve{e}_{J+2n+1})  
	&=  \sum_{\sigma \in I \cap J} \sum_{j \in (J\cap I^{c}\cap W^{c}) \setminus \{\sigma\}} \breve{e}_j \wedge \breve{e}_{I\setminus \{\sigma\}} \wedge \breve{e}_{(J\cap W) \setminus\{\sigma\} +2n+1} \wedge \breve{e}_{3n+2} \wedge \breve{e}_{J\setminus (W\cup \{j, \sigma\}) + 2n+1} \\
	&= \sum_{\sigma \in I \cap J} \sum_{j \in J\cap I^{c}\cap W^{c} }  \breve{e}_j\wedge\breve{e}_{I\setminus \{\sigma\}} \wedge \breve{e}_{(J\cap W) \setminus\{\sigma\} +2n+1} \wedge \breve{e}_{3n+2} \wedge \breve{e}_{J\setminus (W\cup \{j,\sigma\}) + 2n+1}.
\end{align}
If $n+1 \in I$:
\begin{align}
	\d \Lambda(\breve{e}_{I} \wedge \breve{e}_{J+2n+1}) 
	&= \sum_{\sigma \in I \cap J}\sum_{i \in (I_1 \cup I_2 \cup \{\sigma\})^{c}} \breve{e}_i \wedge \breve{e}_{I_1\setminus \{\sigma\}} \wedge \breve{e}_{i+n+1} \wedge \breve{e}_{(I_2\setminus\{\sigma\}) +n+1} \wedge \breve{e}_{(J\setminus\{\sigma\})+2n+1} \\
	&\quad + \sum_{\sigma \in I \cap J}\sum_{j \in J \cap I^{c} \cap W^{c}}  \breve{e}_j\wedge\breve{e}_{I\setminus\{\sigma\}} \wedge \breve{e}_{(J\cap W) \setminus \{\sigma\} +2n+1} \wedge \breve{e}_{3n+2} \wedge \breve{e}_{J\setminus (W\cup \{j,\sigma\}) + 2n+1}.
\end{align}

Furthermore, if $n+1 \notin I$, the composition $\d\Lambda\d$ vanishes:
\begin{align}
	\d \Lambda \d (\breve{e}_I \wedge \breve{e}_{J+2n+1})  
	&= \d \Lambda \left(  \sum_{j \in J \cap I^{c} \cap W^{c}}  \breve{e}_j\wedge\breve{e}_{I} \wedge \breve{e}_{(J\cap W) +2n+1} \wedge \breve{e}_{3n+2} \wedge \breve{e}_{(J\setminus (W\cup \{j\})) + 2n+1} \right)  \\
	&= \d \left(  \sum_{j \in J \cap I^{c} \cap W^{c}} \sum_{\sigma \in I \cap J}   \breve{e}_j\wedge\breve{e}_{I} \wedge \breve{e}_{(J\cap W)\setminus\{\sigma\} +2n+1} \wedge \breve{e}_{3n+2} \wedge \breve{e}_{(J\setminus (W\cup \{j, \sigma\})) + 2n+1} \right) \\
	&= \sum_{j \in J \cap I^{c} \cap W^{c}} \sum_{\sigma \in I \cap J}  \sum_{k \in J\cap W^{c} \setminus \{j\} \cap (I^{c} \cup \{\sigma \})  }  \breve{e}_{k}\wedge \breve{e}_j\wedge\breve{e}_{I \setminus \{\sigma\}} \wedge \breve{e}_{(J\cap W)\setminus\{\sigma\} +2n+1} \\
	&\quad \quad \wedge \breve{e}_{3n+2}\wedge \breve{e}_{3n+2} \wedge \breve{e}_{(J\setminus (W\cup \{j, \sigma,k\})) + 2n+1} \\
	&= 0.
\end{align}

Finally, if $n+1 \in I$ and $n+1 \notin J$, we expand the $\d\Lambda\d$ operation into three distinct parts:
\begin{align}
	\d \Lambda \d &(\breve{e}_I \wedge \breve{e}_{J+2n+1}) \\
	&= \d \Lambda \left(   \sum_{i \in (I_1 \cup I_2)^{c}} \breve{e}_i \wedge \breve{e}_{I_1} \wedge \breve{e}_{i+n+1} \wedge \breve{e}_{I_2 +n+1} \wedge \breve{e}_{J+2n+1} \right) \\
	&\quad + \d \Lambda \Bigg( \sum_{j \in J \cap I^{c} \cap W^{c}}  \breve{e}_j \wedge \breve{e}_{I} \wedge \breve{e}_{(J\cap W) +2n+1} \wedge \breve{e}_{3n+2} \wedge \breve{e}_{(J\setminus (W\cup \{j\})) + 2n+1} \Bigg) \\
	&=  \sum_{i \in (I_1 \cup I_2)^{c}} \sum_{\sigma \in (I\cup \{i,i+n+1\})\cap J } \d \Bigg( \breve{e}_{(I_1 \cup I_2 +n+1 \cup \{i,i+n+1\}) \setminus\{\sigma\}} \wedge \breve{e}_{(J\setminus \{\sigma\}) +2n+1} \Bigg)  \\
	&\quad +   \sum_{j \in J \cap I^{c} \cap W^{c}} \sum_{\sigma \in I \cap ((J \cup \{n+1\})\setminus\{j\})} \d \Bigg( \breve{e}_{j} \wedge\breve{e}_{I \setminus\{\sigma\}} \wedge \breve{e}_{(J \cup \{n+1\})\setminus \{\sigma,j\} +2n+1} \Bigg)\\
	&= \sum_{i \in (I_1 \cup I_2)^{c}} \sum_{\sigma \in (I\cup \{i,i+n+1\})\cap J }\sum_{j \in J \setminus (W \cup \{\sigma\})} \breve{e}_j \\
	&\qquad \wedge \breve{e}_{(I_1 \cup I_2 +n+1 \cup \{i,i+n+1\}) \setminus\{\sigma\}} \wedge \breve{e}_{(J\cup \{n+1\}) \setminus\{\sigma,j\} + 2n+1}  \\
	&\quad + \sum_{j \in J \cap I^{c} \cap W^{c}} \sum_{\sigma \in I \cap ((J \cup \{n+1\})\setminus\{j\})} \left( \d \breve{e}_{j} \wedge\breve{e}_{I \setminus\{\sigma\}} \right) \wedge \breve{e}_{(J \cup \{n+1\})\setminus \{\sigma,j\}+2n+1}  \\
	&\quad + (-1)^{|I|} \sum_{j \in J \cap I^{c} \cap W^{c}} \sum_{\sigma \in I \cap ((J \cup \{n+1\})\setminus\{j\})} \breve{e}_{j} \wedge \breve{e}_{I\setminus\{\sigma\}} \\
	&\qquad \wedge \d \left( \breve{e}_{(J \cup \{n+1\})\setminus \{\sigma,j\}+2n+1} \right).
\end{align}

We can simplify the final two summations:
\begin{align}
	&\sum_{j \in J \cap I^{c} \cap W^{c}} \sum_{\sigma \in I \cap ((J \cup \{n+1\})\setminus\{j\})} \left( \d \breve{e}_{j} \wedge\breve{e}_{I \setminus\{\sigma\}} \right) \wedge \breve{e}_{(J \cup \{n+1\})\setminus \{\sigma,j\}+2n+1}  \\
	&\quad +(-1)^{|I|}\sum_{j \in J \cap I^{c} \cap W^{c}} \sum_{\sigma \in I \cap ((J \cup \{n+1\})\setminus\{j\})} \breve{e}_{j} \wedge \breve{e}_{I\setminus\{\sigma\}} \\
	&\qquad \wedge \d \left( \breve{e}_{(J \cup \{n+1\})\setminus \{\sigma,j\}+2n+1} \right) \\
	&=  \sum_{j \in J \cap I^{c} \cap W^{c}} \sum_{\sigma \in I \cap (J \setminus\{j\})} \sum_{i=1}^{n}  \breve{e}_{(I_1 \cup I_2+n+1\cup\{j,i,i+n+1\}) \setminus\{\sigma\}} \\
	&\qquad \wedge \breve{e}_{(J \cup \{n+1\})\setminus \{\sigma,j\}+2n+1}     \\
	&\quad +(-1)^{|I|}\sum_{j \in J \cap I^{c} \cap W^{c}} \breve{e}_{j} \wedge \breve{e}_{I\setminus\{n+1\}} \wedge \d \left( \breve{e}_{(J\setminus \{j\})+2n+1} \right) \\
	&= \sum_{j \in J \cap I^{c} \cap W^{c}} \sum_{\sigma \in I \cap (J \setminus\{j\})} \sum_{i=1}^{n}  \breve{e}_{(I_1\cup I_2+n+1\cup\{j,i,i+n+1\}) \setminus\{\sigma\}} \\
	&\qquad \wedge \breve{e}_{(J \cup \{n+1\})\setminus \{\sigma,j\}+2n+1}     \\
	&\quad +(-1)^{|I|}\sum_{j \in J \cap I^{c} \cap W^{c}} \sum_{k \in J\setminus(W\cup\{j\})} \breve{e}_{k} \wedge \breve{e}_{j} \wedge \breve{e}_{I\setminus\{n+1\}} \wedge \breve{e}_{(J \cup \{n+1\})\setminus \{j\}+2n+1}.
\end{align}

Substituting this back, we arrive at the closed-form expression for $n+1 \in I$ and $n+1 \notin J$:
\begin{align}
	\d \Lambda \d &(\breve{e}_I \wedge \breve{e}_{J+2n+1}) \\
	&= \sum_{i \in (I_1 \cup I_2)^{c}} \sum_{\sigma \in (I\cup \{i,i+n+1\})\cap J } \sum_{j \in J \setminus (W \cup \{\sigma\})} \breve{e}_j \\
	&\qquad \wedge \breve{e}_{(I_1 \cup I_2+n+1 \cup \{i,i+n+1\}) \setminus\{\sigma\}} \wedge \breve{e}_{(J\cup \{n+1\}) \setminus\{\sigma,j\} + 2n+1} \\
	&\quad + \sum_{j \in J \cap I^{c} \cap W^{c}} \sum_{\sigma \in I \cap (J \setminus\{j\})} \sum_{i \in (I_1 \cup I_2 \cup\{j-(n+1)\} \setminus \{\sigma\})^{c}} \\
	&\qquad \breve{e}_{(I_1 \cup I_2+n+1\cup\{j,i,i+n+1\}) \setminus\{\sigma\}} \wedge \breve{e}_{(J \cup \{n+1\})\setminus \{\sigma,j\}+2n+1}.
\end{align}

	\


	\bibliographystyle{amsplain} 
	
	\bibliography{main}

	\appendix
	
	\section{Existence of lattice}
	\label{ap:lattice}
	
In Example \ref{almostabelianmirror}, we assumed the existence of a lattice $\Gamma$ with the desired properties. It turns out that determining when such lattices exist in this context (and in broader contexts) is an active area of research. In dimension three, lattices related to affine structures are called affine crystallographic groups; a comprehensive discussion on this topic can be found in \cite{Fried_1980}. For nilpotent and solvable manifolds, the existence of such lattices was studied in \cite{Bock_2010} and \cite{Andrada_2016}.

\vspace{1em}

Let $G$ be a simply connected Lie group. A complete left-invariant affine structure on $G$ is equivalent to an affine representation $\alpha \colon G \to \operatorname{Aff}(\RR^n)$. Let $\rho \colon G \to \mathrm{GL}(n,\RR)$ be the linear part of $\alpha$, given by:
\[
\rho(x_1, x_2, \ldots, x_n) = 
\begin{bmatrix}
	1  & 0 \\
	0  & e^{x_1 A}
\end{bmatrix}.
\] 

In the almost abelian case, Bock~\cite{Bock_2010} proved the following theorem (see also \cite[Proposition 2.3]{Andrada_2016}).

\begin{theo_with_name}{Bock~\cite{Bock_2010}, Andrada--Origlia~\cite{Andrada_2016}}
	Let $G = \RR \ltimes_{\phi} \RR^{2n+1}$ be an almost abelian Lie group. Then $G$ admits a lattice if and only if there exists a $t_0 \neq 0$ such that $\phi(t_0)$ is conjugate to an integer matrix. The lattice is given by $\Gamma = t_0 \ZZ \ltimes_\phi P^{-1} \ZZ^{2n+1}$, where $P \phi(t_0)P^{-1}$ is an integer matrix. 
\end{theo_with_name}

\vspace{1em}

Using this theorem, we are able to prove the following result, which asserts the existence of a lattice such that the left-invariant complete affine structure induced on the quotient $G/\Gamma$ is integral.

\begin{theo}\label{thm:existence-lattice}
    Let $G = \RR \ltimes_{e^{x_1 A}} \RR^{2n+1}$ be an almost abelian Lie group. Assume there exists a real number $t_0 \in \RR \setminus \{0\}$ and a matrix $P \in \mathrm{GL}(2n+1,\RR)$ such that $e^{t_0 A}$ is conjugate to an integer matrix via $P$. 
    
    Consider a complete left-invariant affine structure on $G$ given by the affine representation $\alpha \colon G \to \operatorname{Aff}(\RR^{2n+2})$, whose linear part $\rho \colon G \to \mathrm{GL}(2n+2, \RR)$ is given by:
	\[
	\rho(x_1, \ldots, x_{2n+2}) = \begin{bmatrix}
		1  & 0 \\
		0 &  e^{x_1 A}
	\end{bmatrix}.
	\]
	Then, there is a lattice $\Gamma = t_0 \ZZ \ltimes_{e^{x_1 A}} P^{-1}\ZZ^{2n+1}$ such that the affine structure induced on the solvmanifold $G/\Gamma$ is integral.
\end{theo}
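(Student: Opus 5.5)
Our approach has two parts. First we show that $\Gamma$ is a lattice in $G$, which is supplied by the theorem of Bock and Andrada--Origlia. Then we compute the holonomy of the induced affine structure on $G/\Gamma$ and show that it can be conjugated into $\mathrm{GL}(2n+2,\ZZ)\ltimes\RR^{2n+2}$.

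\textbf{Step 1: $\Gamma$ is a lattice.} We apply the Bock/Andrada--Origlia theorem with $\phi(x_1)=e^{x_1A}$. The hypothesis that $Pe^{t_0A}P^{-1}$ is an integer matrix is exactly what that theorem requires. It gives that $\Gamma=t_0\ZZ\ltimes_{e^{x_1A}}P^{-1}\ZZ^{2n+1}$ is a (cocompact, discrete) subgroup. In particular, $G/\Gamma$ is a compact solvmanifold with $\pi_1(G/\Gamma)\cong\Gamma$.

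\textbf{Step 2: holonomy on $\Gamma$.} As recalled in Section~\ref{SYZ solvmanifolds}, integrality is equivalent to $\rho(\Gamma)$ being conjugate in $\mathrm{GL}(2n+2,\RR)$ to a subgroup of $\mathrm{GL}(2n+2,\ZZ)$. An element of $\Gamma$ has the form $(mt_0,P^{-1}v)$ with $m\in\ZZ$ and $v\in\ZZ^{2n+1}$. Since $\rho$ depends only on the first coordinate,
\[
\rho(mt_0,P^{-1}v)=\begin{bmatrix}1&0\\0&(e^{t_0A})^m\end{bmatrix}.
\]
Set $Q=\mathrm{diag}(1,P)\in\mathrm{GL}(2n+2,\RR)$. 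Then
\[
Q\rho(\gamma)Q^{-1}=\mathrm{diag}\bigl(1,(Pe^{t_0A}P^{-1})^m\bigr).
\]
This matrix has integer entries for every $m\in\ZZ$, the case $m<0$ included. Indeed, $\det(Pe^{t_0A}P^{-1})=e^{t_0\mathrm{tr}A}$ is a nonzero integer, and the integrality of the inverse follows once this determinant equals $\pm1$.

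\textbf{Main obstacle: the determinant.} The key point is therefore to show $\det(e^{t_0A})=\pm1$, i.e.\ $\mathrm{tr}(A)=0$. We argue through unimodularity. Because $G$ admits a lattice (Step 1), $G$ is unimodular. For the almost abelian group $\RR\ltimes_{e^{x_1A}}\RR^{2n+1}$ we have $\mathrm{tr}(\mathrm{ad}_{X_1})=\mathrm{tr}(A)$, so unimodularity forces $\mathrm{tr}(A)=0$. Alternatively, this is the vanishing of $\det\rho$ established in the proof of Theorem~\ref{Fouriermukaiomega}. Hence $\det(e^{t_0A})=1$, so $Pe^{t_0A}P^{-1}\in\mathrm{SL}(2n+1,\ZZ)$ and all of its integer powers are integer matrices. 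Consequently $Q\rho(\Gamma)Q^{-1}\subset\mathrm{GL}(2n+2,\ZZ)$.

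\textbf{Translational parts.} It remains to note that the translational parts of the holonomy are unconstrained by the definition of integrality, which only restricts the linear parts to $\mathrm{GL}(n,\ZZ)$, with translations allowed in $\RR^n$. Conjugating the developing map by the linear change of coordinates $Q$ therefore yields an atlas on $G/\Gamma$ with transition maps in $\mathrm{GL}(2n+2,\ZZ)\ltimes\RR^{2n+2}$. This is the required integral affine structure.
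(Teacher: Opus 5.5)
Your argument follows the paper's proof step for step. The Bock/Andrada--Origlia theorem supplies $\Gamma$, and conjugating $\rho|_{\Gamma}$ by $\mathrm{diag}(1,P)$ gives $\mathrm{diag}(1,M^k)$ with $M=Pe^{t_0A}P^{-1}$. You also catch something the paper's proof passes over. Its claim that ``its $k$-th power is also an integer matrix'' is automatic only for $k\ge 0$; for $k<0$ one needs $\det M=\pm1$. The same condition is needed before Step 1 even makes sense. Since $(kt_0,P^{-1}u)^{-1}=(-kt_0,-P^{-1}M^{-k}u)$, the set $\Gamma$ is closed under inversion only if $M^{-1}$ is integral.

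The flaw is in how you obtain $\det M=1$. You get $\mathrm{tr}(A)=0$ from unimodularity, and unimodularity from the lattice of Step 1. But the Bock/Andrada--Origlia criterion is valid only for \emph{unimodular} almost abelian groups, and that is how those authors state it. The version quoted in the appendix omits this hypothesis, and without it the criterion fails. Take $A=\mathrm{diag}(\log 2,0,\dots,0)$, $t_0=1$, $P=I$. Then $e^{A}=\mathrm{diag}(2,1,\dots,1)$ is an integer matrix, yet $\mathrm{tr}(\mathrm{ad}_{X_1})=\log 2\neq 0$, so $G$ has no lattice. Indeed $(1,0)^{-1}\cdot(0,e_1)=(-1,\tfrac12 e_1)\notin\ZZ\ltimes\ZZ^{2n+1}$. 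So you are extracting unimodularity from a theorem that requires it as input.

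Your alternative via $\det\rho\equiv 1$ in the proof of Theorem~\ref{Fouriermukaiomega} is circular in the same way, because that argument also derives $\mathrm{tr}(\mathrm{ad}_X)=0$ from the existence of a lattice. Completeness alone does not help, since here $\operatorname{dev}=\mathrm{id}$ is complete for every $A$.

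The statement as written is false without $\mathrm{tr}(A)=0$. That condition is the standing assumption of Example~\ref{almostabelianmirror} and of the almost abelian computation in the last section, where this appendix is used. Once you impose it, the repair is immediate. We get $\det M=e^{t_0\,\mathrm{tr}(A)}=1$, so $M\in\mathrm{SL}(2n+1,\ZZ)$, the unimodular criterion applies, and every $M^k$ with $k\in\ZZ$ is integral. This completes both your argument and the paper's.
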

\begin{proof}
	Since $e^{t_0 A}$ is conjugate to an integer matrix via $P$, the preceding theorem guarantees the existence of the lattice $\Gamma = t_0 \ZZ \ltimes_{e^{x_1 A}} P^{-1} \ZZ^{2n+1}$.
	
	It remains to prove that the left-invariant affine structure induced on $G/\Gamma$ is integral. Since $\pi_1 (G/\Gamma) \cong \Gamma$, the holonomy of the affine structure is given by the restriction $\restr{\alpha}{\Gamma} \colon \Gamma \to \operatorname{Aff}(\RR^{2n+2})$, whose linear part is $\restr{\rho}{\Gamma}$. To prove that the induced affine structure is integral, we need to show that the image of $\restr{\rho}{\Gamma}$ is conjugate to a subgroup of $\mathrm{GL}(2n+2,\ZZ)$.
	
	Let $(x_1, x_2, \ldots, x_{2n+2}) \in \Gamma$. By the definition of the lattice, the first coordinate must be an integer multiple of $t_0$, meaning $x_1 = k t_0$ for some $k \in \ZZ$. The linear part evaluated on this element is $\rho(k t_0, \ldots) = \mathrm{diag}(1, e^{k t_0 A})$. 
	
	We conjugate this representation using the block diagonal matrix $\Xi = \begin{bmatrix} 1 & 0 \\ 0 & P \end{bmatrix}$:
	\[
	\Xi \, \rho(x_1, \ldots, x_{2n+2}) \, \Xi^{-1} = 
	\begin{bmatrix}
		1 & 0 \\
		0 & P
	\end{bmatrix}
	\begin{bmatrix}
		1 & 0 \\
		0 & e^{k t_0 A}
	\end{bmatrix}
	\begin{bmatrix}
		1 & 0 \\
		0 & P^{-1}
	\end{bmatrix}
	= 
	\begin{bmatrix}
		1 & 0 \\
		0 & P e^{k t_0 A} P^{-1}
	\end{bmatrix}
	= 
	\begin{bmatrix}
		1 & 0 \\
		0 & (P e^{t_0 A} P^{-1})^k
	\end{bmatrix}.
	\]
	Since $P e^{t_0 A} P^{-1}$ is an integer matrix by hypothesis, its $k$-th power is also an integer matrix. Thus, the linear holonomy $\restr{\rho}{\Gamma}$ is conjugate to a matrix in $\mathrm{GL}(2n+2,\ZZ)$. Therefore, the left-invariant affine structure induced by $\alpha$ on $G/\Gamma$ is complete and integral.
\end{proof}

\end{document}